\begin{document}

\title{Online Selfish Load Balancing}
\author{
Wenqian Wang\inst{1}
\and
Chenyang Xu\inst{2}
\and
Yuhao Zhang\inst{1}
}

\institute{
School of Computer Science, Shanghai Jiao Tong University, China
\and
School of Computer Science and Technology, East China Normal University, China
}
\maketitle
\begin{abstract}

In selfish load balancing, there is a set of machines and jobs to be scheduled, where each machine is owned by a selfish agent. Agents hold the processing times of jobs as private information and may strategically misreport them to maximize their utilities. The goal is to design a truthful mechanism that minimizes the makespan. This selfish-machine model was first proposed by Nisan and Ronen (STOC~1999), who presented an $m$-approximation algorithm for unrelated machines in the offline scenario, which was later shown to be tight by Christodoulou, Koutsoupias, and Kovács (STOC~2023). The study of offline selfish load balancing on related machines was initiated by Archer and Tardos (FOCS~2001). The best-known results for this problem are two PTAS mechanisms, due to Christodoulou and Kovács (SICOMP~2013) and Epstein, Levin and Stee (MOR~2016).

However, there is little literature on selfish scheduling in the online scenario, which is precisely what arises in real-world applications (e.g., in cloud platforms, jobs often arrive online). In this paper, we aim to address this gap. For unrelated machines, we observe that the existing $m$-approximation algorithm can also be implemented in an online scenario, implying that $m$ remains the best possible competitive ratio.  
For related machines, we design the first nontrivial online mechanism that is truthful in expectation and achieves a competitive ratio of~$O(\log m)$. Moreover, we extend our mechanism to also guarantee job-side truthfulness (in expectation), ensuring that jobs arriving online report their true sizes. This notion was first studied by Feldman, Fiat, and Roytman (EC~2017), but without combining it with the classic machine-side truthfulness. Finally, we generalize our two-sided truthful mechanism to the $\ell_q$-norm variant of load balancing, achieving a competitive ratio of~$\tilde{O}\left(m^{\frac{1}{q}(1-\frac{1}{q})}\right)$.

\end{abstract}



\section{Introduction}



Selfish load balancing is one of the most classic problems in algorithmic mechanism design. In this model, there is a set of machines and a set of jobs to be scheduled on them. Each machine is owned by a selfish agent, who knows her own processing time for each job (as private information) and seeks to maximize her utility, defined as the payment minus the completion time. The goal is to design a truthful mechanism that minimizes the makespan.
The model was first introduced by~\citet{DBLP:conf/stoc/NisanR99}, who provided an $m$-approximation mechanism and conjectured that no truthful mechanism could achieve a better approximation ratio than $m$. This conjecture inspired a large body of follow-up work~\cite{DBLP:journals/algorithmica/ChristodoulouKV09,DBLP:journals/algorithmica/KoutsoupiasV13,DBLP:conf/focs/0001KK21}, and it was finally confirmed more than two decades later by~\citet{DBLP:conf/stoc/0001KK23}. 

Shortly after the original 1999 model was proposed, it was extended to the \emph{related machines} setting~\cite{DBLP:conf/focs/ArcherT01}, where each machine has her speed as private information. The authors presented a $3$-approximation randomized algorithm that is truthful (in expectation) on the machine side. Many subsequent studies followed~\cite{archer2004mechanisms,DBLP:journals/siamcomp/DhangwatnotaiDDR11,DBLP:conf/stacs/AndelmanAS05,DBLP:conf/esa/Kovacs05,DBLP:journals/jda/Kovacs09}, culminating in two truthful deterministic PTAS mechanisms proposed by~\citet{DBLP:journals/siamcomp/0001K13} and~\citet{DBLP:journals/mor/EpsteinLS16}.

Selfish load balancing captures many real-world scenarios, such as cloud computing, crowdsourcing platforms, and large-scale manufacturing systems, where machines are owned by strategic agents and jobs need to be assigned efficiently. However, in the aforementioned scheduling models, all jobs are assumed to be known in advance, whereas in reality, such scenarios often exhibit an \emph{online} nature, with jobs arriving sequentially. For example, in cloud computing, servers must report their configurations to the platform in advance, even though the jobs to be executed have not yet arrived. In other words, agents must make decisions without knowing which jobs will arrive in the future. 

Despite its practical relevance, there has been limited research on the online selfish setting. For unrelated machines, we observe that the $m$-approximation mechanism proposed by~\citet{DBLP:conf/stoc/NisanR99} still applies in the online case\footnote{
~\cref{sec:unrelated} 
for more details.}, as each job’s assignment in this mechanism is independent of the others. Therefore, the machines’ strategies do not depend on knowledge of future jobs: upon the arrival of each job, each machine simply reports its corresponding processing time truthfully.
For related machines, however, the previously developed constant-approximation mechanisms no longer apply in the online setting, since they critically rely on global information about all jobs to achieve both truthfulness and a good approximation ratio. \citet{DBLP:journals/tcs/AulettaPPP09} considered the special case of $m=2$ and proposed a mechanism achieving an $(1+\sqrt{7}/2)$-approximation. For the general case of arbitrary~$m$, to the best of our knowledge, the competitive ratio of~$m$ still remains the best known upper bound.  
Thus, the following open question naturally arises:

\vspace{2mm}
\begin{minipage}{0.9\linewidth}
\begin{center}
   \emph{Does there exist a truthful online mechanism for scheduling on related machines that achieves a competitive ratio strictly better than $m$?} 
\end{center}
\end{minipage}
\vspace{2mm}



Towards this open question, one useful result~\cite{DBLP:journals/tcs/AulettaPPP09} is an equivalent characterization of truthfulness, known as \emph{load monotonicity}: the total job size (load) assigned to a machine must be non-decreasing in its reported speed. An attempt to address this open question was made by~\citet{DBLP:journals/scheduling/LiLW23}, who focused on \emph{well-behaved} mechanism design, where a machine’s load is always at least as large as the load on any slower machine.  
The well-behavior property, first proposed by~\citet{DBLP:journals/mor/EpsteinLS16}, shares a similar intuition with load monotonicity—both aim to assign more load to faster machines.~\citet{DBLP:journals/scheduling/LiLW23} proposed an (almost)\footnote{The term ``almost well-behaved'' means that the allocation violates the well-behavior property only between two machines whose speeds differ by no more than a factor of two.} well-behaved mechanism that achieves an $O(\log m)$ competitive ratio. Due to this similar intuition, they hope their algorithm and analysis can shed light on the design of truthful mechanisms. As supporting evidence, they show that their mechanism is indeed truthful in special cases where the number of machines is two or where the jobs are identical.


\subsection{Our Contributions}

In this paper, we make progress on this open question and give the \emph{first non-trivial} truthful mechanism:

\paragraph{Main Result.} For online selfish load balancing, there exists a randomized polynomial mechanism which is truthful in expectation and yields a competitive ratio of $O(\log m)$ both in expectation and with high probability.
\vspace{2mm}

Addressing the online selfish setting presents significant technical challenges, as we observe that many intuitive and reasonable mechanisms fail to achieve online truthfulness.  
For instance, the classical online greedy algorithm~\cite{DBLP:journals/jacm/AspnesAFPW97} fails to maintain machine-side truthfulness; on the other hand, offline machine-side truthful mechanisms, such as~\cite{DBLP:conf/focs/ArcherT01,DBLP:conf/esa/Kovacs05,DBLP:journals/mor/EpsteinLS16}, typically rely on offline operations like sorting, which are infeasible in the online setting.  
More importantly, we find that the recently proposed (almost) well-behaved algorithm~\cite{DBLP:journals/scheduling/LiLW23}, although truthful on the machine side in certain special cases (e.g., $m=2$), is not truthful in general.  
We illustrate a counterexample involving only three machines in  
~\cref{sec:ce}.

To achieve our main result, we combine proportional allocation with online truthful parametric pruning\footnote{parametric pruning means guess a parameter, then prune the instance according to that parameter. Here, in our model, we guess an OPT online and restrict the machine subset that a job can be assigned to. } to design a logarithmic-competitive truthful mechanism.  
We note that, following many prior works on offline selfish machine scheduling—such as the first truthful mechanism~\cite{DBLP:conf/focs/ArcherT01} and the first truthful PTAS mechanism~\cite{DBLP:journals/siamcomp/DhangwatnotaiDDR11}—we assume each agent is risk-neutral\footnote{Each agent cares only about the expected utility. For example, receiving a utility of~\(M\) with probability~\(1/M\) is equivalent to receiving a utility of~1 with probability~1 from the agent's perspective.} and employ a randomized mechanism to achieve truthfulness.  
Furthermore, we remark that our mechanism also preserves the well-behavior property in expectation.

\subsubsection*{Extension to Online Selfish-Job Scheduling.}  
In the selfish-job model, each job is controlled by a selfish agent, who holds the job size as private information and seeks to minimize the sum of its completion time and payment.
Handling selfish online jobs (assuming machines are not selfish and related) has been well studied in the literature.  
A natural approach to achieving truthfulness is the greedy algorithm~\cite{DBLP:journals/jacm/AspnesAFPW97}, which is \(O(\log m)\)-competitive even when machines have different speeds (i.e., related machines).~\citet{DBLP:conf/sigecom/FeldmanFR17} proposed an \(O(1)\)-competitive, job-side truthful mechanism for related machines using a dynamic posted pricing framework.  
Building on the selfish-machine setting, we further consider this selfish-job setting and establish the following.

\paragraph{Extended Result~\Rom{1}}  
The aforementioned mechanism can be extended to a two-sided truthful (in expectation) mechanism, incurring only a constant-factor loss in the corresponding competitive ratios.  
Here, two-sided means that the mechanism is truthful simultaneously for both selfish machines and selfish jobs.
\vspace{2mm}

To extend our mechanism to the selfish job setting, we first, similar to the machine side, prove an equivalent characterization of truthfulness, job-side monotonicity: the speed of the machine assigned to a job is non-decreasing as the reported job size increases.  
We observe that previous mechanisms violate job-side monotonicity due to the use of online parametric pruning.  
To address this issue, we introduce the \emph{allocate-before-doubling} trick.    
This trick eliminates the negative impact of online parametric pruning on job-side monotonicity, and we further show that it incurs only a constant-factor loss in the competitive ratio.

\subsubsection*{Extension to $\ell_q$-Norm Minimization.} We further consider online selfish scheduling with an $\ell_q$ norm objective. There exist some works for offline truthful $\ell_q$ norm minimization~\cite{DBLP:journals/mor/EpsteinLS16}, but their techniques are hard to extend to the online scenario. There is also a work of \citet{DBLP:conf/stoc/ImKPS18} studying online scheduling under $\ell_q$ norm, which establishes an $O(1)$-competitive online algorithm, but not aiming for truthfulness. In this model, when $q=1$, the optimal solution can be easily achieved by assigning each job to the fastest machine; however, as $q$ increases beyond 1, achieving a polylogarithmic competitive ratio becomes challenging. 
By extending our techniques on makespan minimization, we obtain the \emph{first non-trivial} result for the general $\ell_q$ norm objective in the online setting under selfish agents.

\paragraph{Extended Result \Rom{2}} For online selfish scheduling with $\ell_q$ norm minimization, there exists a randomized polynomial time mechanism which is two-sided truthful (in expectation) and yields a competitive ratio of $O(m^{\frac{1}{q}(1-\frac{1}{q})} \cdot (\log m)^{1+\frac{1}{q^2}} )$ both in expectation and with high probability.
\vspace{2mm}

The above ratio remains consistent with our makespan algorithm, which will approach $O(\log m)$ as $q$ approaches $\infty$. Our ratio is strictly better than the trivial competitive ratio $O(m^{1-1/q})$\footnote{Assign all jobs to the fastest machine.} if omitting the lower-order terms. 
Note that in~\citet{DBLP:conf/stoc/ImKPS18}'s $O(1)$-competitive algorithm, there is an involved preprocessing subroutine to round machines' speed, which is a very crucial technique to achieve the constant competitive ratio. However, this subroutine may introduce an unexpected change when one machine speeds up (e.g., after a machine speeds up, another machine's rounding speed may also change), implying that their algorithm is not monotone.
In our mechanism, we also round the speeds of machines, but only in a simple way --- rounding down to the nearest power of~$2$. It may introduce a loss of competitive ratio, but it does not affect the monotonicity.

\subsection{Paper Organization}

This paper presents an online mechanism with an $O(\log m)$ competitive ratio that guarantees not only machine-side truthfulness but also job-side truthfulness.  
We begin by formally defining our model in \Cref{sec:pre}, introducing the notion of truthfulness and its reduction to monotonicity.  
In \Cref{sec:tech}, we provide an overview of our algorithmic approach and highlight the key technical ideas.~\Cref{sec:makespan} presents our truthful mechanism for makespan minimization along with its analysis.  
In particular, we prove both our Main Result and Extended Result~\Rom{1} in this section (see \Cref{thm:makespan}).  
We further show that both the algorithmic framework and the analysis naturally extend to the $\ell_q$-norm setting, where we prove Extended Result~\Rom{2} in \Cref{sec:norm} (see \Cref{thm:lq}).

\subsection{Other Related Work}

The offline scheduling problem is a well-studied model with extensive research results. For identical or related machines, there are PTAS results~\cite{DBLP:journals/jal/KargerPT96,DBLP:journals/jal/BermanCK00,DBLP:conf/soda/AlonAWY97,DBLP:journals/algorithmica/EpsteinS04} with makespan and $\ell_q$-norm minimization. For unrelated machines, \citet{DBLP:journals/mp/LenstraST90} gives a $2$-approximation algorithm.

For online load balancing, the model is initiated by~\citet{DBLP:journals/siamam/Graham69}, and subsequently, numerous studies have extensively explored various settings. For identical machines, \citet{DBLP:journals/siamam/Graham69} gives a deterministic algorithm that achieves a ratio of $2 - 1/m$.
Many subsequent works have aimed to improve this ratio~\cite{DBLP:journals/siamcomp/Albers99,DBLP:journals/jcss/BartalFKV95,DBLP:journals/jal/KargerPT96} and have also investigated the hardness of the problem~\cite{DBLP:journals/siamcomp/Albers99,DBLP:journals/ipl/BartalKR94,DBLP:journals/actaC/FaigleKT89,DBLP:conf/soda/GormleyRTW00}. The current best results for deterministic algorithms are an upper bound of $1.9201$~\cite{DBLP:conf/esa/FleischerW00} and a lower bound of $1.880$~\cite{DBLP:journals/siamcomp/RudinC03}.
There are also many works focusing on randomized algorithm design~\cite{DBLP:journals/algorithmica/Seiden00,DBLP:journals/jcss/BartalFKV95,DBLP:journals/disopt/DosaT10,DBLP:journals/orl/Tichy04}.
The current best results are an upper bound of $1.916$~\cite{DBLP:conf/stoc/Albers02} and a lower bound of $1.582$~\cite{DBLP:journals/ipl/Sgall97}. 

For related machines, the deterministic slow-fit algorithm proposed by~\citet{DBLP:conf/stoc/AspnesAFPW93} achieves a competitive ratio of $8$. The optimal competitive ratio of deterministic algorithms lies between $2.56$~\cite{DBLP:journals/mst/EbenlendrS15} and $5.82$~\cite{DBLP:journals/jal/BermanCK00} and that of randomized algorithms is between $2$~\cite{DBLP:journals/orl/EpsteinS00} and $4.311$~\cite{DBLP:journals/jal/BermanCK00}. 

For unrelated machines, the optimal competitive ratio is proved to be $\Theta(\log m)$~\cite{DBLP:journals/jacm/AspnesAFPW97,DBLP:journals/jal/AzarNR95}. Some works also focus on $\ell_q$ norm, providing a $2-\Theta(\ln{q}/q) $-competitive~\cite{DBLP:journals/algorithmica/AvidorAS01} upper bound for identical machines, an $O(1)$-competitive~\cite{DBLP:conf/stoc/ImKPS18} algorithm for related machines, and a tight $\Theta(q)$-competitive~\cite{DBLP:conf/focs/AwerbuchAGKKV95,DBLP:conf/soda/Caragiannis08} algorithm for unrelated machines.

\section{Preliminaries}\label{sec:pre}

In this section, we formally describe the model and summarize the results established in the literature.  
To avoid redundancy, we present a general model that incorporates both selfish machines and selfish jobs.

\paragraph{Online Two-Sided Selfish Scheduling.} There are $m$ selfish machines $\cM$ with private speeds $\vs = \{s_i\}_{i\in [m]}$ and $n$ selfish jobs $\cJ$ with private sizes $\vp = \{p_j\}_{j\in [n]}$. Initially, each machine reports a speed, and then jobs arrive one by one. Each time a job $j$ arrives, it reports a size, and we must immediately assign it to a machine and charge a fee $Q_j$; after all jobs have been processed, we pay the machines some money. Each machine $i\in [m]$ follows the classic quasi-linear utility $P_i - C_i$, where $P_i$ is its payment, $C_i = L_i/s_i$ is the total completion time, and $L_i$ is the total assigned job size; while the objective of each job is to minimize its completion time (denoted by $F_j$) plus the money it pays ($Q_j$). It is worth noting that when a job arrives, it can access the speed and current load of each machine (if previous jobs were assigned randomly, it can also observe the random assignment results). 
Let $\vC := \langle C_1,...,C_m \rangle$ be the vector of machine completion times. Our goal is to design a mechanism that is truthful on both the job side and the machine side to minimize the $\ell_q$ norm $|| \vC ||_q$. When $q=\infty$, the objective is the same as the makespan, i.e., the maximum completion time. 

We also consider the fractional relaxation of the problem, which is used as an intermediate step for a randomized algorithm. In the fractional version, we can have a fractional assignment for each job $j$. We use $x_{ij}$ to denote the fraction of $j$ scheduled on $i$. We have $\sum_{i=1}^{m} x_{ij} = 1$. The load contributed to machine $i$ by job $j$ is $p_j \cdot x_{ij}$, the corresponding processing time is $p_j \cdot x_{ij} / s_i$.

\paragraph{Machine-Side and Job-Side Truthfulness.}
By standard definition, a mechanism is considered truthful if reporting true information is always the best strategy for each selfish agent.
Specifically, we call a mechanism \emph{machine-side truthful} if, for every machine $i$,  reporting its real speed $s_i$ can maximize its (expected) utility $\E[P_i-C_i]$, even after he knows all online jobs at the end. We call a mechanism \emph{job-side truthful} if, for every job $j$, after knowing the current load of every machine at its arrival, reporting its real size $p_j$, maximizes its expected utility $-\E[F_j + Q_j]$.
If a mechanism is both machine-side truthful and job-side truthful, we call it \emph{two-sided truthful}. 

We aim to design a centralized (randomized) online algorithm that can be implemented by a two-sided truthful mechanism with two specific payment functions on the machine side and the job side. If it can be done, we call such algorithms machine-side implementable and job-side implementable, respectively. If both, then they are two-sided implementable. The following lemma is well-known in the literature.

\begin{definition}[Machine-Side Monotone]
   A (randomized) online algorithm is said to be \emph{machine-side monotone} if, for each machine, unilaterally increasing its speed results in a non-decreasing (expected) load scheduled on that machine.
\end{definition}

\begin{lemma}[\citet{DBLP:conf/focs/ArcherT01}]
\label{lem:machine-side-implementable}
    A (randomized) online algorithm is machine-side implementable if and only if it is machine-side monotone. 
\end{lemma}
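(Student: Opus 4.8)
The plan is to reproduce the standard single‑parameter argument of Archer and Tardos in our notation. Reparametrize machine $i$ by its private \emph{unit cost} $t_i := 1/s_i$; then its realized cost is $C_i = L_i/s_i = t_i L_i$, i.e.\ linear in the type $t_i$, so the setting is a textbook one‑parameter domain. I would fix the reports of the other machines together with the job sequence and the jobs' (truthful) reported sizes --- machine $i$ commits to a reported speed before any job arrives and the jobs' reports do not depend on that speed, so the allocation is a well‑defined (possibly randomized) function of machine $i$'s report alone. Write $w(t) := \E[L_i]$ for the expected load assigned to $i$ when it reports speed $1/t$, and $P(t) := \E[P_i]$ for its expected payment; since the utility $\E[P_i - C_i] = P(t) - t\,w(t)$ is linear in $P_i$ and $L_i$, it suffices to reason about $P(\cdot)$ and $w(\cdot)$. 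In this language ``machine‑side monotone'' says exactly that $w$ is non‑increasing in $t$.

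For the ``only if'' direction, assume a payment rule $P(\cdot)$ makes truth‑telling optimal. I would write incentive compatibility for a machine of true type $t$ reporting $t'$ and the symmetric inequality for true type $t'$ reporting $t$, and add them; the payment terms cancel, leaving $(t-t')\bigl(w(t)-w(t')\bigr)\le 0$, so $w$ is non‑increasing and the algorithm is machine‑side monotone. For the ``if'' direction, assume $w$ is non‑increasing and define, as a function of the reported type $t$ (other coordinates fixed),
\[
  P(t) \;:=\; t\,w(t) \;+\; \int_{t}^{\bar t} w(u)\,du ,
\]
where $\bar t$ is the largest admissible unit cost. A type‑$t$ machine reporting $t'$ then obtains expected utility $P(t') - t\,w(t') = (t'-t)\,w(t') + \int_{t'}^{\bar t} w(u)\,du$, so the gain from reporting truthfully rather than $t'$ is $\int_{t}^{t'} w(u)\,du - (t'-t)\,w(t')$, which is $\ge 0$ by the usual ``area under a monotone curve'' comparison: when $t'>t$ the integrand is $\ge w(t')$ on $[t,t']$, and when $t'<t$ the oriented integral is $-\int_{t'}^{t} w$ with $\int_{t'}^{t} w \le (t-t')\,w(t')$. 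Hence the mechanism with payment $P(\cdot)$ is machine‑side truthful, i.e.\ the algorithm is machine‑side implementable; if individual rationality is also wanted, add to $P$ a constant depending only on the other machines' reports so that the worst‑off type gets utility zero.

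I do not expect a genuine obstacle here --- the statement is classical --- but the one point to handle with care is the finiteness of the payment integral $\int_t^{\bar t} w(u)\,du$. On a bounded admissible speed range (equivalently, bounded unit costs $t$), which is the generic situation in this problem, it is automatic; for unbounded domains one takes $\bar t=\infty$ and carries along the mild integrability hypothesis on the work curve already present in Archer and Tardos. Everything else is robust: the online arrival order enters only through the remark that machine $i$'s report is frozen before the jobs arrive (so that $w$ is a bona fide function of that report), and the reduction to expectations is legitimate precisely because the machine's utility $P_i - L_i/s_i$ is linear in both the payment and the load.
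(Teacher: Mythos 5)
Your proposal is correct and is essentially the argument the paper relies on: the lemma is stated with a citation to Archer--Tardos rather than proved, and your reconstruction (reparametrize by unit cost $t=1/s$, add the two incentive constraints to get monotonicity of the expected work curve, and implement a monotone curve with the payment $P(t)=t\,w(t)+\int_t^{\bar t} w(u)\,du$) matches exactly the payment formula $P(b)=bL(b)+\int_b^{+\infty}L(u)\,du$ that the paper itself quotes and uses in the proof of Theorem~\ref{thm:makespan}, including the finiteness/voluntary-participation caveat. No gaps.
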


\citet{DBLP:conf/focs/ArcherT01} originally proposed the equivalence lemma in the context of offline mechanisms. However, as noted by \citet{DBLP:journals/scheduling/LiLW23}, the proof remains valid when applied to online algorithms. On the other hand, inspired by this equivalence lemma, we establish a symmetric version on the job side using a similar proof. 
For completeness, we present the proof in 
\Cref{sec:job-truth}.


\begin{definition}[Job-Side Monotone]
A (randomized) online algorithm is \emph{job-side monotone} if, for each job $j$, the (expected) unit processing time $ \sum_{i\in [m]} x_{ij}\cdot \frac{1}{s_i}$ is non-increasing as $p_j$ increases unilaterally.
\end{definition}

\begin{lemma}\label{lem:job-side-implementable}
A (randomized) online algorithm is job-side implementable if and only if it is job-side monotone.
\end{lemma}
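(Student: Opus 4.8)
The plan is to mirror the classical Myerson / Archer--Tardos payment characterization, exactly as in \Cref{lem:machine-side-implementable}, but with the roles of ``speed'' and ``load'' replaced by ``reported size'' and ``unit processing time''. Fix a job $j$ and condition on the history $H$ at the moment $j$ arrives: the reported speeds $\vs$, the current loads on all machines, and (for a randomized algorithm) the realized assignments of all earlier jobs. Given $H$, reporting a size $p$ induces an assignment distribution $x_{1j}(p),\dots,x_{mj}(p)$ and a fee $Q_j(p)$, and these are the only quantities that $j$'s report can influence. Writing $w_j(p) := \sum_{i\in[m]} x_{ij}(p)\cdot \tfrac{1}{s_i}$ for the expected unit processing time, the expected disutility of $j$ when its true size is $p_j$ but it reports $p$ decomposes as
\[
  \E\!\left[F_j + Q_j \,\middle|\, H\right] \;=\; p_j\cdot w_j(p)\;+\;\beta_j(p),
\]
where $\beta_j(p)$ collects the fee $Q_j(p)$ together with the $p_j$-independent part of the completion time (the already-fixed load on the machine that $j$ lands on). The key point is that $\beta_j$ does not depend on $p_j$, so we are in a single-parameter setting with ``type'' $p_j$, allocation rule $w_j(\cdot)$, and payment $\beta_j(\cdot)$; moreover $w_j$ is bounded ($0\le w_j(p)\le \max_i 1/s_i$).

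For the ``only if'' direction, assume the algorithm is job-side implementable, i.e.\ there is a choice of fees under which truth-telling is optimal for every true size given $H$. Take any $p<p'$ and add the two incentive constraints ``true size $p$, deviate to $p'$'' and ``true size $p'$, deviate to $p$''. The $\beta_j$-terms cancel, leaving $(p-p')\bigl(w_j(p')-w_j(p)\bigr)\ge 0$; since $p<p'$ this forces $w_j(p')\le w_j(p)$, which is exactly job-side monotonicity of the algorithm.

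For the ``if'' direction, assume $w_j$ is non-increasing (as a function of $p$) and set the fee so that $\beta_j(p)=\int_0^{p} w_j(t)\,dt - p\,w_j(p)$, up to an additive constant that is independent of $p$ and hence harmless; concretely $Q_j(p)$ equals this expression minus the $p$-dependent, $p_j$-independent part of the completion time. This is computable online, because $w_j(\cdot)$ is already determined by $\vs$ and the fixed history $H$ at the time $j$ arrives, and a monotone bounded function is integrable. With this choice the expected disutility as a function of the report is $\phi_j(p)=(p_j-p)\,w_j(p)+\int_0^{p} w_j(t)\,dt$, so
\[
  \phi_j(p)-\phi_j(p_j)\;=\;(p_j-p)\,w_j(p)\;-\;\int_{p}^{p_j} w_j(t)\,dt\;\ge\;0,
\]
by monotonicity of $w_j$: when $p<p_j$ use $w_j(t)\le w_j(p)$ on $[p,p_j]$, and when $p>p_j$ use $w_j(t)\ge w_j(p)$ on $[p_j,p]$. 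Hence reporting $p_j$ minimizes the expected disutility given every history $H$, so averaging over $H$ shows the algorithm is job-side implementable.

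The one place that needs care is the decomposition in the first paragraph --- making precise that, conditioned on the arrival history, the coefficient of $p_j$ in $j$'s expected completion time is exactly $w_j(p)$ and everything else is $p_j$-independent. This is immediate from the model's definition of $F_j$ as the completion time of $j$ on the machine it is assigned to (its pre-existing load being fixed before $j$ reports), and it is the direct job-side analogue of the load decomposition used in \citet{DBLP:conf/focs/ArcherT01}. Once this is set up, the rest is the standard payment-identity computation above and carries over verbatim from the proof of \Cref{lem:machine-side-implementable}.
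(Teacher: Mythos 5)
Your proof is correct and follows essentially the same route as the paper's (the characterization via \cref{lem:job_truth}): the same conditioning on the arrival history, the same decomposition of the expected disutility into $p_j\cdot w_j(p)$ plus a report-only-dependent term absorbing the fee and the pre-existing loads $C_i^{(j)}$, and the same Myerson-style payment identity for the sufficiency direction. The only deviation is that you obtain monotonicity from the two-report swap argument, whereas the paper differentiates the cost and uses first- and second-order conditions; your version is, if anything, slightly cleaner since it avoids the implicit differentiability assumptions on $x_{ij}(\cdot)$ and $Q_j(\cdot)$.
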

    

\section{Overview of Our Techniques}
\label{sec:tech}



In this section, we summarize our techniques in mechanism design and analysis. 

\subsection{Algorithmic Framework}

Based on~\cref{lem:machine-side-implementable} and~\cref{lem:job-side-implementable}, we can reduce the task of two-sided truthful mechanism design to designing an online algorithm that is both machine-side and job-side monotone.
Instead of directly designing an integral allocation algorithm,
we first approach the problem fractionally and then employ a randomized rounding algorithm to round the fractional solution. We aim to mitigate the allocation instability caused by machine speed-ups and maintain machine-side monotonicity.


To obtain a desirable competitive ratio, we require that the online algorithm not only exhibits the competitiveness of the returned fractional solution but also satisfies a crucial property called \emph{speed-size-feasible}. This property ensures that the fractional competitive ratio can be approximately maintained after independent rounding. The definition of speed-size-feasibility is as follows.


\begin{definition}[Speed-Size-Feasible]\label{def:speed_constraints} 
   A fractional allocation $\vX=\{x_{ij}\}_{i\in [m],j\in [n]}$ is \emph{speed-size-feasible} if for any machine $i$ and any job $j$ with $x_{ij}>0$, we have 
   \begin{enumerate}
       \item $p_{j} \leq c \cdot s_i \cdot \opt$, where $c$ is a constant. 
       \item $s_i \geq \frac{s_1}{c \cdot m}$, where $c$ is a constant and $s_1$ is the highest machine speed. 
   \end{enumerate}
\end{definition}

We show that for any speed-size feasible solution that is $\alpha$-competitive,
independent rounding yields a competitive ratio of $O(\max\{\log m, \alpha\})$ both in expectation and with high probability (\Cref{lem:rounding}). Roughly speaking, the rounding incurs a ratio loss bounded by $O(\log m)$.
Following this framework, our primary objective shifts to finding a speed-size-feasible and two-sided monotone allocation that achieves a comparable competitive ratio.


\subsection{Proportional Allocation with Truthful Parametric Pruning}

This subsection illustrates the intuition of how to design a speed-size-feasible fractional algorithm with two-sided monotonicity, especially the machine-side monotonicity. 
We observe that when an algorithm's allocation of a specific job heavily depends on the allocations of previous jobs, machine-side monotonicity can easily be violated, as illustrated by the counterexample of \citet{DBLP:journals/scheduling/LiLW23}'s algorithm in 
\Cref{sec:ce}.

Roughly speaking, when a machine speeds up, a machine-side monotone algorithm should have an incentive to schedule more jobs on it. However, if the allocation of later jobs heavily depends on the allocations of earlier jobs, the speed-up machine may initially receive a heavier load than before, but ultimately end up with less load due to unexpected shifts in job assignments.
To address this, we aim to design an algorithm that allocates each job independently. This approach ensures that when a machine speeds up, the allocation of previous jobs does not inadvertently affect the allocation of subsequent jobs, allowing the speed-up machine to maintain its advantage across all jobs.

A very natural way to implement this approach is through a proportional allocation scheme, which assigns each arriving job to machines in proportion to their respective speeds. This ensures that each job's allocation is entirely independent, contributing to the algorithm's stability. However, blindly distributing jobs in this manner can violate speed-size feasibility, resulting in a poor competitive ratio after rounding. \footnote{We can prove every machine's completion time is good in expectation, but the makespan, i.e., the max of them, is not good, even in expectation.}



To satisfy the speed-size-feasible property while maintaining a good competitive ratio, online parametric pruning is essential. This means we cannot allocate any fraction of a job to a machine that is too slow. In the traditional online setting, many algorithms, such as slow-fit \cite{DBLP:journals/jal/BermanCK00} embed a doubling procedure to maintain a guess of \opt, and subsequently identify and eliminate machines that are considered excessively slow for each job. However, this doubling procedure introduces several unexpected issues regarding machine-side monotonicity. (See 
\Cref{sec:ce} 
for a counter-example of a seemingly monotone algorithm called water-filling, which fails to be monotone with the doubling procedure.) 
For example, when we implement proportional allocation, the presence of the doubling procedure breaks the independence of job allocations, leading to instability in the algorithm. Consequently, designing the doubling operation becomes a critical challenge.

To address it, we use two critical ideas in doubling. 
The first is the \emph{double-without-the-last} technique. 
In our mechanism, we will partition jobs into different levels according to their sizes. We will omit the jobs of the last level (small jobs) in the doubling criteria. This approach will keep the doubling operation stable, which preserves the machine-side monotonicity.
The second is \emph{allocate-before-doubling}, which aims to prevent the manipulation by individual jobs during the doubling process, thereby preserving the job-side monotonicity.

\subsection{Generalization to $\ell_q$ Norm}

In the next step, we generalize our approach to $\ell_q$ norm. The framework is the same as before, so we also aim for a proportional algorithm that is speed-size-feasible and two-sided monotone. The main difference is that the proportional assignment and the doubling condition are both adjusted to create a generalized version that fits the $\ell_q$ norm and is consistent with the makespan when $q=\infty$. As a result, the proof of two-sided monotonicity can be directly generalized. For the competitive ratio, the speed-size-feasible property cannot guarantee an $O(\log m)$ loss. In fact, the rounding may introduce an $O(\log m)$ loss on each machine's load. This implies an $O(\log m)$ loss on the competitive ratio for makespan but a $\tilde{O}(m^{1/q})$ loss for the $\ell_q$ norm.

To achieve a better analysis, we partition the machines into several groups and evaluate each group's rounding performance together, thus achieving a competitive ratio of $\tilde{O}(m^{\frac{1}{q}(1-\frac{1}{q})})$. 

\section{Two-Sided Truthful Mechanism for Makespan}
\label{sec:makespan}


In this section, we consider makespan minimization for the selfish scheduling model and aim to show the following.

\begin{theorem}\label{thm:makespan}
    For online two-sided selfish scheduling with makespan minimization, there exists a randomized polynomial mechanism that is two-sided truthful and yields a competitive ratio of $O(\log m)$ both in expectation and with high probability.
\end{theorem}

Following our framework, it is sufficient to design a speed-size-feasible and two-sided monotone fractional algorithm to prove the theorem above. We present our main algorithm, which is called \emph{level-based proportional allocation}, as follows.

The algorithm first normalizes the machine speeds. We round down each $s_i$ to the nearest power of 2, denoted by $\bs_i$. Then, we partition them into $K=\lfloor \log m \rfloor+1$ groups. The first group $\cM_1$ consists of the machines with speed $\bs_1$. The other levels, e.g., the $k$-th group, denoted by $\cM_k$, includes the machines with speed $r_k = \frac{\bs_1}{2^{k-1}}$. Notably, this means we omit all machines with speeds lower than $\frac{s_1}{m}$. Note that this normalization differs from the standard approach which uses $s_1$ as a reference and rounds the speed of group $k$ to $\frac{s_1}{2^{k-1}}$. That approach may mess up machine groupings when the first machine speeds up.
In contrast, we round $s_1$ to the nearest power of $2$ to maintain the stability of machine grouping.

Then, when the first job arrives, the algorithm distributes it among the machines in the first group and initiates a guessed \opt called $\Lambda=p_1/r_1$.  For any other arriving job $j$, we first calculate $j$'s size level, denoted by $k(j)$. In particular, $k(j)$ is the index of the slowest group that can accept job $j$ w.r.t. the current $\Lambda$ --- a group $k$ can accept job $j$ if $p_j \leq r_k \cdot \Lambda$. If a job cannot be accepted by any group, we will mark it as a super large job and set $k(j)=1$. 
After calculating job $j$'s level, we perform a fractional assignment for job $j$ to all the machines in $\cM_1, \cM_2, \dots \cM_{k(j)}$, proportional to the machines' speeds. Specifically, let $\cpM{k(j)} := \bigcup_{k'\leq k(j)} \cM_{k'}$ denote the feasible machine set for job $j$ and $x_{ij}$ represent the assigned fraction of job $j$ to machine $i$. The algorithm sets $x_{ij}$ to be $\frac{\bs_i}{\sum_{t\in \cpM{\kj}} \bs_{t}}$. This proportional assignment ensures that each machine $i\in \cpM{\kj}$ experiences an identical increase in completion time (i.e. $\frac{x_{ij}}{\bs_i}$).
Furthermore, we note that all the jobs at the same level will use the same feasible machine set and have the same proportional assignment. Therefore, let $C_{i,k}$ be the current processing time of all level $k$ jobs on machine $i$; note that $C_{i,k}$ will always be the same for all machines $i\in \cpM{k}$. We use a toy example to illustrate how our algorithm works as follows.

\textbf{Example.} Assume that there are $m=8$ machines submitting their speeds as $s_1=17$, $s_2=7$, $s_3=2$, and $s_4=\ldots=s_8=1$.
After the machines report their speeds, the algorithm first rounds down the speeds to the nearest power of $2$, resulting in $\bs_1=16$, $\bs_2=4$, $\bs_3=2$, $\bs_4=\ldots=\bs_8=1$. We will have $K=\lfloor\log m\rfloor+1=4$ groups of machines, where $\cM_1=\{1\},\cM_2=\emptyset,\cM_3=\{2\}$, and $\cM_4=\{3\}$ with group speed $r_1=16,r_2=8,r_3=4,r_4=2$. Note that the machines with rounded speeds $\bs_4,\ldots,\bs_8$ are ignored since $1 < \bs_1/m = 2$, and we will still have group $2$ with $r_2=8$ even if no machine belongs to it. 
Then, the first job $p_1=16$ arrives. We assign it to machine $1$ and set $\Lambda$ to $p_1/\bs_1=1$. Next, if a job $j$ with $p_j = 4$ arrives, the level of this job (i.e., $k(j)$) will be $3$ because $\Lambda \cdot r_3$ exactly equals $4$. We will assign $j$ proportional on $\cM_1 \cup \cM_2\cup\cM_3=\{1,2\}$. The fraction on machine $1$, denoted by $x_{1j}$, will be $4/5$, while $x_{2j}$ will be $1/5$, implying that $C_{1,3}=C_{2,3}=1/5$ after scheduling this job. 

The last interesting question concerns the timing of doubling. Generally, our algorithm doubles in any of the following two cases. The first scenario occurs when a super large job arrives, i.e., $p_j > \Lambda \cdot r_1$; the second arises when the first machine receives too many jobs from a specific level, i.e., $C_{1,k} > \Lambda$. 
Note that $C_{1,k}$ is equal to $C_{i,k}$ for every $i\in \cpM{k}$, so $C_{1,k} > \Lambda$ implies that every $C_{i,k} > \Lambda$. Both conditions suffice to indicate that $\opt > \Lambda$. 
A natural time to double is immediately after we observe one of these two cases. For instance, if a super large job arrives, we should double $\Lambda$ and then schedule it. However, this natural approach would violate both sides of monotonicity. To preserve the monotonicity, we implement doubling the guessed optimum $\Lambda$ with the following two crucial features, referred to as \emph{allocate-before-doubling} and \emph{double-without-the-last}.
\begin{itemize}
    \item Allocate-Before-Doubling (for Job-Side Monotonicity): We always schedule the job first, based on the old $\Lambda$, and then decide whether we should double $\Lambda$. Note that without this approach, if we schedule the job after doubling, the job-side monotonicity may be violated. Consider a job that can increase the size by an amount of $\epsilon$ to trigger a doubling procedure. If we allocate this job after doubling, its level may increase, which means we will introduce some additional slower machines to process it. Because of our proportional allocation, the unit processing time of $j$ will decrease, which breaks the job-side monotonicity. See 
    \cref{sec:abd} 
    for a complete example. 
    \item Double-Without-The-Last (for Machine-Side Monotonicity): The last level does not affect doubling; that is, even if $C_{1,K} > \Lambda$, we do not double. 
    The motivation of this operation is to prove an important property of our algorithm: the $\Lambda$-stability. This property states that when we increase a machine's speed from $s_i$ to $s_i' = 2 s_i$, the guessed value $\Lambda$ at every time will only decrease but not significantly, i.e., $\Lambda \geq \Lambda' \geq \Lambda/2$. However, the property will fail if we consider the last level for doubling. We will emphasize the reason again in the proof of the $\Lambda$-stability. 
\end{itemize}

We claim that these two properties only degrade the competitive ratio by a constant factor compared to the natural approach while preserving monotonicity. For the complete algorithm and all its details, please refer to \Cref{alg:makespan}. Before analyzing monotonicity and the competitive ratio, we claim a straightforward but useful property below. Let $\Lambda_j$ denote the guessed \opt ($\Lambda$) at the time of job $j$'s arrival.


\begin{algorithm}[tb]
\caption{Level-Based Proportional Allocation for Makespan}
\label{alg:makespan}
\KwIn{The reported speeds $\{s_i\}_{i\in [m]}$ and the reported job sizes $\{p_j\}_{j\in [n]}$ which shows up online.}
\KwOut{ An online fractional allocation $\vX = \{x_{ij}\}_{i\in [m],j\in [n]}$ }
Sort $\{s_i\}_{i\in [m]}$ in descending order \;
$\forall i\in [m]$, set $\bs_i \gets \max \{ 2^z \mid 2^z\leq s_i,~z\in \Z\}$ \tcp*{round down $s_i$ to the nearest power of $2$}

Ignore the machines with $\bs_i < \bs_1/m$ and partition the remaining machines into $K=\lfloor \log m \rfloor+1$ groups: $\forall 1\leq k \leq K$, let $\cM_k \gets \{ i\in [m] \mid \bs_i = r_k \}$ and $\cpM{k} \gets \bigcup_{k'\leq k} \cM_{k'}$, where $r_k:= \frac{\bs_1}{2^{k-1}}$ \;



\For{each arriving job $j$ }
{
\eIf{$j=1$}
{
$\forall i\in \cM_1$, set $x_{ij}\gets \frac{1}{|\cM_1|}$ \;

$\Lambda \gets p_1/r_1$   \tcp*{initialize a guessed optimal objective value}

$C_{i,k} \gets 0$ for all $k \in [K]$ and $i \in \cpM{k}$\tcp*{$K=\lfloor \log m \rfloor+1$}

}
{Define $\kj \gets \max \bigg(\{k \in [K]  \mid p_j \leq r_k \cdot \Lambda \} \cup \{1\}\bigg) $ \tcp*{decide job $j$'s level}

\For(\tcp*[f]{allocate-before-doubling}){each $i\in \cpM{\kj}$}
{
    $x_{ij} \leftarrow \frac{\bs_i}{\sum_{t\in \cpM{\kj}} \bs_{t}}$ \;
    $C_{i,\kj} \leftarrow C_{i,\kj} + x_{ij}\cdot \frac{p_j}{\bs_i}$ \; 
}
\If(\tcp*[f]{double-without-the-last}){ $k(j) \leq K-1$ } 
{
\If(\tcp*[f]{a super large job}){$p_j / r_1 > \Lambda $ }{Keep doubling $\Lambda$ until it is at least $p_j/r_1$ \; Reset $C_{i,k} \gets 0$ for all $k \in [K]$ and $i \in \cpM{k}$ \tcp*{start a new phase}}
\ElseIf(\tcp*[f]{a saturated level}){ $ C_{1,\kj} > \Lambda$ }{ Double $\Lambda$ \; Reset $C_{i,k} \gets 0$ for all $k \in [K]$ and $i \in \cpM{k}$ \tcp*{start a new phase}}

}
}



}
\Return{$\vX = \{x_{ij}\}_{i\in [m],j\in [n]}$.}

\end{algorithm}

\begin{observation}
\label{lem:makespan:lambda-form}
All $\Lambda_j$ can only take the form of $p_1$ multiplied by an integer power of 2, i.e. $\Lambda_j \in \{p_1\cdot 2^z \mid z\in \Z\}$. 
\end{observation}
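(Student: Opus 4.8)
The plan is to prove the statement by a straightforward induction on the sequence of updates made to the variable $\Lambda$ over the execution of \Cref{alg:makespan}, showing that the invariant ``$\Lambda$ equals $p_1$ times an integer power of $2$'' holds right after initialization and is preserved by every subsequent modification.

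For the base case, $\Lambda$ is first assigned in the $j=1$ branch as $\Lambda \gets p_1/r_1$ with $r_1 = \bs_1$. By the rounding step of the algorithm, $\bs_1 = \max\{2^z \mid 2^z \le s_1,\ z \in \Z\}$ is by construction an integer power of $2$, say $\bs_1 = 2^{z_0}$ for some $z_0 \in \Z$. Hence at this moment $\Lambda = p_1 \cdot 2^{-z_0}$, which is of the required form. Since $\Lambda_1$ is the value of $\Lambda$ at the arrival of job $1$, this settles the claim for $j=1$; moreover $\Lambda_j = \Lambda_1$ for every job $j$ preceding the first doubling event.

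For the inductive step, it suffices to observe that, after initialization, the only lines that alter $\Lambda$ are ``Keep doubling $\Lambda$ until it is at least $p_j/r_1$'' and ``Double $\Lambda$'' inside the \emph{double-without-the-last} block. The latter multiplies $\Lambda$ by $2$. The former repeatedly multiplies $\Lambda$ by $2$ — crucially, it does \emph{not} assign $p_j/r_1$ to $\Lambda$ directly, so it only ever multiplies the current value by some non-negative integer power of $2$. Thus, if $\Lambda = p_1 \cdot 2^{z}$ before such an update, then $\Lambda = p_1 \cdot 2^{z'}$ for some integer $z' \ge z$ afterwards, preserving the invariant. Since each $\Lambda_j$ is obtained from the initial value by finitely many such updates, we conclude $\Lambda_j \in \{p_1 \cdot 2^z \mid z \in \Z\}$.

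There is essentially no hard step here: the argument is a direct structural inspection of the algorithm. The only two points that warrant a moment's care are (i) confirming that $r_1 = \bs_1$ is genuinely a power of $2$, which is exactly what the rounding step guarantees, and (ii) reading the ``keep doubling until'' loop as a sequence of doublings rather than a single exact assignment to $p_j/r_1$; both are immediate from the algorithm's description, so this observation serves mainly as a convenient normal form for $\Lambda_j$ to be used in the later analysis.
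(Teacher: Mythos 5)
Your proposal is correct and matches the paper's (implicit) reasoning for this observation: $\Lambda$ is initialized to $p_1/r_1$ with $r_1=\bs_1$ a power of $2$, and thereafter is only ever doubled, so the stated form is preserved. The two points you flag — that $r_1$ is a power of $2$ by the rounding step, and that the ``keep doubling until'' loop doubles rather than assigns $p_j/r_1$ — are exactly the relevant details, and nothing further is needed.
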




\subsection{Monotonicity}



This subsection aims to show the monotonicity of the algorithm.

\begin{lemma} [Two-Sided Monotonicity]\label{lem:makespan:monontone}
The allocation returned by~\cref{alg:makespan} is both machine-side and job-side monotone.
\end{lemma}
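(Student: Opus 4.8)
I will prove the two directions separately, and for each I will argue about the expected load (resp.\ expected unit processing time) by summing over the contributions of individual jobs, exploiting the fact that \Cref{alg:makespan} allocates each job \emph{independently} — that is, $x_{ij}$ depends only on the machine speeds and on $\Lambda$ at the moment job $j$ arrives, not on the fractional assignments of earlier jobs. Since the rounding that turns the fractional solution into an integral one is independent per job, it suffices to verify both monotonicity properties on the fractional allocation $\vX$ itself; the expected load on $i$ is $\sum_j p_j x_{ij}$ and the expected unit processing time of $j$ is $\sum_i x_{ij}/\bs_i$.

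\textbf{Machine-side monotonicity.} Fix machine $i$, fix all other machines' reports, and compare the run on speed $s_i$ with the run on $s_i' = 2 s_i$ (by \Cref{lem:machine-side-monotone-power-of-two-reduction}-style reasoning, or directly, it is enough to treat the case where the rounded speed $\bs_i$ doubles, since between consecutive powers of $2$ nothing changes). Two things can change: (a) the group structure — $i$ moves from $\cM_k$ to $\cM_{k-1}$, and possibly $\bs_1$ and the set of ignored machines shift; and (b) the evolution of $\Lambda$. The key technical ingredient, which the text flags, is \emph{$\Lambda$-stability}: at every job arrival $\Lambda/2 \le \Lambda' \le \Lambda$, where primed quantities refer to the faster run. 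Granting this, I compare the assignment of each job $j$ in the two runs. In the faster run $i$ sits in a faster group, and because $\Lambda' \ge \Lambda/2$ the level $k'(j)$ cannot drop so much that $i$ falls outside the feasible set whenever it was inside in the slow run; a short case analysis on whether $i \in \cpM{k(j)}$ in each run, combined with the fact that $\bs_i' = 2\bs_i$ increases $i$'s proportional share $\bs_i / \sum_{t} \bs_t$ for every feasible set containing $i$, shows $p_j x_{ij}' \ge p_j x_{ij}$ term by term (the cases where $i$ becomes feasible in the fast run but was not in the slow run only help). Summing over $j$ gives the claim. I also need to handle the boundary effect where doubling $\bs_i$ makes $i$ the new fastest machine or un-ignores other machines; the \emph{double-without-the-last} rule is exactly what keeps $\Lambda$-stability intact here, and I will cite that property rather than re-derive it.

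\textbf{Job-side monotonicity.} Fix a job $j$ and all other reports (including the sizes and arrival order of all other jobs), and let $p \mapsto p' > p$ be an increase of $j$'s reported size. By \emph{allocate-before-doubling}, job $j$ is assigned using the value $\Lambda$ that was in force strictly before $j$ arrives, and that value does not depend on $p$ (it is determined by earlier jobs only). Hence the only effect of increasing $p$ is on $j$'s own level $k(j) = \max\{k : p \le r_k \Lambda\}$ (or $1$ if $j$ is super-large), which is \emph{non-increasing} in $p$. A smaller level means $j$ is spread over a smaller prefix $\cpM{k(j)}$ of machine groups — i.e.\ fewer, faster machines — and the proportional unit processing time it incurs is $\sum_{i \in \cpM{k(j)}} x_{ij}/\bs_i = \big(\sum_{i\in\cpM{k(j)}}1\big)\big/\big(\sum_{i\in\cpM{k(j)}}\bs_i\big)$, wait — more carefully, each feasible machine gets identical processing-time increment $1/\!\sum_{t\in\cpM{k(j)}}\bs_t$, so the expected unit processing time of $j$ is exactly $|\cpM{k(j)}| / \sum_{t\in\cpM{k(j)}}\bs_t$, which is a ratio that only shrinks when we drop the slowest group from the prefix. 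Therefore the unit processing time is non-increasing as $p$ grows, which is precisely job-side monotonicity; I must also observe that the post-$j$ doubling, and hence everything affecting later jobs, is unchanged only on the job side — but since job-side monotonicity concerns only $x_{ij}$ for the single job $j$, later jobs are irrelevant.

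\textbf{Main obstacle.} The delicate part is the machine-side argument, specifically establishing (or invoking) $\Lambda$-stability and checking that it survives the boundary cases — $i$ becoming the fastest machine, the ignored-machine threshold $\bs_1/m$ moving, and an empty group appearing. I expect the cleanest route is to isolate $\Lambda$-stability as its own lemma (the text promises its proof elsewhere and I would reference it), and then reduce machine-side monotonicity to a per-job domination inequality $p_j x_{ij}' \ge p_j x_{ij}$ that follows from $\Lambda$-stability plus the monotonicity of the proportional share in $\bs_i$; the job-side direction is comparatively routine once allocate-before-doubling is used to freeze $\Lambda$.
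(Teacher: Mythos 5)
Your proposal matches the paper's proof in essentially every respect: job-side monotonicity via allocate-before-doubling (the allocation of $j$ uses the pre-existing $\Lambda$, so a larger report only lowers $k(j)$ and restricts $j$ to a faster prefix of groups), and machine-side monotonicity via the per-job domination $x_{i^*j}\leq x_{i^*j}'$, reduced to the two containment facts ($i^*$ stays feasible, and no new machines enter the feasible set) that follow from $\Lambda$-stability $\Lambda_j\geq\Lambda_j'\geq\Lambda_j/2$, which the paper likewise invokes as a separately proved lemma. The argument is correct and structurally identical to the paper's.
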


\begin{proof}
    The proof of the job-side monotonicity is straightforward due to the ``allocate-before-doubling'' trick. When each job arrives, the algorithm first determines its level $\kj$, distributing it across the machines in $\cpM{\kj}$ proportionally, and then considers whether to double the value of $\Lambda$. Thus, increasing the reported size $p_j$ can only result in a smaller value of $\kj$ (which may not be the case if we first double $\Lambda$ according to the reported job size) and a faster average machine speed of $\cpM{\kj}$, which implies that the allocation is job-side monotone.

    For the machine side, we show the monotonicity by proving that for each job, its fraction scheduled on any machine is non-decreasing when the machine increases its speed. Consider an arbitrary machine $i^*$ (without loss of generality, we assume $i^*$ is the smallest index in its group for future notational convenience). Clearly, if the reported speed $s_{i^*}$ increases but is still rounded down to the same $\bs_{i^*}$, the algorithm's allocation remains unchanged. Therefore, the proof only needs to focus on the scenario that the machine increases its reported speed, causing an increment of the rounded speed from $\bs_{i^*}$ to $\bs_{i^*}'=2\bs_{i^*}$.
    Larger jumps in the rounded speed, i.e., from $\bs_{i^*}$ to 
    $2^k \bs_{i^*}$ for any $k \geq 2$, can be handled by applying the two-speed 
    case repeatedly: monotonicity from $\bs_{i^*}$ to $2\bs_{i^*}$ and from 
    $2\bs_{i^*}$ to $4\bs_{i^*}$ together imply monotonicity from $\bs_{i^*}$ to 
    $4\bs_{i^*}$, and the argument extends inductively to any $k$. Hence, it 
    suffices to establish the result for a single doubling of the rounded speed.
    
    
    To clarify the notations, we view the story before $i^*$ speeds up and after $i^*$ speeds up as two different worlds --- the \emph{original world} and the \emph{speed-up world}. 
    We use the notations without primes to denote the states in the original world, and the notations with primes to denote the states in the speed-up world. 
    For example, we let 
    $$
        \bs_i' = 
        \begin{cases}
            \bs_i, &  i \neq i^*\\
            2\bs_i, &  i = i^*~;
        \end{cases}
    $$ 
    let $C_{i,k}$ and $C_{i,k}'$ denote the processing time on machine $i$ contributed by jobs from level $k$ in the two worlds, respectively;  
    similarly, we define $\{r_k,r_k'\}$, $\{ \kj,k'(j)\}$, $ \{\cpM{\kj},\cpM{k'(j)}' \} $ and $\{x_{ij}, x_{ij}'\}$. Note that when the machine speed increases, not only the value of $\kj$ for each job $j$ may change, but the machine groups could also be affected. Our goal is to show that     \begin{equation}
        \label{eqn:each-job-increase}
        \forall j\in [n], \quad x_{i^*j} \leq x_{i^*j}'~.
    \end{equation}

     The inequality trivially holds for the first arriving job or the jobs where $x_{i^*j}=0$. For each remaining job $j$ with $x_{i^*j}>0$, we know that $i^* \in \cpM{\kj}$. Therefore, to prove that $x_{i^*j} \leq x_{i^*j}'$, we must first establish that after the speed increases, machine $i^*$ still remains part of the feasible machine set $\cpM{k'(j)}'$ of job $j$ (i.e., $i^*\in \cpM{k'(j)}'$); otherwise, $x_{i^*j}'$ would be $0$. Furthermore, according to the proportional allocation rule of the algorithm,  
     \[x_{i^*j} \leftarrow \frac{\bs_{i^*}}{\sum_{t\in \cpM{\kj}} \bs_{t}}
     \quad \text{ and } \quad 
     x_{i^*j}' \leftarrow \frac{\bs_{i^*}'}{\sum_{t\in \cpM{k'(j)}'} \bs_{t}'}~.\]
     Since $\bs_{i^*} < \bs_{i^*}'$, if we can demonstrate that $\cpM{k'(j)}' \subseteq \cpM{\kj}$, which means the feasible machine set for job $j$ does not gain any new machine due to the speed increase of machine $i^*$, then $x_{i^*j} \leq x_{i^*j}'$ can be proved. In summary, we need the following two properties for each job $j$ to prove Equation~\eqref{eqn:each-job-increase}:
    \begin{enumerate}
        \item[(\rom{1})]  If $i^*\in \cpM{\kj}$, then $i^*\in \cpM{k'(j)}'$~.
        \item[(\rom{2})] $\cpM{k'(j)}' \subseteq \cpM{\kj}$~.
    \end{enumerate}

    Whether a machine $i$ is feasible for job $j$ is determined by the job size $p_j$, machine speed $\bs_i$, and the guessed optimal value $\Lambda$ at the time of job $j$'s arrival (before allocating). The condition $i^*\in \cpM{\kj}$ implies that $p_j\leq \bs_{i^*}\cdot \Lambda_j$, where $\Lambda_j$ represents the value of $\Lambda$ at the time of job $j$'s arrival (before $i^*$ increases its speed), which is defined for $j\geq 2$. Thus, to prove property (\rom{1}), it is sufficient to show that $\Lambda_j'\geq \Lambda_j/2$, which leads to
    \[p_j\leq \bs_{i^*}\cdot \Lambda_j = 2\bs_{i^*}\cdot \frac{\Lambda_j}{2} \leq \bs_{i^*}'\cdot \Lambda_j'~,   \]
    and thereby completes the proof.

    For property (\rom{2}), since all other machines $i\neq i^*$ maintain their speeds, whether new machines become feasible for job $j$ only depends on whether $\Lambda_j$ increases. Specifically, if we can demonstrate that $\Lambda_j \geq \Lambda_j'$, then any machine $i\notin \cpM{\kj}$ satisfies that 
    \[ p_j > \bs_{i}\cdot \Lambda_j \geq \bs_{i}' \cdot \Lambda_j'~. \]
    This means that machine $i$ will not become feasible for job $j$ after machine $i^*$ increases the speed. 
    
    To conclude, we want $\Lambda_j $ to exhibit a certain level of monotonicity and stability. When the speed of machine $i^*$ is doubled, $\Lambda_j'$ is non-increasing but will never drop below $\Lambda_j/2$. We refer to this property as \emph{$\Lambda$-Stability} (\cref{lem:makespan:lambda_stable}). The proof of this property is the final piece in establishing the machine-side monotonicity. We will demonstrate that, owing to certain tricks employed in our algorithm (e.g., speed rounding and double-without-the-last), $\Lambda$ can be shown to be stable.
\end{proof}

Before formally showing the $\Lambda$-Stability, i.e., $ \displaystyle \Lambda_j \geq \Lambda_j' \geq \frac{1}{2}\Lambda_j$, we state some basic intuitions. In general, we first prove $\Lambda_j \geq \Lambda_j'$ and then use it as a black box to show $\Lambda_j' \geq \frac{1}{2}\Lambda_j$.  
To prove $\Lambda_j \geq \Lambda_j'$, there are two main steps. 
In the first step, we consider a \emph{good} job (time) $t$, following the definition below. 

\begin{definition}
A job (time) $t$ is called \emph{good} if 
\begin{itemize}
    \item The guessed $\opt$ is the same in both worlds, i.e., $\Lambda_t=\Lambda_t'$.
    \item The current completion time (every level) of the speed-up world  $C_{i,k}'=0$ for all $k \in [K]$ and $i \in \cpM{k}$ at the arrival time of job $t$ (before allocating $t$).
\end{itemize}
\end{definition}


We show that the next doubling time after a good job $t$ in the original world cannot be later than that in the speed-up world (\cref{lem:equaltime}), because machines in the speed-up world have a strictly faster speed. In the second step, we use this observation to prove the impossibility of $\Lambda_j < \Lambda_j'$. For example, the first job is straightforwardly a good job if $i^*$ is not in the first group. We can easily use it to prove $\Lambda_j > \Lambda_j'$ in the case when $j$ is the first time they differ. For later $j$, we will use the existence of another good job to prove. It is important to note that in the proof of $\Lambda$-stability, we critically rely on the double-without-the-last technique; without this approach, $\Lambda$-stability could be violated (see the 
\Cref{sec:dwtl} 
for an example).


\begin{lemma}
\label{lem:equaltime}
    Let $t$ be a good job (time), and $j$ be the first time $\Lambda_j' > \Lambda_t'$. We have $\Lambda_j > \Lambda_t$. 
\end{lemma}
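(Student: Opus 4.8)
The plan is to recast the claim as a comparison of \emph{doubling times}. Write $D$ (resp.\ $D'$) for the index of the first job, at or after $t$, whose processing triggers a doubling of $\Lambda$ in the original (resp.\ speed-up) world; if no such $D'$ exists the statement is vacuous, so assume it does. Since $\Lambda$ is non-decreasing and strictly increases right after a doubling, the job $j$ in the statement is exactly $D'+1$. Moreover there is no doubling strictly between $t$ and $D$, so $\Lambda_D=\Lambda_t$, and every doubling at least doubles $\Lambda$, so $\Lambda_{D+1}\ge 2\Lambda_D=2\Lambda_t$. Hence it suffices to prove $D\le D'$: then $\Lambda_j=\Lambda_{D'+1}\ge\Lambda_{D+1}\ge 2\Lambda_t>\Lambda_t$.

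To prove $D\le D'$ we may assume the original world does not double on jobs $t,\dots,D'-1$ (otherwise $D<D'$ already), so both $\Lambda$ and $\Lambda'$ remain equal to $\Lambda_t$ on jobs $t,\dots,D'$. I would induct over these jobs maintaining the invariant: before allocating the current job, for every non-last level $\ell$ of the speed-up world that has a matching non-last original level, the cumulative processing time of level $\ell$ in the current phase is at most that of the matching original level, where the matching is the identity if the top rounded speed $\bar{s}_1$ is unchanged and is the one-notch shift $\ell\mapsto\ell-1$ if $\bar{s}_1$ doubled. The base case is the good-time hypothesis on the $C$-values together with $\Lambda_t=\Lambda'_t$. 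The algebraic engine is the behavior of speed-rounding: $\bar{s}_1$ either stays put, leaving all group speeds $r_k$ and the level partition fixed, or doubles, in which case every $r_k$ shifts up one notch ($r'_{\ell+1}=r_\ell$), the former slowest original group is absorbed into the new slowest speed-up level, and a non-last level-$\ell$ job of the speed-up world sits on the same physical rung $r_{\ell-1}$ as a level-$(\ell-1)$ job of the original world. In either case, for every rung $r$ naming a non-last speed-up level one has $W'(r):=\sum_{\bar{s}'_i\ge r}\bar{s}'_i\ \ge\ \sum_{\bar{s}_i\ge r}\bar{s}_i=:W(r)$, since $i^*$'s term only grows and no other machine leaves the sum, and the ``ignored-machine'' threshold never deletes a machine fast enough to matter for these levels. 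Because $\Lambda=\Lambda'$ the two worlds route each job to matching levels, and the job's contribution $p/W'(r)\le p/W(r)$ on the speed-up side is dominated by its contribution $p/W(r)$ on the matched original side; hence the invariant survives each step.

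Now examine the job $D'$, on which the speed-up world doubles. If $D'$ is super-large in the speed-up world, then $p_{D'}/r_1\ge p_{D'}/r'_1>\Lambda$ (as $r'_1\ge r_1$), so $D'$ is super-large in the original world too and the original world doubles on it, giving $D=D'$. The same holds if $D'$ lands on the new top speed-up group (only possible when $\bar{s}_1$ doubled), since then $p_{D'}/\Lambda>r_1$, again super-large in the original world. Otherwise $D'$ saturates a non-last speed-up level $\ell$ that has a matching non-last original level; after allocating $D'$ the invariant still gives $C_{1,\mathrm{match}(\ell)}\ge C'_{1,\ell}>\Lambda$, and $\mathrm{match}(\ell)\le K-2$ is a non-last original level, so the original world's saturation test fires on $D'$ and $D=D'$. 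In every case $D\le D'$, completing the argument.

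The step I expect to be the main obstacle is the bookkeeping around the top-speed rounding. When $i^*$ becomes (tied for) fastest, all group boundaries move, the slowest original group merges into the slowest speed-up level, and the ignored set enlarges; one must check that the good-time hypothesis still supplies a valid base case at such a $t$ (in particular that $\Lambda_t=\Lambda'_t$ can be reconciled with the shift and that it forces alignment of the relevant cumulative loads), that the one-notch shift genuinely pairs each non-last speed-up level with a non-last original level, and — most importantly — that \emph{double-without-the-last} is precisely what closes the gap: without excluding the last level, a job that is last-level in the speed-up world could coincide with a doubling-exempt last-level job of the original world, so the speed-up world could double while the original world does not, breaking $D\le D'$. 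This is the same obstruction that the example for $\Lambda$-stability without double-without-the-last exhibits.
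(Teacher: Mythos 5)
Your proposal is correct and is essentially the paper's own argument in a different dress: your doubling-time comparison $D\le D'$ is the contrapositive of the paper's contradiction setup, your level matching (identity or one-notch shift when $\bar{s}_1$ doubles) is exactly the paper's Job-Level Consistency property, your per-rung comparison $W'(r)\ge W(r)$ giving $p/W'(r)\le p/W(r)$ is the paper's Job-Makespan Monotonicity, and you invoke double-without-the-last for exactly the same reason (eligibility of the matched level and normality of the triggering job). The base-case worry you flag — that the good-time condition compares $C_{i,k}$ with $C'_{i,k}$ at the same index while the shifted matching needs a cross-index comparison — is present in the paper's write-up as well and is harmless in the lemma's only application, where all $C'_{i,k}$ have just been reset to zero at the good time $t$.
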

\begin{proof}
    Assume for contradiction that $\Lambda_j = \Lambda_t$. According to the definitions of the good job and time $j$, we have
    \begin{equation}\label{eqn:contradict}
        \Lambda_j = \Lambda_{j-1} = \dots = \Lambda_t = \Lambda_{j-1}' = \dots = \Lambda_t'~.
    \end{equation}

    Since $j$ is the first time $\Lambda_j' > \Lambda_t'$, we know that $\Lambda'$ increases right after allocating job $j-1$.
    From the statement of~\cref{alg:makespan},  $\Lambda'$ may increase due to two reasons. If it is due to a super large job: $p_{j-1} / r_{1}' > \Lambda'$, then we have
    \[ \frac{p_{j-1}}{r_{1}} \geq \frac{p_{j-1}}{r_{1}'}   > \Lambda'_{j-1} = \Lambda_{j-1}  \]
    due to $r_{1}\leq r_{1}'$ and~\cref{eqn:contradict}. This implies that $\Lambda$ must be doubled before job $j$ arrives, which contradicts our assumption.

    The remaining case is that $\Lambda'$ increases due to a saturated level: $C_{1,k'(j-1)}' > \Lambda_{j-1}'$. Intuitively, it seems impossible that after allocating job $j-1$, $\Lambda'$ is doubled for this reason while $\Lambda$ has not been doubled, as machines in the speed-up world should be more powerful. By carefully analyzing the allocation of each job $h\in [t,j-1]$, we can finally draw a contradiction by showing that $\Lambda$ must be doubled right after allocating job $j-1$ due to \[C_{1,k(j-1)}\geq C_{1,k'(j-1)}' > \Lambda_{j-1}'=\Lambda_{j-1}~,\]
    which completes the proof.
    To this end, we claim the following two crucial properties.

    \begin{property}[Job-Level Consistency]\label{pro:makespan:job-consist}
         We call a job $h$ \emph{normal} if $\max\{k(h), k'(h)\} \leq K-1$ and $p_h \leq \min \{ r_1 \Lambda_h,~ r_1' \Lambda_h'\}$. For any two \emph{normal} jobs $h_1$ and $h_2$ with $ \Lambda_{h_1}=\Lambda_{h_2} $ and $\Lambda_{h_1}' = \Lambda_{h_2}'$, we have $k(h_1)=k(h_2)$ if and only if $k'(h_1)=k'(h_2)$. 
    \end{property}

    \begin{property}[Job-Makespan Monotonicity]\label{pro:makespan:job-mono}
        For each job $h$ with $\Lambda_h=\Lambda_h'$, when $k'(h)\leq K-1$ (not the last level),  $\displaystyle  \frac{x_{1h} }{\bs_1} \geq  \frac{x_{1h}'}{\bs_1'}$. 
    \end{property}

    \begin{figure}[tbp]
    \centering
    \includegraphics[width=0.6\linewidth]{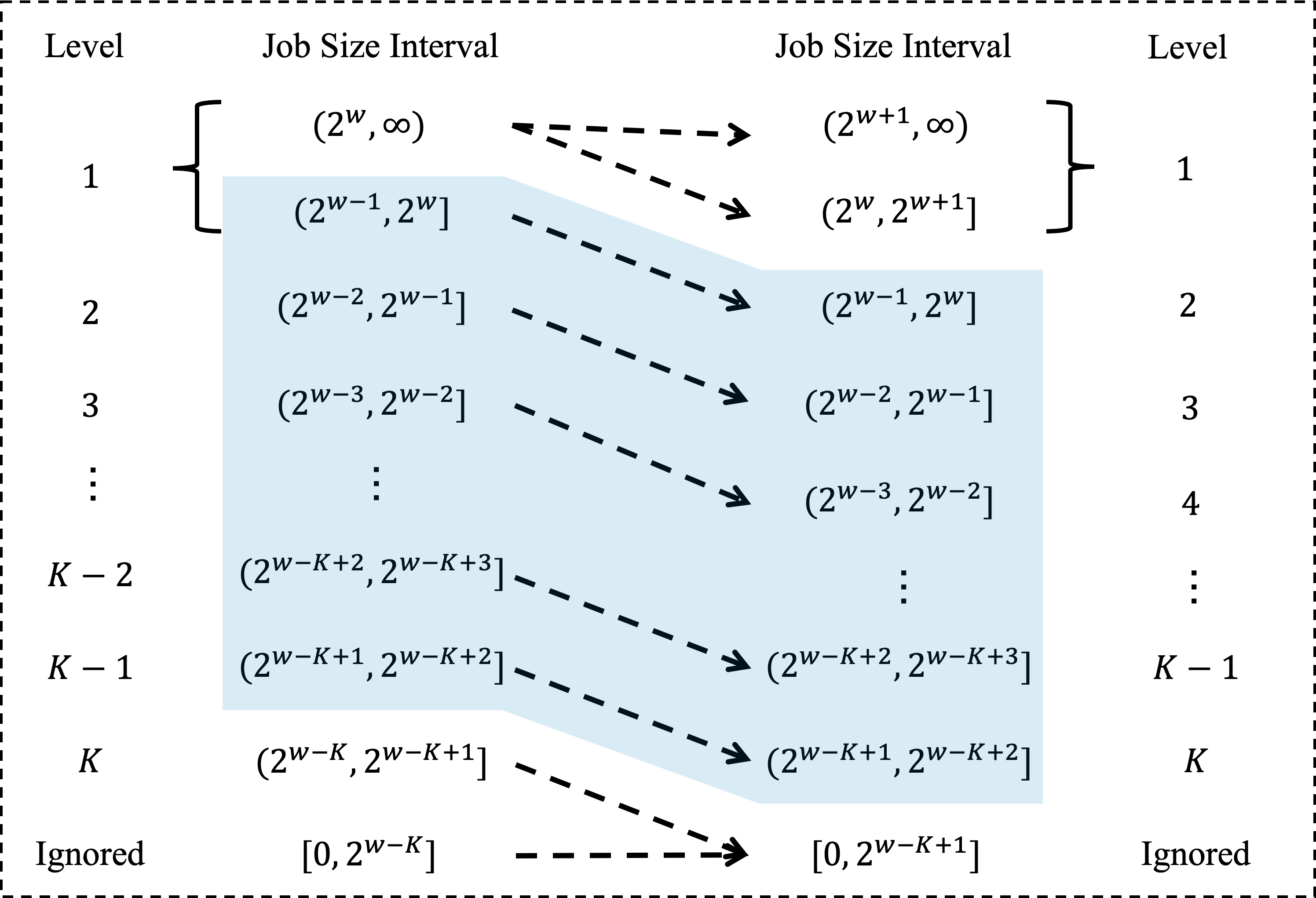}
    \caption{ An illustration of the job-level consistency property. The figure describes a scenario where the fastest machine increases the speed from $\bs_1=2^w$ (the left) to $\bs_1' =2\bs_1= 2^{w+1}$ (the right) and $\Lambda=\Lambda'=1$. As shown in the figure, the speed-up of the fastest machine leads to changes in machine groups, resulting in a shift in the levels assigned to jobs (even though the value of $\Lambda$ remains unchanged). The blue shaded area represents the normal jobs defined in~\cref{pro:makespan:job-consist}. We can observe that these jobs' levels have shifted down by one level overall; thus, jobs that originally belonged to the same level still remain at the same level in the speed-up world.  }
    \label{fig:jc3}
    \end{figure}


    The high-level idea of proving $C_{1,k(j-1)}\geq C_{1,k'(j-1)}'$ is to first show that $C_{1,k(j-1)}$ has a larger set of contributing jobs than $C_{1,k'(j-1)}'$ (by \cref{pro:makespan:job-consist}). 
    Then, for each of these contributing jobs, its contribution to $C_{1,k(j-1)}$ is no less than that to $ C_{1,k'(j-1)}'$ (by \cref{pro:makespan:job-mono}). We provide an illustration of~\cref{pro:makespan:job-consist} in~\cref{fig:jc3} and defer the proofs of these two properties to later. For now, we assume both properties are correct to complete the overall argument.




    
    Let us define $J=\{h\in [t,j-1]\mid k(h)=k(j-1)\}$ and $J'=\{h \in [t,j-1] \mid k'(h)=k'(j-1)\}$. As job $j-1$ triggers doubling and we use the double-without-the-last trick\footnote{Note that if the algorithm does not use the double-without-the-last trick, we cannot prove that job $j-1$ is normal, which results in the failure to apply~\cref{pro:makespan:job-consist}. }, we have $k'(j-1)\leq K-1$ and therefore, $k'(h)\leq K-1$ for any $ h\in J'$. Additionally, due to~\cref{eqn:contradict} and $r_i\leq r_i'$ for each $i$, we have $k(h) \leq k'(h)\leq K-1$ for any $h\in J'$.
    On the other hand, since the algorithm never performs doubling due to a super large job in both worlds, we can conclude that all jobs in $J'$ are normal. By~\cref{pro:makespan:job-consist}, for each job $h\in J'$, we have $k(h) = k(j-1)$ since $k'(h) = k'(j-1)$, and therefore, $J'\subseteq J$.
    Finally, combining with~\Cref{pro:makespan:job-mono}, right after allocating job ${j-1}$, we have 
    \begin{align*}
        C_{1,k(j-1)} - C_{1,k'(j-1)}' &\geq  \sum_{h\in J}x_{1h} \cdot \frac{p_h}{\bs_1}  - \sum_{h\in J'}x'_{1h} \cdot \frac{p_h}{\bs_1'} \tag{the second condition of good jobs} \\
        & \geq  \sum_{h\in J'}x_{1h} \cdot \frac{p_h}{\bs_1} - \sum_{h\in J'}x'_{1h} \cdot \frac{p_h}{\bs_1'} \tag{$J'\subseteq J$} \\
        & \geq  0~. \tag{\cref{pro:makespan:job-mono}}
    \end{align*}
\end{proof}

    
\begin{proof}[Proof of~\cref{pro:makespan:job-consist}]
    Before proving the lemma, we first provide some intuition. At the arrival of each job, we define a \emph{job size interval} for every level. For instance, a job $h$ is at level $3$ if and only if its size lies in $(\Lambda_{h}\cdot r_{4},\, \Lambda_{h}\cdot r_{3}]$; we call this range the job size interval of level $3$. Note that the parameters $\{\Lambda_h, r, k(h)\}$ in the original world and $\{\Lambda_h', r', k'(h)\}$ in the speed-up world may differ. For example, $k'(h)$ may become $4$ if $p_h$ falls into the level-$4$ interval in the speed-up world, namely $(\Lambda'_{h}\cdot r'_{5},\, \Lambda'_{h}\cdot r'_{4}]$. Nevertheless, we observe that for a normal job $h$, the two intervals containing $h$ are numerically identical: $\Lambda'_{h}\cdot r'_{5}=\Lambda_{h}\cdot r_{4}$ and $\Lambda'_{h}\cdot r'_{4}=\Lambda_{h}\cdot r_{3}$. In other words, except for the first and the last level, the job size intervals may shift in index but remain the same as numerical ranges in the two worlds; see~\cref{fig:jc3} for an illustration. Consequently, for two normal jobs $h_1$ and $h_2$, if they fall into the same job size interval in the original world, then they also fall into the same (numerically identical) job size interval in the speed-up world, even though their level indices may change.
    
    To formally prove it, since $\bs_1$ and $\bs_1'$ follow the rounding rule, we let $\bs_1=2^w,\bs_1'=2^{w'}$ where $w,w'\in \Z$. According to \cref{lem:makespan:lambda-form}, the guessed $\opt$ values can be represented by similar formulas: $\Lambda_{h_1}=\Lambda_{h_2}= 2^z\cdot p_1$ and $\Lambda_{h_1}'=\Lambda_{h_2}'=2^{z'}\cdot p_1$, where $z,z'\in \Z$. As $h_1$ is a normal\footnote{The job $h$ which is not normal may fall in a special interval such as $p_h\in (p_1\cdot 2^{z}\cdot r_{k(h)},+\infty]$ or $p_h\in [0,p_1\cdot 2^{z}\cdot r_{k(h)}]$. That is one main reason why we need to restrict $h_1$ and $h_2$ to be normal, and introduce the double-without-the-last technique.} job at level $k(h_1)$ and $k'(h_1)$ in the two worlds, 
    we have 
    \[p_{h_1}\in (\Lambda_{h_1}\cdot r_{k(h_1)+1},~\Lambda_{h_1}\cdot r_{k(h_1)}]=(p_1\cdot 2^{z+w-k(h_1)},~p_1\cdot 2^{z+w-k(h_1)+1}]~,\] 
    and 
    \[p_{h_1}\in (\Lambda_{h_1}'\cdot r_{k'(h_1)+1}',~\Lambda_{h_1}'\cdot r_{k'(h_1)}']=(p_1\cdot 2^{z'+w'-k'(h_1)},~p_1\cdot 2^{z'+w'-k'(h_1)+1}]~.\]
    Due to the integrality of $z+w-k(h_1)$ and $z'+w'-k'(h_1)$, the two intervals above intersect if and only if they are equal. Therefore, we get 
    \begin{align*}
        z+w-k(h_1)=z'+w'-k'(h_1)~.
    \end{align*}
    
    Similarly, for job $h_2$, we have 
    \begin{align*}
        z+w-k(h_2)=z'+w'-k'(h_2)~.
    \end{align*} 
    
    According to the above two equations, we can conclude that $k(h_1)=k(h_2)$ if and only if $k'(h_1)=k'(h_2)$.
\end{proof}

\begin{proof}[Proof of~\cref{pro:makespan:job-mono}]
 
    For each job $h$ with $\Lambda_h=\Lambda_h'$ and $k'(h) \leq K-1$\footnote{It is worth noting that this condition is crucial because, when $i=1$ and $\bs_1$ is increased to $2\bs_1$, the machines in $\cM_K$ would be ignored as their speeds are too slow to be considered.}, the set of feasible machines will only increase with machine acceleration, i.e., $\cpM{k(h)} \subseteq \cpM{k'(h)}'$.
    Then, according to our proportional allocation rule, we have
    \begin{align*}
          \frac{x_{1h}}{\bs_1} & = \frac{\bs_1}{\sum_{i\in \cpM{k(h)}}\bs_{i}} \cdot \frac{1}{\bs_1} \\
          & \geq  \frac{\bs'_1}{ \sum_{i\in \cpM{k'(h)}' }\bs_{i}'} \cdot \frac{1}{\bs'_1} \tag{$\cpM{k(h)} \subseteq \cpM{k'(h)}'$ and $\bs_i \leq \bs_i' $ \;$\forall i$}\\
          & = \frac{x'_{1h}}{\bs'_1}~. 
    \end{align*}
    \qedhere
\end{proof}

\begin{lemma}[$\Lambda$-Stability]\label{lem:makespan:lambda_stable}
    For any job $j\geq 2$, we have $\displaystyle \Lambda_j \geq \Lambda_j' \geq \frac{1}{2}\Lambda_j$.
\end{lemma}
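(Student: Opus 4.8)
The plan is to prove the two halves $\Lambda_j\ge\Lambda_j'$ and $\Lambda_j'\ge\frac{1}{2}\Lambda_j$ separately, in this order, using the first as a black box inside the second. As in the proof of~\cref{lem:makespan:monontone}, it suffices to treat a single speed-up $\bs_{i^*}\mapsto\bs_{i^*}'=2\bs_{i^*}$ and compare the original world (no primes) with the speed-up world (primes); each half will be an induction over the arriving jobs that preserves the stated relation between $\Lambda_h$ and $\Lambda_h'$. The base case is immediate: $\bs_1'\ge\bs_1$ forces $r_1'\ge r_1$, hence $\Lambda_1=p_1/r_1\ge p_1/r_1'=\Lambda_1'$, and since consecutive powers of two differ by a factor of two we in fact have $r_1'\in\{r_1,2r_1\}$, so $\Lambda_1'\in\{\Lambda_1,\Lambda_1/2\}$ and both invariants hold at $j=1$. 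I use~\cref{lem:makespan:lambda-form} throughout (so every $\Lambda_h,\Lambda_h'$ is $p_1$ times an integer power of two) and~\cref{lem:equaltime} to relate the two worlds' doubling times.

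\emph{Half one: $\Lambda_h\ge\Lambda_h'$ for all $h$.} I would carry the pair $(\Lambda_h,\Lambda_h')$ through two states. In the \emph{synchronised} state we sit at a \emph{good} time $t$ (so $\Lambda_t=\Lambda_t'$ with both load vectors just reset, $C_{i,k}\ge C_{i,k}'=0$); in the \emph{gap} state $\Lambda_h>\Lambda_h'$, which by integrality means $\Lambda_h\ge2\Lambda_h'$. Starting from a good time $t$ and taking $j$ to be the first job at which the speed-up world next doubles, \cref{lem:equaltime} gives $\Lambda_j>\Lambda_t$, i.e. the original world has already doubled by job $j$; hence either $\Lambda_j=\Lambda_j'$, in which case the speed-up world's load vector has just been reset so $C_{i,k}\ge C_{i,k}'=0$ and we are again at a good time, or $\Lambda_j>\Lambda_j'$ and we are in the gap state. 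The gap state is self-sustaining: a single doubling of $\Lambda'$ carries it from at most $\Lambda_h/2$ to at most $\Lambda_h$, so the speed-up world can at best \emph{re-synchronise} with the original world — at which instant its load vector is $0$, creating a fresh good time — but never overtake it. The one way $\Lambda'$ could leap past $\Lambda$ is a super-large job forcing several doublings at once, and this is excluded by comparing doubling targets: if $p_h>r_1'\Lambda_h'$ while $p_h\le r_1\Lambda_h$ then $p_h/r_1'\le\Lambda_h$, so $\Lambda'$ is driven only to some value $\le\Lambda_h$; and if $p_h>r_1\Lambda_h$ the job is super-large in \emph{both} worlds and, since $r_1\le r_1'$, drives $\Lambda$ to $\ge p_h/r_1\ge p_h/r_1'$, at least as high as the value $\Lambda'$ reaches. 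Combining the two states yields $\Lambda_h\ge\Lambda_h'$ for every $h$.

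\emph{Half two: $\Lambda_h'\ge\Lambda_h/2$ for all $h$.} Using half one, I maintain this invariant, so that $\Lambda_h/\Lambda_h'\in\{1,2\}$ at all times. It can only fail at a job $h$ with $\Lambda_h=2\Lambda_h'$ at which the original world doubles but the speed-up world does not, so it suffices to show that in this configuration the speed-up world must double right after $h$ as well. If the trigger in the original world is a super-large job, $p_h>r_1\Lambda_h=2r_1\Lambda_h'\ge r_1'\Lambda_h'$ (using $r_1'\le2r_1$), so $h$ is super-large in the speed-up world too. If the trigger is a saturated level, $C_{1,k(h)}>\Lambda_h$, and I would show $C_{1,k'(h)}'\ge\frac{1}{2}C_{1,k(h)}>\Lambda_h/2=\Lambda_h'$, again forcing the double. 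This last inequality is the crux and mirrors the argument inside~\cref{lem:equaltime}: an analogue of~\cref{pro:makespan:job-consist}, now adapted to the $2{:}1$ ratio between $\Lambda_h$ and $\Lambda_h'$, matches the jobs of the current original phase at level $k(h)$ to jobs of the current speed-up phase at level $k'(h)$, and an analogue of~\cref{pro:makespan:job-mono} with a factor-$2$ slack — coming from $\bs_1'\le2\bs_1$, i.e. $\frac{x_{1h'}'}{\bs_1'}\ge\frac{1}{2}\cdot\frac{x_{1h'}}{\bs_1}$ for each contributing job $h'$ — bounds the per-job contributions. The double-without-the-last rule is again what guarantees that the triggering job $h$ is \emph{normal}, which is what licenses these matchings; without it the invariant genuinely fails (cf.~\cref{sec:dwtl}).

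\emph{Where the difficulty lies.} The real obstacle, as in~\cref{lem:equaltime}, is that the two worlds' \emph{phases} desynchronise: $\Lambda$ and $\Lambda'$ are doubled at different jobs, so the load vectors $C$ and $C'$ are reset at different times, and the comparisons ``$C\ge C'$'' (half one) and ``$C'\ge\frac{1}{2}C$'' (half two) cannot be read off at an arbitrary job. The remedy is to anchor every comparison at a freshly reset (good) time and propagate it forward job by job, using that speeding up $i^*$ only shrinks (half one) — or shrinks by at most a half (half two) — the fraction of any job landing on the fastest machine, so the per-job contributions to $C_{1,\cdot}$ move in the right direction. The only essentially new work beyond~\cref{lem:equaltime} is redoing the job-level-consistency and job-makespan-monotonicity bookkeeping under a $2{:}1$ rather than a $1{:}1$ ratio of guessed optima, together with the doubling-target case analysis that prevents super-large jobs from breaking either invariant.
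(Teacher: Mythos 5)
Your proposal is correct in substance, but it follows the paper only on the first half and takes a genuinely different, and much heavier, route on the second. For $\Lambda_j\ge\Lambda_j'$ your synchronised/gap state machine anchored at good times is essentially the paper's argument in forward-induction form: the paper instead takes a minimal counterexample, locates the first job $t$ of the speed-up world's current phase, shows via the power-of-two form that $t$ is a good time, and invokes \cref{lem:equaltime} to contradict; your doubling-target comparison for super-large jobs is also exactly the paper's. For $\Lambda_j'\ge\Lambda_j/2$, however, the paper does \emph{not} redo any load-comparison analysis: it first upgrades the inequality $\Lambda_j\ge\Lambda_j'$ to a domination corollary (if one rounded speed profile dominates another coordinatewise, the dominated profile's guess is pointwise at least the dominating one's), and then exploits scale-invariance of the allocation and doubling rules — doubling \emph{all} speeds leaves every allocation and doubling time unchanged and exactly halves $\Lambda$ — so comparing the all-doubled profile with the single-machine-doubled profile immediately gives $\Lambda_j'\ge\Lambda_j/2$. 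Your alternative is to maintain the ratio invariant $\Lambda_h/\Lambda_h'\in\{1,2\}$ directly, which forces you to re-prove $2{:}1$ analogues of \cref{pro:makespan:job-consist} and \cref{pro:makespan:job-mono} (with a factor-$2$ slack from $\bs_1'\le 2\bs_1$) and to control the desynchronised phase resets; this does go through — the level intervals shift consistently under a $2{:}1$ ratio, the per-job bound $x_{1h}'/\bs_1'\ge\tfrac12\,x_{1h}/\bs_1$ holds, and the speed-up world's current phase starts no later than the original's — but it amounts to proving a $2{:}1$ version of \cref{lem:equaltime}, i.e. roughly doubling the technical work the paper's scaling trick avoids. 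One small wrinkle in your sketch: when $i^*\ne 1$ and the saturated level is level $1$, the triggering job is \emph{super-large} (not normal) in the speed-up world, so that case cannot be handled by the consistency matching and must instead be routed through the doubling-target comparison (which still yields the new $\Lambda'\ge$ new $\Lambda/2$); your claim that double-without-the-last always makes the trigger normal is stated too strongly in the $2{:}1$ setting.
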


\begin{proof}
    We start by proving $\Lambda_j \geq \Lambda_j'$. This inequality clearly holds for job $2$ as $\Lambda_2 = p_1/r_1 \geq p_1/r_1' =\Lambda_2'  $. Assume for contradiction that there exists a job $j$ which is the first job violating the inequality. This means that \begin{equation}\label{eqn:beforefirstjob}
        \forall t<j, \; \Lambda_t \geq \Lambda_t'~,
    \end{equation} and $\Lambda_j < \Lambda_j'$. Notice that due to~\cref{lem:makespan:lambda-form}, both $\Lambda_j'$ and $\Lambda_j$ are integer powers of $2$ multiplied by $p_1$. Therefore, $\Lambda_j < \Lambda_j'$ indicates that
    \begin{equation}\label{eqn:firstjob}
        \Lambda_j \leq \frac{1}{2}\Lambda_j'~.
    \end{equation}

    Since job $j$ is the first one, $\Lambda'$ must be increased right after allocating job $(j-1)$.  
    $\Lambda'$ may increase due to two reasons.
    We first show that this increment cannot occur due to a super large job: $p_{j-1}/r_1' > \Lambda_{j-1}'$.

    
    
    Supposing that $p_{j-1}/r_1' > \Lambda_{j-1}'$, as the algorithm keeps doubling until $\Lambda'$ is no less than $p_{j-1}/r_1'$, $\Lambda'_j$ must be the smallest value in $\{p_1\cdot 2^z \mid z\in \Z\}$ (\cref{lem:makespan:lambda-form}) that is at least $p_{j-1}/r_1'$. On the other hand, in the scenario before $i^*$ increases the speed, the algorithm always has $\Lambda_j \geq p_{j-1}/r_1$ due to our doubling criteria. Then, since $r_1 \leq r_1'$,
    \[ \Lambda_j \geq \frac{p_{j-1}}{r_1} \geq \frac{p_{j-1}}{r_1'}~, \]
    which implies that $\Lambda_j$ is also a value in $\{p_1\cdot 2^z \geq p_{j-1}/r_1' \mid z\in \Z \}$, leading to the conclusion that $\Lambda_j\geq \Lambda_j'$ and thereby resulting in a contradiction.
    
    Thus, if~\cref{eqn:firstjob} holds, the only reason for increasing $\Lambda'$ right after allocating job $j-1$ must be a saturated level: $C_{1,k'(j-1)}' > \Lambda_{j-1}'$. Since the algorithm only doubles $\Lambda'$ once in this case, we have $\Lambda_{j-1}' = \frac{1}{2} \Lambda_j'$.  

    
    Find the first job $t$ with 
    \begin{equation}\label{eqn:1-2}
        \Lambda_t' = \Lambda_{j-1}'=\frac{1}{2}\Lambda_j'~.
    \end{equation}
     Clearly, $\Lambda'$ increases right after allocating job $t-1$, and therefore, each $C_{i,k}'$ is reset to $0$ at the time of job $t$'s arrival.
    Then, we have
    \begin{equation}\label{eqn:equal}
       \Lambda_j \geq \Lambda_t \geq \Lambda_t' = \frac{1}{2} \Lambda_j'  \geq  \Lambda_j~,
    \end{equation}
    where the first inequality uses the fact that $t<j$, the second inequality uses~\cref{eqn:beforefirstjob} and $t<j$, the third equation uses~\cref{eqn:1-2}, and the last inequality is due to~\cref{eqn:firstjob}.
    Because the two sides of~\cref{eqn:equal} are the same, all the inequalities in-between must hold with equality. Combining~\cref{eqn:1-2} and~\cref{eqn:equal}, we have
    \begin{equation}\label{eq:10-1}
        \Lambda_t = \Lambda_t' = \Lambda_{j-1}' = \Lambda_{j-1} =\Lambda_j = \frac{1}{2} \Lambda_j'~.
    \end{equation} 

    Recalling the definition of good jobs, we see that $t$ is a good job because of \cref{eq:10-1} and each $C_{i,k}'$ being reset to $0$ at the time of job $t$'s arrival. Then, by~\cref{lem:equaltime}, we have $\Lambda_j > \Lambda_t$, drawing a contradiction to~\cref{eq:10-1} and completing the proof of $\Lambda_j \geq \Lambda_j'$.

    The inequality $\Lambda_j \geq \Lambda_j'$  can easily yield the corollary that for any two machine speed profiles $\{s_i^{(1)}\}_{i\in [m]}$ and $\{s_i^{(2)}\}_{i\in [m]}$, if one dominates the other (i.e., $s_i^{(1)}\geq s_i^{(2)}~\forall i\in [m]$ ), then $\Lambda_j^{(2)}\geq \Lambda_j^{(1)}$ for any job $j$. Building on this corollary, we can readily prove the other side of the lemma: $\Lambda_j' \geq \Lambda_j/2$. 

    Consider a world where all machine speeds are doubled.
    As our proportional allocation rule and the doubling criteria are scaling-free, the allocation for each job $j$ and the doubling time points remain unchanged; the only difference is that the corresponding $\Lambda_j$ is halved to $\Lambda_j/2$. Since this new speed profile dominates the profile where only machine $i^*$ has its speed doubled, we conclude that $\Lambda_j'\geq \Lambda_j/2$.
\end{proof}

\subsection{Competitive Analysis}

\cref{alg:makespan} returns an online fractional assignment $\vX$ that is two-sided implementable. Our final mechanism executes this algorithm and performs an online rounding to obtain an integral solution: for each job $j$, we independently assign it to machine $i$ with probability $x_{ij}$. As mentioned above, this independent rounding does not have any impact on the desired monotonicity in expectation of our mechanism. 

This subsection analyzes the competitive ratio of the random mechanism. Use $\vY=\{y_{ij}\in \{0,1\}\}$ to denote the random allocation by our final mechanism. Let $\obj(\vX)$, $\obj(\vY)$ and $\opt$ be the objectives corresponding to $\vX$, $\vY$, and the optimal offline solution, respectively.
We show the following:

\begin{lemma} [Competitive Ratio] \label{thm:makespan_ratio}
    With probability at least $1-1/m^2$, we have $\obj(\vY) \leq O(\log m) \cdot \opt$. Further, the expected competitive ratio of $\vY$ is also guaranteed to be $O(\log m)$. 
\end{lemma}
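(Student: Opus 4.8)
The plan is to follow the framework of Section~\ref{sec:tech}: show that the fractional allocation $\vX$ output by~\cref{alg:makespan} is speed-size-feasible for some absolute constant $c$ and is $O(\log m)$-competitive as a fractional solution, and then invoke the rounding lemma (\cref{lem:rounding}) with $\alpha = O(\log m)$, which immediately yields $\obj(\vY) \le O(\max\{\log m,\alpha\})\cdot\opt = O(\log m)\cdot\opt$ both with probability at least $1-1/m^2$ and in expectation. So everything reduces to two claims about $\vX$: speed-size feasibility, and $O(\log m)$-competitiveness of the fractional objective $\obj(\vX)$.

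The engine for both claims is a parametric-pruning argument bounding the guessed value $\Lambda$. First I would prove \emph{doubling legitimacy}: whenever~\cref{alg:makespan} doubles at current value $\Lambda$, we have $\opt > \Lambda/2$. For a super-large job $j$ this is immediate, since $p_j > r_1\Lambda$ while $\opt \ge p_j/s_1 \ge p_j/(2\bs_1)=p_j/(2r_1)$ would force $p_j \le r_1\Lambda$. For a saturated level $k\le K-1$, $C_{1,k}>\Lambda$ means the level-$k$ jobs of the current phase have total size exceeding $\Lambda\cdot S_k$ with $S_k := \sum_{i\in\cpM{k}}\bs_i$; each such job has size $> r_{k+1}\Lambda = r_k\Lambda/2$, so if $\opt\le\Lambda/2$ then OPT must assign all of them to machines of true speed $> r_k$ (hence $\bs_i\ge r_k$, i.e.\ machines in $\cpM{k}$, which are not discarded since $r_k>\bs_1/m$ for $k\le K-1$); the total true speed of $\cpM{k}$ is less than $2S_k$, so by averaging OPT has a completion time exceeding $\Lambda/2$, a contradiction. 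Since consecutive phases at least double $\Lambda$ and the initial value is $p_1/r_1\le 2\opt$, this yields $\Lambda_j<4\opt$ for all $j$. Speed-size feasibility then follows: condition~(2) holds because any machine with $x_{ij}>0$ lies in some $\cpM{k}$ and so was not discarded, whence $s_i\ge\bs_i\ge\bs_1/m\ge s_1/(2m)$; condition~(1) holds because $i\in\cpM{k(j)}$ forces $p_j\le r_{k(j)}\Lambda_j\le\bs_i\Lambda_j<4 s_i\opt$, with super-large jobs handled directly via $p_j\le s_1\opt\le 2s_i\opt$ since they are assigned only to $\cM_1$.

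For the competitive ratio of $\vX$, fix a machine $i\in\cM_{k_0}$; the proportional rule gives $C_i\le L_i/\bs_i=\sum_{k\ge k_0}W_k/S_k$, where $W_k$ is the total size of level-$k$ jobs and $W_k/S_k$ equals the accumulated level-$k$ completion time $\sum_\phi C_{1,k}^{(\phi)}$. For $k\le K-1$ each phase contributes $C_{1,k}^{(\phi)}\le 2\Lambda_\phi$ (the phase keeps $C_{1,k}\le\Lambda_\phi$ until its triggering job, which adds at most one more $\Lambda_\phi$), and since the $\Lambda_\phi$ grow at least geometrically and are bounded by $\Lambda_{\mathrm{final}}<4\opt$, summing over phases gives $W_k/S_k=O(\opt)$. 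The last level $k=K$ is exempt from doubling, so there is no per-phase bound; instead I use the volume bound $\opt\ge W/\sum_i s_i$ (where $W$ is the total job size) together with $\sum_i s_i=O(S_K)$ — the discarded machines and the speed-rounding each cost only a constant factor — to get $W_K/S_K\le W/S_K=O(\opt)$. Adding the $K=O(\log m)$ terms yields $C_i=O(\log m)\cdot\opt$ for every $i$, hence $\obj(\vX)=O(\log m)\cdot\opt$, and~\cref{lem:rounding} then finishes the proof.

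I expect the main difficulty to be handling the last level together with the doubling-legitimacy argument: the double-without-the-last trick, forced on us by machine-side monotonicity, destroys the standard ``each phase's level load is at most $\Lambda$'' control for level $K$, so one must argue separately that the potentially numerous tiny jobs of level $K$ are spread proportionally across essentially all machines and are therefore absorbed by the unavoidable lower bound $\opt\ge W/\sum_i s_i$. Checking that rounding speeds to powers of $2$ and discarding machines slower than $\bs_1/m$ only perturb all of these estimates by constant factors is routine but must be tracked carefully throughout.
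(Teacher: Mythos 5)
Your proposal is correct and follows the paper's overall framework---establish speed-size feasibility and an $O(\log m)$ bound on $\obj(\vX)$, then invoke \cref{lem:rounding}---but one step is packaged differently and another is genuinely more careful than the paper's. For feasibility, the paper bounds the final guess by locating the last job with $\Lambda_j\le 2\bopt$ and deriving $\bopt\ge C_{1,k(j)}$ from the fact that the optimum must place that level's jobs inside $\cpM{k(j)}$; your ``doubling legitimacy'' claim ($\opt>\Lambda/2$ at every doubling) is the same parametric-pruning argument phrased per doubling rather than only at the end, and both yield $\Lambda=O(\opt)$ and hence speed-size feasibility. The substantive difference is in the fractional competitive bound: the paper asserts that within each phase every level satisfies $C_{1,k}\le\Lambda^{(t)}$ before the phase's last job, but because of double-without-the-last the algorithm never enforces this for level $K$, so that per-phase control is only justified for $k\le K-1$; you correctly identify this and handle level $K$ separately through the total-volume bound $\opt\ge W/\sum_i s_i$ together with $\sum_i s_i=O\bigl(\sum_{i\in\cpM{K}}\bs_i\bigr)$, which is exactly the ingredient needed to close that case and makes your treatment tighter on this point than the paper's own write-up. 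The only blemish is your parenthetical claim that the triggering job adds ``at most one more $\Lambda_\phi$'': when the trigger is a super-large job, its contribution to machine $1$ is only bounded by $\Lambda_{\phi+1}$, which can exceed $2\Lambda_\phi$ after multiple doublings; since you immediately sum the geometrically growing guesses capped by $\Lambda_{\mathrm{final}}<4\opt$, replacing $2\Lambda_\phi$ by $\Lambda_\phi+\Lambda_{\phi+1}$ changes nothing in the conclusion.
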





To streamline the proof of~\cref{thm:makespan_ratio}, we first show that an $\alpha$-competitive and speed-size-feasible fractional solution can, with high probability, yield a competitive ratio of $O(\log m)$, and then analyze the speed-size-feasibility and the competitive ratio of~\cref{alg:makespan}.

\begin{lemma}[Approximately Roundability]\label{lem:rounding}
   For any (online) fractional allocation $\vX$ that is speed-size-feasible, if $\obj(\vX) \leq \alpha \cdot \opt$, then the natural independent rounding yields a competitive ratio of $O\left(\max \left\{\log m, \alpha \right\}\right)$ with probability at least $1-1/m^2$. Furthermore, the rounding is guaranteed to be $O(m^2)$-competitive in the worst case.
\end{lemma}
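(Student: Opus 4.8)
The plan is to bound each machine's completion time under the independent rounding separately, then take a union bound; the two clauses of speed-size-feasibility (\cref{def:speed_constraints}) are used for two different purposes --- part~(2) for the deterministic worst-case guarantee, and part~(1) for the concentration. Fix the fractional allocation $\vX$ and let $y_{ij}\in\{0,1\}$ be the rounding variables with $\Pr[y_{ij}=1]=x_{ij}$, chosen independently across jobs. For a fixed machine $i$, the variables $\{y_{ij}\}_{j}$ are mutually independent, so $C_i=\frac{1}{s_i}\sum_j p_j y_{ij}$ is a sum of independent nonnegative random variables, and by the hypothesis $\obj(\vX)\le\alpha\cdot\opt$ its mean satisfies $\E[C_i]=\frac{1}{s_i}\sum_j p_j x_{ij}\le \obj(\vX)\le\alpha\cdot\opt$.

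\textbf{Step 1: the deterministic $O(m^2)$ bound.} This uses only part~(2). Since the optimal makespan is $\opt$, in the optimal schedule machine $i'$ carries load at most $s_{i'}\opt$, so the total job size is $\sum_j p_j\le \opt\sum_{i'}s_{i'}\le \opt\cdot m\, s_1$. Any machine $i$ that receives a job in the rounded solution has $x_{ij}>0$ for that job, hence $s_i\ge s_1/(c m)$ by part~(2); therefore $C_i\le \frac{1}{s_i}\sum_j p_j\le \frac{\opt\,m\,s_1}{s_1/(cm)}=c\,m^2\cdot\opt$, while machines with no job have $C_i=0$. Thus $\obj(\vY)\le O(m^2)\cdot\opt$ always.

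\textbf{Step 2: the high-probability bound.} This uses part~(1): for every job $j$ with $x_{ij}>0$ we have $p_j/s_i\le c\cdot\opt$, so $C_i$ is a sum of independent random variables each lying in $[0,c\,\opt]$ with mean at most $\alpha\cdot\opt$. Rescaling by $1/(c\,\opt)$ and applying the multiplicative Chernoff bound (for a sum $X$ of independent $[0,1]$-valued variables, $\Pr[X\ge a]\le 2^{-a}$ whenever $a\ge 2e\,\E[X]$) with threshold $a=\beta\cdot\max\{\log m,\alpha\}/c$ for a sufficiently large absolute constant $\beta$ --- which guarantees $a\ge 2e\cdot\alpha/c\ge 2e\,\E[X]$ --- we get
\[
\Pr\!\big[\,C_i\ge \beta\max\{\log m,\alpha\}\cdot\opt\,\big]\le 2^{-\beta\max\{\log m,\alpha\}/c}\le 2^{-\beta\log m/c}\le \frac{1}{m^3},
\]
where the last inequality holds once $\beta\ge 3c$ (with $\log=\log_2$; the constant is absorbed otherwise). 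A union bound over the $m$ machines gives $\Pr[\obj(\vY)\ge \beta\max\{\log m,\alpha\}\cdot\opt]\le 1/m^2$, which is the claimed $O(\max\{\log m,\alpha\})$-competitiveness with probability at least $1-1/m^2$.

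\textbf{Main obstacle.} Conceptually the argument is a routine Chernoff-plus-union-bound, so the ``hard part'' is really bookkeeping and one structural subtlety. The subtlety is that the Chernoff step needs the summands $\{p_j y_{ij}\}_j$ for a fixed $i$ to be independent, i.e.\ the fractions $x_{ij}$ must be produced \emph{non-adaptively} with respect to the realized assignments of earlier jobs; this holds for \cref{alg:makespan} because $\Lambda$ and the $C_{i,k}$ variables evolve using the fractional quantities only, but if one wanted the lemma for a genuinely adaptive $\vX$ one would substitute a martingale concentration inequality (Azuma/Freedman) with the same per-step range $c\cdot\opt$ and mean bound $\alpha\cdot\opt$. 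The bookkeeping to watch is choosing the threshold so that it simultaneously dominates $\alpha$ and $\log m$ and verifying that the per-machine failure probability beats $1/m^3$ so that the union bound closes at $1/m^2$; I expect this constant-chasing (rather than any deep step) to be where care is needed.
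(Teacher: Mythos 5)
Your proposal is correct and follows essentially the same route as the paper: bound each machine's rounded completion time via a Chernoff bound, using speed-size-feasibility part (1) to bound each summand by $c\cdot\opt$ and the fractional $\alpha$-competitiveness to bound the mean, then take a union bound over the $m$ machines, and use part (2) for the deterministic $O(m^2)$ fallback. Your worst-case step is just a slightly more explicit version of the paper's one-line argument, and your independence caveat matches the lemma's intended reading of $\vX$ as fixed.
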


\begin{proof}
    The basic proof idea is to first apply concentration on each machine and then employ the union bound across all machines. Use $\vY=\{y_{ij}\in \{0,1\}\}$ to denote the random allocation obtained through independent rounding. For each machine $i\in [m]$, let $\cJ_i$ be the subset of jobs that are assigned fraction $x_{ij}>0$ to machine $i$. For each job $j\in \cJ_i$, define a random variable $z_{ij} := y_{ij}\cdot p_j / s_i$.     
    As $\vX$ is speed-size-feasible, we further normalize each $z_{ij}$ to $z_{ij}':=\frac{z_{ij}}{c \cdot \opt}$ such that it belongs to the interval $[0,1]$.

    Since $\vX$ is $\alpha$-competitive, we have
    \[\E\left[\sum_{j\in \cJ_i}z'_{ij}\right]=\frac{\E\left[\sum_{j\in \cJ_i}z_{ij}\right]}{c \cdot \opt}=\frac{\sum_{j\in \cJ_i} x_{ij}\cdot \frac{p_j}{s_i}}{c \cdot \opt} \leq \frac{\alpha}{c} .  \]
    For simplicity, define $\mu_i:=\E\left[\sum_{j\in \cJ_i}z'_{ij}\right] $. Given any $\delta\geq 2$, applying the Chernoff bound for machine $i$,
    \begin{align*}
        \Pr\left[ \sum_{j\in \cJ_i}z'_{ij} - \mu_i \geq \delta \cdot \frac{\alpha}{c} \right] 
        & \leq \exp{\left(-\frac{\delta \alpha}{2c}\right)} = \frac{1}{m^3} \cdot \exp{\left(-\frac{\delta \alpha}{2c} + 3\log m\right)} . 
    \end{align*}

    By setting $\delta =\max\left\{\frac{6c\log m}{\alpha},2\right\}$ and applying the union bound, we have with probability at least $1-1/m^2$, any machine $i$ satisfies
    \[ \sum_{j\in \cJ_i} y_{ij} \cdot \frac{p_j}{s_i} \leq \sum_{j\in \cJ_i} x_{ij}\cdot \frac{p_j}{s_i}  + \delta \cdot \alpha \cdot  \opt \leq O(\max\{\log m, \alpha\}) \cdot \opt. \]

    Finally, as $\vX$ is speed-size-feasible, even if in the worst case where all jobs are assigned to the machine with speed $s_1/m$, the objective is at most $m^2\cdot \opt$. Thus, $\vY$ is always $O(m^2)$-competitive.
\end{proof}

Now we show that the solution returned by~\cref{alg:makespan} is speed-size-feasible and bounded-competitive by~\cref{lem:makespan:roundable} and~\cref{lem:makespan:fractionalcompetitive}, respectively.

\begin{lemma}\label{lem:makespan:roundable}
    The fractional solution $\vX$ returned by~\cref{alg:makespan} is speed-size-feasible.
\end{lemma}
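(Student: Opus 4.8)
The plan is to check the two clauses of \cref{def:speed_constraints} directly against the workings of \cref{alg:makespan}, collapsing all the factor-two losses into a single constant (I expect $c=4$ to suffice). Clause (2) is immediate: a machine $i$ gets a positive fraction of some job only if it survived the normalization step, i.e.\ $\bs_i\ge \bs_1/m$; since rounding down to a power of two loses at most a factor of two ($\bs_i\le s_i$ and $s_1<2\bs_1$), we get $s_i\ge \bs_i\ge \bs_1/m\ge s_1/(2m)$. For clause (1), I would first reduce it to the single statement ``$\Lambda_j\le 4\,\opt$ for every job $j$''. Indeed, fix $j$ with $x_{ij}>0$. If $\bs_i=\bs_1$ (i.e.\ $i$ is a fastest machine), then $\opt\ge p_j/s_1$ together with $s_1<2\bs_1=2\bs_i\le 2s_i$ gives $p_j<2s_i\,\opt$. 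Otherwise $j\ge 2$ and $j$ is not super-large (both a super-large job and the first job touch only $\cM_1$), so $\kj$ is the largest level with $p_j\le r_{\kj}\Lambda_j$, and $x_{ij}>0$ forces $i\in\cpM{\kj}$, hence $\bs_i\ge r_{\kj}$ and $p_j\le r_{\kj}\Lambda_j\le \bs_i\Lambda_j\le s_i\Lambda_j\le 4s_i\,\opt$.

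It thus remains to prove $\Lambda_j\le 4\,\opt$ for all $j$, which I would do by induction over the doubling events, using that $\opt$ is monotone in the job set (so any lower bound on the optimal makespan of a prefix is a lower bound on $\opt$). The base case is $\Lambda=p_1/r_1\le 2\,\opt$, by the power-of-two argument above. If $\Lambda$ is doubled by a super-large job $j$, its new value is the smallest admissible value $\ge p_j/r_1$, hence $<2p_j/r_1$, and $\opt\ge p_j/s_1>p_j/(2r_1)$ yields $\Lambda'<4\,\opt$. If $\Lambda$ is doubled because level $k\le K-1$ is saturated, $C_{1,k}>\Lambda$, I claim $\opt>\Lambda/2$, so $\Lambda'=2\Lambda<4\,\opt$. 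To prove the claim, write $S_k:=\sum_{t\in\cpM{k}}\bs_t>0$ and let $J_k$ be the level-$k$ jobs of the current phase; within a phase $\Lambda$ is constant and no job is super-large (such a job would have closed the phase), so every $h\in J_k$ has $p_h\in(r_k\Lambda/2,\,r_k\Lambda]$, and $C_{1,k}=\big(\sum_{h\in J_k}p_h\big)/S_k$, whence $\sum_{h\in J_k}p_h>\Lambda S_k$. Now look at OPT: if some $h\in J_k$ sits on a machine of true speed $s_t<r_k$, then $r_k\Lambda/2<p_h\le s_t\,\opt<r_k\,\opt$ already gives $\opt>\Lambda/2$; otherwise every $h\in J_k$ sits on a machine with $s_t\ge r_k$, and since $k\le K-1$ forces $r_k\ge r_{K-1}\ge 2\bs_1/m$, such machines are exactly those of $\cpM{k}$, so $\Lambda S_k<\sum_{h\in J_k}p_h\le \opt\sum_{t\in\cpM{k}}s_t<2\,\opt\,S_k$, again $\opt>\Lambda/2$. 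This closes the induction, and hence the lemma.

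The step I expect to be the crux is the saturated-level claim: one must rule out that a discarded machine or a machine from a slower group helps OPT absorb the jobs of $J_k$, and this is exactly where the two design choices pay off. The ``double-without-the-last'' rule is what gives the clean size window $p_h\in(r_k\Lambda/2,\,r_k\Lambda]$ for the phase jobs of a non-last level (for level $K$ no lower size bound would hold, and the triggering job need not be normal), and the choice $K=\lfloor\log m\rfloor+1$ is what makes $r_{K-1}\ge 2\bs_1/m$, so that a discarded machine is provably too slow to hold any level-$(\le K-1)$ job unless $\opt$ is already $>\Lambda/2$. Everything else is the routine bookkeeping of the power-of-two rounding (the repeated use of $\bs_i\le s_i<2\bs_i$).
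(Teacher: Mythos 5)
Your proof is correct and follows essentially the same route as the paper's: both reduce clause (1) to showing that the guessed value $\Lambda$ stays within a constant factor of the optimum, handling super-large doublings via the fastest machine and saturated-level doublings via the same averaging argument that forces the level-$k$ jobs of the current phase onto $\cpM{k}$. The only differences are bookkeeping — you induct over doubling events and work directly with $\opt$, using the size window $p_h \in (r_k\Lambda/2,\, r_k\Lambda]$ guaranteed by double-without-the-last to rule out help from slower or discarded machines, whereas the paper argues by a last-violating-job contradiction through the intermediate optimum $\bopt$ on the rounded machine set with $\bopt \le 4\opt$.
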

\begin{proof}
According to \Cref{def:speed_constraints}, a fractional allocation $\vX$ is speed-size-feasible if it satisfies the job-size constraint $p_j\le c\cdot s_i\cdot \opt, \forall j\in [n],i\in [m]$ where $c$ is a constant and the machine-speed constraint $s_i\ge \frac{s_1}{c\cdot m}, \forall i\in [m]$ where $s_1$ is highest machine speed.
    From the description, the algorithm initially scales down the machine speeds and ignores all machines with $\bs_i < \frac{\bs_1}{m}$.
    Since $s_i\leq 2\bs_i$ and $\bs_1 \leq s_1$, the machine-speed constraint in~\cref{def:speed_constraints} has been satisfied. 
    
    To show that $\vX$ satisfies the job-size constraint, we first note that the initial operation can be regarded as the algorithm creating a new machine set $\bcM$ and allocating online jobs to this new set of machines. It is easy to observe that the optimal solution $\bopt$ on this new machine set will only have a slight increase: $\bopt \leq 4\cdot \opt$.

    Each machine in $\cM_1$ must satisfy the job-size constraint as any job $j\in \cJ$ has $p_j / \bs_1 \leq \bopt$. For each other machine $i\notin \cM_1$, according to our job-level criteria, the assigned job's size is bounded by $\bs_i \cdot \Lambda$. We now show that the final guessed objective $\Lambda_n$ is at most $2\cdot \bopt$.

    We assume for contradiction that $\Lambda_n > 2\cdot \bopt$. Find the last job $j$ with $\Lambda_j \leq 2\cdot \bopt$. Clearly, the increase of $\Lambda$ right after allocating job $j$ must be due to $C_{1,\kj} > \Lambda_j$; otherwise, $\Lambda_{j+1}$ cannot be greater than $2\cdot \bopt$.
    Thus, we have $\Lambda_j = \Lambda_{j+1}/2 \geq \bopt$. 
    
    Use $\cJ (\kj) := \{h \in [n] \mid k(h)=\kj \text{ and } \Lambda_h = \Lambda_j \}$ to denote the set of jobs that have the same level and corresponding $\Lambda$ as job $j$. Our algorithm distributes these jobs proportionally across $\cpM{\kj}$. According to the definition of a job's level, each job $j'\in \cJ (\kj)$ satisfies $p_{j'} / \bs_i > \Lambda_j \geq\bopt$ for any machine $i\notin \cpM{\kj}$, implying that the optimal solution must allocate job $j'$ to a machine in $\cpM{\kj}$. Let $\pi(i)$ be the set of jobs in $\cJ (\kj)$ that are assigned to machine $i$ in the optimal solution. Clearly, for each machine $i\in \cpM{\kj}$, 
    \[    \bs_i \cdot \bopt \geq  \sum_{j\in \pi (i)} p_j. \]
    Summing over all these machines, we have
    \begin{align*}
       \sum_{i\in \cpM{\kj}}\bs_i \cdot \bopt & \geq \sum_{i\in \cpM{\kj}}\sum_{j\in \pi (i)} p_j   
       \bopt \geq  \frac{\sum_{j\in \cJ (\kj)} p_j}{ \sum_{i\in \cpM{\kj}}\bs_i }   \\
        & = \sum_{j\in \cJ (\kj)} \frac{p_j}{\bs_1} \cdot \frac{\bs_1}{\sum_{i\in \cpM{\kj}}\bs_i } = C_{1,\kj},
    \end{align*}
    which contradicts the fact that $\Lambda$ increases due to $C_{1,\kj} > \Lambda_j$ and completes the proof.
\end{proof}

\begin{lemma}\label{lem:makespan:fractionalcompetitive}
    The fractional solution $\vX$ returned by~\cref{alg:makespan} has $\obj(\vX) \leq O(\log m) \cdot \opt$.
    
\end{lemma}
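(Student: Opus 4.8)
The plan is to bound the fractional objective $\obj(\vX) = \|\vC\|_\infty = \max_i C_i$ by relating the completion time of every machine to the final guessed value $\Lambda_n$, then to $\bopt$, and finally to $\opt$. Recall from the proof of~\cref{lem:makespan:roundable} that the normalization step only creates a new machine set $\bcM$ with $\bopt \leq 4\opt$, and that $\Lambda_n \leq 2\bopt$. So it suffices to show that every machine's total completion time under $\vX$ is $O(\log m) \cdot \Lambda_n$, i.e.\ $O(\log m)$ copies of $\Lambda$ accumulated across all $K = \lfloor\log m\rfloor + 1$ levels and all ``phases'' (maximal time intervals during which $\Lambda$ is constant).

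First I would recall that $C_i = \sum_{k : i \in \cpM{k}} C_{i,k}$, and that for each level $k$, the quantity $C_{i,k}$ is the same for all $i \in \cpM{k}$ (call it $C_{\cdot,k}$), equal to $C_{1,k}$. So $C_i \leq \sum_{k=1}^{K} C_{1,k}$ for every machine $i$. The key step is then a \emph{per-phase, per-level} bound: within any single phase (where $\Lambda$ takes a fixed value $\Lambda_0$), the contribution of level-$k$ jobs to $C_{1,k}$ is at most roughly $2\Lambda_0$ for $k \leq K-1$. Indeed, the doubling rule ``double when $C_{1,k} > \Lambda$'' guarantees that just before the job that triggers doubling, $C_{1,k} \leq \Lambda_0$; the triggering job itself adds at most $x_{1j} p_j / \bs_1 = \frac{p_j}{\sum_{t \in \cpM{k}} \bs_t} \leq \frac{r_k \Lambda_0}{\bs_1} \leq \Lambda_0$ (using $p_j \leq r_k \Lambda_0$ from the level definition and $\bs_1 \geq r_k$), so $C_{1,k} \leq 2\Lambda_0$ at the end of the phase. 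For the last level $k = K$, the double-without-the-last rule means $C_{1,K}$ is never reset by a level-$K$ saturation, but since level-$K$ jobs have $p_j \leq r_K \Lambda = \frac{\bs_1}{m}\Lambda$ and $\cpM{K}$ includes all of $\cM_1$, each such job contributes at most $\frac{\bs_1 \Lambda / m}{\sum_t \bs_t \cdot (\text{something})}$ — more carefully, a separate argument using the fact that the total load of \emph{all} level-$K$ jobs is bounded (they all fit on the machines, and $\sum_j p_j \leq (\sum_i \bs_i)\bopt$) will cap $C_{1,K}$ by $O(\bopt)$ over the entire execution, not just one phase.

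Next I would sum over phases and levels. The number of phases is controlled by noting that $\Lambda$ only increases, starts at $p_1/r_1$, and the final value is $\Lambda_n \leq 2\bopt$; moreover there is an additive contribution of at most $2\Lambda_0$ from level $k$ in the phase with value $\Lambda_0$, and the $\Lambda_0$'s form a geometric-like sequence (each doubling at least halves the ``remaining room''). So $\sum_{\text{phases } \phi} \Lambda_0^{(\phi)}$ is a geometric series dominated by $O(\Lambda_n)$ — actually one must be slightly careful because a super-large job can cause $\Lambda$ to jump by more than a factor $2$, but since $\Lambda_n \leq 2\bopt$ and each phase's value is at most $\Lambda_n$, and the values strictly increase by powers of $2$, we get $\sum_\phi \Lambda_0^{(\phi)} \leq 2\Lambda_n \leq 4\bopt$. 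Combining: $C_i \leq \sum_{k=1}^{K-1}\big(\sum_\phi 2\Lambda_0^{(\phi)}\big) + C_{1,K} \leq (K-1)\cdot O(\bopt) + O(\bopt) = O(\log m)\cdot \bopt = O(\log m)\cdot \opt$, which is the claim.

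\textbf{The main obstacle} I expect is handling the last level $K$ cleanly, since double-without-the-last deliberately breaks the clean per-phase bound there: one cannot charge $C_{1,K}$ against $\Lambda$ within each phase. The fix is a global load-counting argument — all level-$K$ jobs across the whole instance have total size at most $(\sum_{i \in \bcM} \bs_i)\cdot\bopt$ (the optimum must place them somewhere), they are all spread proportionally over $\cpM{K} = \bcM$ (the full machine set, since level $K$ is the slowest), so $C_{1,K} = \sum_{j : k(j)=K} \frac{p_j}{\sum_{t\in\bcM}\bs_t} \leq \bopt$. A secondary subtlety is ruling out that a job marked ``super large'' (with $k(j)=1$ but $p_j > r_1\Lambda$) contributes too much before the immediately-following doubling catches up; here one uses that after such a job $\Lambda$ is set to at least $p_j/r_1$, so that job alone contributes $\leq p_j/\bs_1 \leq p_j/r_1 \leq \Lambda_{\text{new}}$, and it is the \emph{only} level-1 job in its (now-closed) phase contributing beyond the pre-existing $\leq \Lambda_0$, keeping the per-phase level-1 bound at $O(\Lambda_{\text{new}})$ — so the geometric sum still telescopes.
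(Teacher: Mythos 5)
Your proposal is correct, and its skeleton is the same as the paper's: restrict attention to machine $1$ (which gets a share of every job and therefore realizes the makespan), split its load by phases and by levels, charge each level at most $\Lambda^{(t)}$ per phase plus an extra $\Lambda^{(t+1)}$ (or $p_j/r_1$) for the phase-closing job, and sum the geometrically increasing $\Lambda$ values against the bound $\Lambda_n \leq 2\bopt \leq 8\opt$ established in the proof of \cref{lem:makespan:roundable}. Where you genuinely diverge is the last level: the paper's proof asserts that before the phase-closing job every level, implicitly including $k=K$, has $C_{1,k}\leq \Lambda^{(t)}$, even though the double-without-the-last rule means level-$K$ saturation never closes a phase, so the algorithm does not enforce that bound for $k=K$. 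Your separate global volume argument --- all level-$K$ jobs are spread over the same set $\cpM{K}=\bcM$ with denominator $\sum_{t\in\bcM}\bs_t$, and their total size is at most $\left(\sum_{t\in\bcM}\bs_t\right)\cdot\bopt$ because the optimum on $\bcM$ must place them somewhere, hence the cumulative level-$K$ contribution to machine $1$ over the entire run is at most $\bopt$ --- supplies exactly the justification the paper glosses over, at the cost of only an additive $O(\bopt)$, so the $O(\log m)$ bound is unaffected. The remaining pieces of your argument (the super-large trigger job charged to the new $\Lambda$, the per-phase $2\Lambda_0$ bound for levels below $K$, and the telescoping geometric sum) match the paper's reasoning up to constants, so on this point your write-up is, if anything, slightly more careful than the original.
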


\begin{proof}

    In this proof, we also assume that the algorithm assigns jobs to the scaled machine set $\bcM$ with $\bopt \leq 4\cdot \opt$.
    For each arriving job $j$, the algorithm sets $x_{ij}=\frac{\bs_i}{\sum_{i'\in \cpM{\kj}} \bs_{i'} } $ for each machine $i\in \cpM{\kj}$, implying that this job shares the same processing time $ \frac{p_j}{\sum_{i'\in \cpM{\kj}} \bs_{i'} } $ on each machine~$i$. As the first machine belongs to any $\cpM{\kj}$, we have 
    \[\obj(\vX) \leq \sum_{j\in \cJ} x_{1j}\cdot \frac{p_j}{\bs_1} = \sum_t \sum_{j\in \cJ^{(t)}} x_{1j}\cdot \frac{p_j}{\bs_1}, \]
    where $ \cJ^{(t)} $ is the job subset in the $t$-th phase of the algorithm. Here, we use the $t$-th phase to denote the time period between the $(t-1)$-th doubling and the $t$-th doubling.
    
    The total contribution of the level-$K$ jobs across all phases can be easily bounded by $\bopt$, since we assign each of them to all machines proportionally to machine speeds, achieving a lower bound on the optimal solution.

    For other jobs, we analyze their contribution for each phase.
    Use $\Lambda^{(t)}$ to denote the guessed $\Lambda$ value in phase $t$. According to Line 14 in~\cref{alg:makespan}, before assigning the last job in phase $t$, each job level's total processing time $C_{1,k}$ is at most $\Lambda^{(t)}$ for $1 \leq k \leq K -1$; while the processing time of the last job $j$ in phase $t$ is bounded by the guessed optimal objective of the next phase: $p_j /r_1 \leq \Lambda^{(t+1)}$. Since there are $K-1 \leq \lfloor \log m \rfloor$ job levels,  their total contribution in phase $t$ is at most 
    \[ \lfloor \log m \rfloor \cdot \Lambda^{(t)}+\Lambda^{(t+1)},   
    \]

    As $\Lambda^{(t)}$ increases exponentially and the final $\Lambda$ is at most $2\cdot \bopt$ (see the proof of~\cref{lem:makespan:roundable}), we have $\obj(\vX) \leq  O(\log m) \cdot \opt$.
\end{proof}

\begin{proof}[Proof of~\cref{thm:makespan_ratio}]
    The theorem can be proved directly by the above lemmas.
    \cref{lem:makespan:fractionalcompetitive} and~\cref{lem:makespan:roundable} prove that the returned fractional allocation $\vX$ is $O(\log m)$-competitive and speed-size-feasible.  
     \cref{lem:rounding} demonstrates that for such a fractional allocation, independent rounding returns a $O(\log m)$-competitive solution with probability at least $1-1/m^2$ and always remains $O(m^2)$-competitive, implying an expected competitive ratio of $O(\log m)$. 
\end{proof}


\subsection{Proof of \Cref{thm:makespan}}
Finally, we conclude the proof of \Cref{thm:makespan}. We use an independent rounding method to round the fractional algorithm in \Cref{alg:makespan}. Because of the independent rounding, the expected load of a machine is the same as the load of a machine returned by the fractional algorithm. Therefore, the two-sided monotonicity of \Cref{alg:makespan} is the same as the two-sided monotonicity of the randomized algorithm after rounding. Therefore, the randomized algorithm is two-sided monotone, and thus two-sided implementable by \Cref{lem:machine-side-implementable} and \Cref{lem:job-side-implementable}. For competitive ratios, in \Cref{thm:makespan_ratio}, we have already shown the competitive ratio in expectation and with high probability. Finally, we briefly discuss why our final mechanism that implements our randomized algorithm runs in polynomial time. 

Our randomized algorithm clearly runs in polynomial time. Every time a job $j$ arrives, we compute $x_{ij},i\in \cpM{k(j)}$ for at most $m$ machines, each $x_{ij}$ can be calculated in polynomial time. Then we check whether some $C_{1,k},k\in [K]$ exceed $\Lambda$, which can also be done in $\log m$ times of comparisons. 

Next, after we use a mechanism to implement our randomized algorithm, we need to claim that the job and machine payments can be calculated in polynomial time. According to \cref{lem:job_truth}, the expected price for a job $j$ who reports its size $p$ is 
\begin{align*}
    Q_j(p) = Q_j(0) - \sum_{i} \left(C_i^{(j)}\cdot \left(x_{ij}(p)-x_{ij} (0)\right) + \frac{1}{s_i} \cdot \left(p \cdot x_{ij}(p) - \int_{0}^{p}x_{ij}(t) \mathrm{d}t \right)\right).
\end{align*}
After a job report $p$, because the calculation of all $x_{ij}$ can be done in polynomial time, the price can be computed in polynomial time if the integral can be computed in polynomial time.
Since we divide the jobs and their corresponding machine sets into $\log m$ number of levels, there are at most $\log m$ kinds of different $\sum_{i}x_{ij}(p)$. Therefore, the integral can be computed in polynomial time. We remark here that even if a selfish job wants to decide whether to report the real size, it can be done in polynomial time. It is because we only have $\log m$ different allocation for it, and the function $Q_j(p)$ also only has $\log m$ different values. 

Finally, we claim that the calculation of payments for machines is in polynomial time. According to \cite{DBLP:conf/focs/ArcherT01}, the expected payment of a machine $i$ who report its speed $s_i=1/b$ is 
\begin{align*}
    P(b)=b{L(b)}+\int_{b}^{+\infty}L(t)\mathrm{d} t
\end{align*}
where $L(b)$ is the load machine $i$ gets if it reports its speed $s_i=1/b$.
If the integral can be computed in polynomial time, the payment can be computed in polynomial time. Consider the fastest machine $i'$ despite machine $i$. Since \cref{alg:makespan} satisfies \cref{def:speed_constraints} and all the speeds are rounded into $\log m$ number of groups, there are at most $2\log m$ kinds of different $L(b)$: When $\bs_i>2^{\log m}\cdot\bs_{i'}$, all the jobs will be allocated on machine $i$; When $\bs_i<2^{-\log m}\cdot\bs_{i'}$, there's no job scheduled on machine $i$. Therefore, we can execute the polynomial time algorithm at most $2\log m$ times to compute each $L(b)$. In conclusion, the integral can be computed in polynomial time. Furthermore, if $Q_j(p) \geq 0$ is required to strictly qualify as a valid payment, we need to set a proper $Q_j(0)$ unrelated to $p$ as proved in \cref{lem:job_truth}.


\section{Generalization to $\ell_q$ Norm Objective}\label{sec:norm}

This section extends our techniques to the $\ell_q$ norm objective and shows the following:

\begin{theorem}\label{thm:lq}
    For online two-sided selfish scheduling with $\ell_q$ norm minimization, there exists a randomized polynomial time mechanism that is two-sided truthful and yields a competitive ratio of $O(m^{\frac{1}{q}(1-\frac{1}{q})} \cdot (\log m)^{1+\frac{1}{q^2}} )$ both in expectation and with high probability.
\end{theorem}

The mechanism shares the same framework as \Cref{alg:makespan}, with details provided in 
\Cref{alg:lq} (in \Cref{sec:lq-analysis}).
We highlight two differences here, one is the proportional assignment. $x_{ij}$ is changed from $\frac{\bs_i}{\sum_{i'\in \cpM{\kj}} \bs_{i'}}$ to $\frac{\bs_i^{\gamma}}{\sum_{i'\in \cpM{\kj}} \bs_{i'}^{\gamma}}$, where $\gamma = \frac{q}{q-1}$. When $q=\infty$, it reduces to the assignment in the makespan version. When $q=1$, $\gamma \rightarrow \infty$, we can view the assignment as scheduling all the jobs to the fastest machine. We do not rigorously include this case ($q=1$, $\gamma \rightarrow \infty$) in every proof because of the "not well-defined" issue. However, it is straightforward to see the algorithm is two-sided monotone, and gets an optimal solution when $q=1$. The other difference is the doubling criteria, changed from $C_{1,k(j)} > \Lambda$ to a generalized version of $|| \vC_{\kj} ||_q > \Lambda$. Here $\vC_{k} :=  \langle C_{i,k} \rangle_{i\in \cpM{k}}$ means the vector of total processing time of level $k$ jobs on machines in $\cpM{k}$.

Similar to the previous section, we show that the fractional algorithm is speed-size-feasible and two-sided monotone. For the two-sided monotonicity, we strictly follow the same framework as the analysis in the makespan version. We observe that almost all the properties can be safely generalized into the $\ell_q$ norm setting. Therefore, we present counterparts for all lemmas and observations in \Cref{sec:makespan} to prove the two-sided monotonicity. For competitive analysis, we present a new approximately rounding lemma, which introduces more loss compared to the makespan version in 
\Cref{lem:lq:rounding}
, thus proving the new competitive ratio. We defer all the analyses to
~\Cref{sec:lq-analysis}.

\section{Conclusion}\label{sec:con}
In conclusion, this paper presents a novel online truthful mechanism on the machine side and further extends it to the two-sided truthful setting, addressing general $\ell_q$ norm minimization. We achieve an $O(\log m)$-competitive ratio for makespan and a ratio of $\tilde{O}\bigl(m^{\frac{1}{q}(1-\frac{1}{q})}\bigr)$ for the $\ell_q$ norm. 
Our algorithm follows a general framework that combines a speed-size-feasible fractional algorithm with independent rounding. The key requirement is to ensure two-sided monotonicity for the fractional component. However, due to the use of independent rounding, the best attainable ratio is $O(\log m / \log \log m)$, as inherited from the classical balls-into-bins setting. Thus, achieving a constant competitive ratio would require the development of fundamentally new techniques.

Finally, we pose three open questions that we find particularly intriguing:
\begin{itemize}[left=2em]
    \item Can we design an $O(1)$-competitive machine-side truthful algorithm? Furthermore, is it possible to extend this to obtain two-sided truthfulness?
    \item Can we establish lower bounds on the competitive ratio when both online constraints and machine-side truthfulness are required? Existing lower bounds are typically proven either for online algorithms or for truthful mechanisms individually. To obtain stronger lower bounds—beyond small constant factors—it may be necessary to develop new techniques that simultaneously capture these two aspects. Even improving the constant in known bounds would be valuable if it explicitly combines the requirements of online and truthfulness. Furthermore, if we also impose two-sided truthfulness, could this lead to even stronger lower bounds?
    \item Can we achieve an $O(\log m)$-competitive ratio using a two-sided truthful mechanism (or machine-side truthful mechanism) for the $\ell_q$ norm objective? The ratio of the $\ell_q$-norm is not $O(\log m)$ in our algorithm. This is mainly because, in scenarios where a large number of slow machines are feasible to handle a large job, we assign the job proportionally among all of them, whereas \opt might only activate one machine. There are several intuitive approaches to address this issue, but none preserve machine-side monotonicity. 
\end{itemize}



\newpage
\bibliographystyle{plainnat}
\bibliography{ref}

@article{DBLP:journals/tcs/AulettaPPP09,
  author       = {Vincenzo Auletta and
                  Roberto De Prisco and
                  Paolo Penna and
                  Giuseppe Persiano},
  title        = {On designing truthful mechanisms for online scheduling},
  journal      = {Theor. Comput. Sci.},
  volume       = {410},
  number       = {36},
  pages        = {3348--3356},
  year         = {2009}
}

@book{archer2004mechanisms,
  title={Mechanisms for discrete optimization with rational agents},
  author={Archer, Aaron Francis},
  year={2004},
  publisher={Cornell University}
}

@inproceedings{DBLP:conf/focs/ArcherT01,
  author       = {Aaron Archer and
                  {\'{E}}va Tardos},
  title        = {Truthful Mechanisms for One-Parameter Agents},
  booktitle    = {42nd Annual Symposium on Foundations of Computer Science, {FOCS} 2001,
                  14-17 October 2001, Las Vegas, Nevada, {USA}},
  pages        = {482--491},
  publisher    = {{IEEE} Computer Society},
  year         = {2001},
  url          = {https://doi.org/10.1109/SFCS.2001.959924},
  doi          = {10.1109/SFCS.2001.959924},
  timestamp    = {Thu, 23 Mar 2023 23:57:54 +0100},
  biburl       = {https://dblp.org/rec/conf/focs/ArcherT01.bib},
  bibsource    = {dblp computer science bibliography, https://dblp.org}
}

@article{DBLP:journals/siamcomp/DhangwatnotaiDDR11,
  author       = {Peerapong Dhangwatnotai and
                  Shahar Dobzinski and
                  Shaddin Dughmi and
                  Tim Roughgarden},
  title        = {Truthful Approximation Schemes for Single-Parameter Agents},
  journal      = {{SIAM} J. Comput.},
  volume       = {40},
  number       = {3},
  pages        = {915--933},
  year         = {2011},
  url          = {https://doi.org/10.1137/080744992},
  doi          = {10.1137/080744992},
  timestamp    = {Sat, 27 May 2017 14:22:59 +0200},
  biburl       = {https://dblp.org/rec/journals/siamcomp/DhangwatnotaiDDR11.bib},
  bibsource    = {dblp computer science bibliography, https://dblp.org}
}

@inproceedings{DBLP:conf/stacs/AndelmanAS05,
  author       = {Nir Andelman and
                  Yossi Azar and
                  Motti Sorani},
  editor       = {Volker Diekert and
                  Bruno Durand},
  title        = {Truthful Approximation Mechanisms for Scheduling Selfish Related Machines},
  booktitle    = {{STACS} 2005, 22nd Annual Symposium on Theoretical Aspects of Computer
                  Science, Stuttgart, Germany, February 24-26, 2005, Proceedings},
  series       = {Lecture Notes in Computer Science},
  volume       = {3404},
  pages        = {69--82},
  publisher    = {Springer},
  year         = {2005},
  url          = {https://doi.org/10.1007/978-3-540-31856-9\_6},
  doi          = {10.1007/978-3-540-31856-9\_6},
  timestamp    = {Wed, 19 May 2021 12:05:43 +0200},
  biburl       = {https://dblp.org/rec/conf/stacs/AndelmanAS05.bib},
  bibsource    = {dblp computer science bibliography, https://dblp.org}
}

@inproceedings{DBLP:conf/esa/Kovacs05,
  author       = {Annam{\'{a}}ria Kov{\'{a}}cs},
  editor       = {Gerth St{\o}lting Brodal and
                  Stefano Leonardi},
  title        = {Fast Monotone 3-Approximation Algorithm for Scheduling Related Machines},
  booktitle    = {Algorithms - {ESA} 2005, 13th Annual European Symposium, Palma de
                  Mallorca, Spain, October 3-6, 2005, Proceedings},
  series       = {Lecture Notes in Computer Science},
  volume       = {3669},
  pages        = {616--627},
  publisher    = {Springer},
  year         = {2005},
  url          = {https://doi.org/10.1007/11561071\_55},
  doi          = {10.1007/11561071\_55},
  timestamp    = {Tue, 27 Dec 2022 09:06:31 +0100},
  biburl       = {https://dblp.org/rec/conf/esa/Kovacs05.bib},
  bibsource    = {dblp computer science bibliography, https://dblp.org}
}

@article{DBLP:journals/jda/Kovacs09,
  author       = {Annam{\'{a}}ria Kov{\'{a}}cs},
  title        = {Tighter approximation bounds for {LPT} scheduling in two special cases},
  journal      = {J. Discrete Algorithms},
  volume       = {7},
  number       = {3},
  pages        = {327--340},
  year         = {2009},
  url          = {https://doi.org/10.1016/j.jda.2008.11.004},
  doi          = {10.1016/J.JDA.2008.11.004},
  timestamp    = {Tue, 16 Feb 2021 14:03:26 +0100},
  biburl       = {https://dblp.org/rec/journals/jda/Kovacs09.bib},
  bibsource    = {dblp computer science bibliography, https://dblp.org}
}

@article{DBLP:journals/siamcomp/0001K13,
  author       = {George Christodoulou and
                  Annam{\'{a}}ria Kov{\'{a}}cs},
  title        = {A Deterministic Truthful {PTAS} for Scheduling Related Machines},
  journal      = {{SIAM} J. Comput.},
  volume       = {42},
  number       = {4},
  pages        = {1572--1595},
  year         = {2013},
  url          = {https://doi.org/10.1137/120866038},
  doi          = {10.1137/120866038},
  timestamp    = {Sun, 02 Oct 2022 15:48:50 +0200},
  biburl       = {https://dblp.org/rec/journals/siamcomp/0001K13.bib},
  bibsource    = {dblp computer science bibliography, https://dblp.org}
}

@article{DBLP:journals/mor/EpsteinLS16,
  author       = {Leah Epstein and
                  Asaf Levin and
                  Rob van Stee},
  title        = {A Unified Approach to Truthful Scheduling on Related Machines},
  journal      = {Math. Oper. Res.},
  volume       = {41},
  number       = {1},
  pages        = {332--351},
  year         = {2016},
  url          = {https://doi.org/10.1287/moor.2015.0730},
  doi          = {10.1287/MOOR.2015.0730},
  timestamp    = {Sun, 02 Oct 2022 15:43:57 +0200},
  biburl       = {https://dblp.org/rec/journals/mor/EpsteinLS16.bib},
  bibsource    = {dblp computer science bibliography, https://dblp.org}
}

@inproceedings{DBLP:conf/stoc/NisanR99,
  author       = {Noam Nisan and
                  Amir Ronen},
  editor       = {Jeffrey Scott Vitter and
                  Lawrence L. Larmore and
                  Frank Thomson Leighton},
  title        = {Algorithmic Mechanism Design (Extended Abstract)},
  booktitle    = {Proceedings of the Thirty-First Annual {ACM} Symposium on Theory of
                  Computing, May 1-4, 1999, Atlanta, Georgia, {USA}},
  pages        = {129--140},
  publisher    = {{ACM}},
  year         = {1999},
  url          = {https://doi.org/10.1145/301250.301287},
  doi          = {10.1145/301250.301287},
  timestamp    = {Sun, 25 Oct 2020 22:32:32 +0100},
  biburl       = {https://dblp.org/rec/conf/stoc/NisanR99.bib},
  bibsource    = {dblp computer science bibliography, https://dblp.org}
}

@article{DBLP:journals/algorithmica/ChristodoulouKV09,
  author       = {George Christodoulou and
                  Elias Koutsoupias and
                  Angelina Vidali},
  title        = {A Lower Bound for Scheduling Mechanisms},
  journal      = {Algorithmica},
  volume       = {55},
  number       = {4},
  pages        = {729--740},
  year         = {2009},
  url          = {https://doi.org/10.1007/s00453-008-9165-3},
  doi          = {10.1007/S00453-008-9165-3},
  timestamp    = {Sun, 02 Oct 2022 15:27:07 +0200},
  biburl       = {https://dblp.org/rec/journals/algorithmica/ChristodoulouKV09.bib},
  bibsource    = {dblp computer science bibliography, https://dblp.org}
}

@article{DBLP:journals/algorithmica/KoutsoupiasV13,
  author       = {Elias Koutsoupias and
                  Angelina Vidali},
  title        = {A Lower Bound of 1+\emph{{\(\varphi\)}} for Truthful Scheduling Mechanisms},
  journal      = {Algorithmica},
  volume       = {66},
  number       = {1},
  pages        = {211--223},
  year         = {2013},
  url          = {https://doi.org/10.1007/s00453-012-9634-6},
  doi          = {10.1007/S00453-012-9634-6},
  timestamp    = {Fri, 30 Nov 2018 13:28:57 +0100},
  biburl       = {https://dblp.org/rec/journals/algorithmica/KoutsoupiasV13.bib},
  bibsource    = {dblp computer science bibliography, https://dblp.org}
}

@inproceedings{DBLP:conf/focs/0001KK21,
  author       = {George Christodoulou and
                  Elias Koutsoupias and
                  Annam{\'{a}}ria Kov{\'{a}}cs},
  title        = {On the Nisan-Ronen conjecture},
  booktitle    = {62nd {IEEE} Annual Symposium on Foundations of Computer Science, {FOCS}
                  2021, Denver, CO, USA, February 7-10, 2022},
  pages        = {839--850},
  publisher    = {{IEEE}},
  year         = {2021},
  url          = {https://doi.org/10.1109/FOCS52979.2021.00086},
  doi          = {10.1109/FOCS52979.2021.00086},
  timestamp    = {Sun, 02 Oct 2022 16:01:28 +0200},
  biburl       = {https://dblp.org/rec/conf/focs/0001KK21.bib},
  bibsource    = {dblp computer science bibliography, https://dblp.org}
}

@inproceedings{DBLP:conf/stoc/0001KK23,
  author       = {George Christodoulou and
                  Elias Koutsoupias and
                  Annam{\'{a}}ria Kov{\'{a}}cs},
  editor       = {Barna Saha and
                  Rocco A. Servedio},
  title        = {A Proof of the Nisan-Ronen Conjecture},
  booktitle    = {Proceedings of the 55th Annual {ACM} Symposium on Theory of Computing,
                  {STOC} 2023, Orlando, FL, USA, June 20-23, 2023},
  pages        = {672--685},
  publisher    = {{ACM}},
  year         = {2023},
  url          = {https://doi.org/10.1145/3564246.3585176},
  doi          = {10.1145/3564246.3585176},
  timestamp    = {Mon, 22 May 2023 13:15:17 +0200},
  biburl       = {https://dblp.org/rec/conf/stoc/0001KK23.bib},
  bibsource    = {dblp computer science bibliography, https://dblp.org}
}

@article{DBLP:journals/siamam/Graham69,
  author       = {Ronald L. Graham},
  title        = {Bounds on Multiprocessing Timing Anomalies},
  journal      = {{SIAM} Journal of Applied Mathematics},
  volume       = {17},
  number       = {2},
  pages        = {416--429},
  year         = {1969},
  timestamp    = {Fri, 06 Sep 2002 12:34:00 +0200},
  biburl       = {https://dblp.org/rec/journals/siamam/Graham69.bib},
  bibsource    = {dblp computer science bibliography, https://dblp.org}
}

@inproceedings{DBLP:conf/esa/FleischerW00,
  author       = {Rudolf Fleischer and
                  Michaela Wahl},
  editor       = {Mike Paterson},
  title        = {Online Scheduling Revisited},
  booktitle    = {Algorithms - {ESA} 2000, 8th Annual European Symposium, Saarbr{\"{u}}cken,
                  Germany, September 5-8, 2000, Proceedings},
  series       = {Lecture Notes in Computer Science},
  volume       = {1879},
  pages        = {202--210},
  publisher    = {Springer},
  year         = {2000},
  url          = {https://doi.org/10.1007/3-540-45253-2\_19},
  doi          = {10.1007/3-540-45253-2\_19},
  timestamp    = {Tue, 14 May 2019 10:00:54 +0200},
  biburl       = {https://dblp.org/rec/conf/esa/FleischerW00.bib},
  bibsource    = {dblp computer science bibliography, https://dblp.org}
}

@article{DBLP:journals/siamcomp/Albers99,
  author       = {Susanne Albers},
  title        = {Better Bounds for Online Scheduling},
  journal      = {{SIAM} J. Comput.},
  volume       = {29},
  number       = {2},
  pages        = {459--473},
  year         = {1999},
  url          = {https://doi.org/10.1137/S0097539797324874},
  doi          = {10.1137/S0097539797324874},
  timestamp    = {Sun, 02 Jun 2019 21:02:34 +0200},
  biburl       = {https://dblp.org/rec/journals/siamcomp/Albers99.bib},
  bibsource    = {dblp computer science bibliography, https://dblp.org}
}

@article{DBLP:journals/jcss/BartalFKV95,
  author       = {Yair Bartal and
                  Amos Fiat and
                  Howard J. Karloff and
                  Rakesh Vohra},
  title        = {New Algorithms for an Ancient Scheduling Problem},
  journal      = {J. Comput. Syst. Sci.},
  volume       = {51},
  number       = {3},
  pages        = {359--366},
  year         = {1995},
  url          = {https://doi.org/10.1006/jcss.1995.1074},
  doi          = {10.1006/JCSS.1995.1074},
  timestamp    = {Tue, 16 Feb 2021 14:04:30 +0100},
  biburl       = {https://dblp.org/rec/journals/jcss/BartalFKV95.bib},
  bibsource    = {dblp computer science bibliography, https://dblp.org}
}

@article{DBLP:journals/jal/KargerPT96,
  author       = {David R. Karger and
                  Steven J. Phillips and
                  Eric Torng},
  title        = {A Better Algorithm for an Ancient Scheduling Problem},
  journal      = {J. Algorithms},
  volume       = {20},
  number       = {2},
  pages        = {400--430},
  year         = {1996},
  url          = {https://doi.org/10.1006/jagm.1996.0019},
  doi          = {10.1006/JAGM.1996.0019},
  timestamp    = {Sun, 02 Oct 2022 15:39:58 +0200},
  biburl       = {https://dblp.org/rec/journals/jal/KargerPT96.bib},
  bibsource    = {dblp computer science bibliography, https://dblp.org}
}

@article{DBLP:journals/jal/BermanCK00,
  author       = {Piotr Berman and
                  Moses Charikar and
                  Marek Karpinski},
  title        = {On-Line Load Balancing for Related Machines},
  journal      = {J. Algorithms},
  volume       = {35},
  number       = {1},
  pages        = {108--121},
  year         = {2000},
  url          = {https://doi.org/10.1006/jagm.1999.1070},
  doi          = {10.1006/JAGM.1999.1070},
  timestamp    = {Sat, 30 Sep 2023 10:17:36 +0200},
  biburl       = {https://dblp.org/rec/journals/jal/BermanCK00.bib},
  bibsource    = {dblp computer science bibliography, https://dblp.org}
}

@article{DBLP:journals/jacm/AspnesAFPW97,
  author       = {James Aspnes and
                  Yossi Azar and
                  Amos Fiat and
                  Serge A. Plotkin and
                  Orli Waarts},
  title        = {On-line routing of virtual circuits with applications to load balancing
                  and machine scheduling},
  journal      = {J. {ACM}},
  volume       = {44},
  number       = {3},
  pages        = {486--504},
  year         = {1997},
  url          = {https://doi.org/10.1145/258128.258201},
  doi          = {10.1145/258128.258201},
  timestamp    = {Tue, 06 Nov 2018 12:51:44 +0100},
  biburl       = {https://dblp.org/rec/journals/jacm/AspnesAFPW97.bib},
  bibsource    = {dblp computer science bibliography, https://dblp.org}
}

@article{DBLP:journals/jal/AzarNR95,
  author       = {Yossi Azar and
                  Joseph Naor and
                  Raphael Rom},
  title        = {The Competitiveness of On-Line Assignments},
  journal      = {J. Algorithms},
  volume       = {18},
  number       = {2},
  pages        = {221--237},
  year         = {1995},
  url          = {https://doi.org/10.1006/jagm.1995.1008},
  doi          = {10.1006/JAGM.1995.1008},
  timestamp    = {Sun, 28 May 2017 13:24:58 +0200},
  biburl       = {https://dblp.org/rec/journals/jal/AzarNR95.bib},
  bibsource    = {dblp computer science bibliography, https://dblp.org}
}

@article{DBLP:journals/ipl/BartalKR94,
  author       = {Yair Bartal and
                  Howard J. Karloff and
                  Yuval Rabani},
  title        = {A Better Lower Bound for On-Line Scheduling},
  journal      = {Inf. Process. Lett.},
  volume       = {50},
  number       = {3},
  pages        = {113--116},
  year         = {1994},
  url          = {https://doi.org/10.1016/0020-0190(94)00026-3},
  doi          = {10.1016/0020-0190(94)00026-3},
  timestamp    = {Fri, 26 May 2017 22:54:42 +0200},
  biburl       = {https://dblp.org/rec/journals/ipl/BartalKR94.bib},
  bibsource    = {dblp computer science bibliography, https://dblp.org}
}

@article{DBLP:journals/actaC/FaigleKT89,
  author       = {Ulrich Faigle and
                  Walter Kern and
                  Gy{\"{o}}rgy Tur{\'{a}}n},
  title        = {On the performance of on-line algorithms for partition problems},
  journal      = {Acta Cybern.},
  volume       = {9},
  number       = {2},
  pages        = {107--119},
  year         = {1989},
  url          = {https://cyber.bibl.u-szeged.hu/index.php/actcybern/article/view/3359},
  timestamp    = {Wed, 16 Sep 2020 16:23:18 +0200},
  biburl       = {https://dblp.org/rec/journals/actaC/FaigleKT89.bib},
  bibsource    = {dblp computer science bibliography, https://dblp.org}
}

@inproceedings{DBLP:conf/soda/GormleyRTW00,
  author       = {Todd Gormley and
                  Nick Reingold and
                  Eric Torng and
                  Jeffery R. Westbrook},
  editor       = {David B. Shmoys},
  title        = {Generating adversaries for request-answer games},
  booktitle    = {Proceedings of the Eleventh Annual {ACM-SIAM} Symposium on Discrete
                  Algorithms, January 9-11, 2000, San Francisco, CA, {USA}},
  pages        = {564--565},
  publisher    = {{ACM/SIAM}},
  year         = {2000},
  url          = {http://dl.acm.org/citation.cfm?id=338219.338608},
  timestamp    = {Mon, 08 Apr 2019 14:43:44 +0200},
  biburl       = {https://dblp.org/rec/conf/soda/GormleyRTW00.bib},
  bibsource    = {dblp computer science bibliography, https://dblp.org}
}

@article{DBLP:journals/siamcomp/RudinC03,
  author       = {Rudin III, John F. and
                  Chandrasekaran, R.},
  title        = {Improved Bounds for the Online Scheduling Problem},
  journal      = {{SIAM} J. Comput.},
  volume       = {32},
  number       = {3},
  pages        = {717--735},
  year         = {2003},
  url          = {https://doi.org/10.1137/S0097539702403438},
  doi          = {10.1137/S0097539702403438},
  timestamp    = {Sat, 27 May 2017 14:22:59 +0200},
  biburl       = {https://dblp.org/rec/journals/siamcomp/RudinC03.bib},
  bibsource    = {dblp computer science bibliography, https://dblp.org}
}

@article{DBLP:journals/algorithmica/AvidorAS01,
  author       = {Adi Avidor and
                  Yossi Azar and
                  Jir{\'{\i}} Sgall},
  title        = {Ancient and New Algorithms for Load Balancing in the \emph{l}\({}_{\mbox{p}}\)
                  Norm},
  journal      = {Algorithmica},
  volume       = {29},
  number       = {3},
  pages        = {422--441},
  year         = {2001},
  url          = {https://doi.org/10.1007/s004530010051},
  doi          = {10.1007/S004530010051},
  timestamp    = {Thu, 15 Jun 2017 21:22:27 +0200},
  biburl       = {https://dblp.org/rec/journals/algorithmica/AvidorAS01.bib},
  bibsource    = {dblp computer science bibliography, https://dblp.org}
}

@inproceedings{DBLP:conf/focs/AwerbuchAGKKV95,
  author       = {Baruch Awerbuch and
                  Yossi Azar and
                  Edward F. Grove and
                  Ming{-}Yang Kao and
                  P. Krishnan and
                  Jeffrey Scott Vitter},
  title        = {Load Balancing in the L\({}_{\mbox{p}}\) Norm},
  booktitle    = {36th Annual Symposium on Foundations of Computer Science, Milwaukee,
                  Wisconsin, USA, 23-25 October 1995},
  pages        = {383--391},
  publisher    = {{IEEE} Computer Society},
  year         = {1995},
  url          = {https://doi.org/10.1109/SFCS.1995.492494},
  doi          = {10.1109/SFCS.1995.492494},
  timestamp    = {Thu, 23 Mar 2023 23:57:53 +0100},
  biburl       = {https://dblp.org/rec/conf/focs/AwerbuchAGKKV95.bib},
  bibsource    = {dblp computer science bibliography, https://dblp.org}
}

@inproceedings{DBLP:conf/soda/Caragiannis08,
  author       = {Ioannis Caragiannis},
  editor       = {Shang{-}Hua Teng},
  title        = {Better bounds for online load balancing on unrelated machines},
  booktitle    = {Proceedings of the Nineteenth Annual {ACM-SIAM} Symposium on Discrete
                  Algorithms, {SODA} 2008, San Francisco, California, USA, January 20-22,
                  2008},
  pages        = {972--981},
  publisher    = {{SIAM}},
  year         = {2008},
  url          = {http://dl.acm.org/citation.cfm?id=1347082.1347188},
  timestamp    = {Fri, 07 Dec 2012 17:02:08 +0100},
  biburl       = {https://dblp.org/rec/conf/soda/Caragiannis08.bib},
  bibsource    = {dblp computer science bibliography, https://dblp.org}
}

@inproceedings{DBLP:conf/stoc/ImKPS18,
  author       = {Sungjin Im and
                  Nathaniel Kell and
                  Debmalya Panigrahi and
                  Maryam Shadloo},
  editor       = {Ilias Diakonikolas and
                  David Kempe and
                  Monika Henzinger},
  title        = {Online load balancing on related machines},
  booktitle    = {Proceedings of the 50th Annual {ACM} {SIGACT} Symposium on Theory
                  of Computing, {STOC} 2018, Los Angeles, CA, USA, June 25-29, 2018},
  pages        = {30--43},
  publisher    = {{ACM}},
  year         = {2018},
  url          = {https://doi.org/10.1145/3188745.3188966},
  doi          = {10.1145/3188745.3188966},
  timestamp    = {Wed, 21 Nov 2018 12:44:02 +0100},
  biburl       = {https://dblp.org/rec/conf/stoc/ImKPS18.bib},
  bibsource    = {dblp computer science bibliography, https://dblp.org}
}

@inproceedings{DBLP:conf/sigecom/FeldmanFR17,
  author       = {Michal Feldman and
                  Amos Fiat and
                  Alan Roytman},
  editor       = {Constantinos Daskalakis and
                  Moshe Babaioff and
                  Herv{\'{e}} Moulin},
  title        = {Makespan Minimization via Posted Prices},
  booktitle    = {Proceedings of the 2017 {ACM} Conference on Economics and Computation,
                  {EC} '17, Cambridge, MA, USA, June 26-30, 2017},
  pages        = {405--422},
  publisher    = {{ACM}},
  year         = {2017},
  url          = {https://doi.org/10.1145/3033274.3085129},
  doi          = {10.1145/3033274.3085129},
  timestamp    = {Fri, 02 Jun 2023 16:54:30 +0200},
  biburl       = {https://dblp.org/rec/conf/sigecom/FeldmanFR17.bib},
  bibsource    = {dblp computer science bibliography, https://dblp.org}
}

@article{DBLP:journals/scheduling/LiLW23,
  author       = {Bo Li and
                  Minming Li and
                  Xiaowei Wu},
  title        = {Well-behaved online load balancing against strategic jobs},
  journal      = {J. Sched.},
  volume       = {26},
  number       = {5},
  pages        = {443--455},
  year         = {2023},
  url          = {https://doi.org/10.1007/s10951-022-00770-6},
  doi          = {10.1007/S10951-022-00770-6},
  timestamp    = {Sat, 14 Oct 2023 20:13:57 +0200},
  biburl       = {https://dblp.org/rec/journals/scheduling/LiLW23.bib},
  bibsource    = {dblp computer science bibliography, https://dblp.org}
}

@inproceedings{DBLP:conf/soda/AlonAWY97,
  author       = {Noga Alon and
                  Yossi Azar and
                  Gerhard J. Woeginger and
                  Tal Yadid},
  editor       = {Michael E. Saks},
  title        = {Approximation Schemes for Scheduling},
  booktitle    = {Proceedings of the Eighth Annual {ACM-SIAM} Symposium on Discrete
                  Algorithms, 5-7 January 1997, New Orleans, Louisiana, {USA}},
  pages        = {493--500},
  publisher    = {{ACM/SIAM}},
  year         = {1997},
  url          = {http://dl.acm.org/citation.cfm?id=314161.314371},
  timestamp    = {Thu, 05 Jul 2018 07:29:42 +0200},
  biburl       = {https://dblp.org/rec/conf/soda/AlonAWY97.bib},
  bibsource    = {dblp computer science bibliography, https://dblp.org}
}

@article{DBLP:journals/algorithmica/EpsteinS04,
  author       = {Leah Epstein and
                  Jir{\'{\i}} Sgall},
  title        = {Approximation Schemes for Scheduling on Uniformly Related and Identical
                  Parallel Machines},
  journal      = {Algorithmica},
  volume       = {39},
  number       = {1},
  pages        = {43--57},
  year         = {2004},
  url          = {https://doi.org/10.1007/s00453-003-1077-7},
  doi          = {10.1007/S00453-003-1077-7},
  timestamp    = {Thu, 15 Jun 2017 21:22:27 +0200},
  biburl       = {https://dblp.org/rec/journals/algorithmica/EpsteinS04.bib},
  bibsource    = {dblp computer science bibliography, https://dblp.org}
}

@article{DBLP:journals/mp/LenstraST90,
  author       = {Jan Karel Lenstra and
                  David B. Shmoys and
                  {\'{E}}va Tardos},
  title        = {Approximation Algorithms for Scheduling Unrelated Parallel Machines},
  journal      = {Math. Program.},
  volume       = {46},
  pages        = {259--271},
  year         = {1990},
  url          = {https://doi.org/10.1007/BF01585745},
  doi          = {10.1007/BF01585745},
  timestamp    = {Thu, 14 Oct 2021 09:40:02 +0200},
  biburl       = {https://dblp.org/rec/journals/mp/LenstraST90.bib},
  bibsource    = {dblp computer science bibliography, https://dblp.org}
}

@article{DBLP:journals/disopt/DosaT10,
  author       = {Gy{\"{o}}rgy D{\'{o}}sa and
                  Zhiyi Tan},
  title        = {New upper and lower bounds for online scheduling with machine cost},
  journal      = {Discret. Optim.},
  volume       = {7},
  number       = {3},
  pages        = {125--135},
  year         = {2010},
  url          = {https://doi.org/10.1016/j.disopt.2010.02.005},
  doi          = {10.1016/J.DISOPT.2010.02.005},
  timestamp    = {Mon, 16 Jan 2023 07:37:35 +0100},
  biburl       = {https://dblp.org/rec/journals/disopt/DosaT10.bib},
  bibsource    = {dblp computer science bibliography, https://dblp.org}
}

@article{DBLP:journals/orl/Tichy04,
  author       = {Tom{\'{a}}s Tich{\'{y}}},
  title        = {Randomized on-line scheduling on three processors},
  journal      = {Oper. Res. Lett.},
  volume       = {32},
  number       = {2},
  pages        = {152--158},
  year         = {2004},
  url          = {https://doi.org/10.1016/j.orl.2003.05.003},
  doi          = {10.1016/J.ORL.2003.05.003},
  timestamp    = {Sat, 27 May 2017 14:23:42 +0200},
  biburl       = {https://dblp.org/rec/journals/orl/Tichy04.bib},
  bibsource    = {dblp computer science bibliography, https://dblp.org}
}

@article{DBLP:journals/ipl/Sgall97,
  author       = {Jir{\'{\i}} Sgall},
  title        = {A Lower Bound for Randomized On-Line Multiprocessor Scheduling},
  journal      = {Inf. Process. Lett.},
  volume       = {63},
  number       = {1},
  pages        = {51--55},
  year         = {1997},
  url          = {https://doi.org/10.1016/S0020-0190(97)00093-8},
  doi          = {10.1016/S0020-0190(97)00093-8},
  timestamp    = {Wed, 14 Jun 2017 20:37:17 +0200},
  biburl       = {https://dblp.org/rec/journals/ipl/Sgall97.bib},
  bibsource    = {dblp computer science bibliography, https://dblp.org}
}

@article{DBLP:journals/algorithmica/Seiden00,
  author       = {Steven S. Seiden},
  title        = {Online Randomized Multiprocessor Scheduling},
  journal      = {Algorithmica},
  volume       = {28},
  number       = {2},
  pages        = {173--216},
  year         = {2000},
  url          = {https://doi.org/10.1007/s004530010014},
  doi          = {10.1007/S004530010014},
  timestamp    = {Wed, 17 May 2017 14:25:11 +0200},
  biburl       = {https://dblp.org/rec/journals/algorithmica/Seiden00.bib},
  bibsource    = {dblp computer science bibliography, https://dblp.org}
}

@inproceedings{DBLP:conf/stoc/Albers02,
  author       = {Susanne Albers},
  editor       = {John H. Reif},
  title        = {On randomized online scheduling},
  booktitle    = {Proceedings on 34th Annual {ACM} Symposium on Theory of Computing,
                  May 19-21, 2002, Montr{\'{e}}al, Qu{\'{e}}bec, Canada},
  pages        = {134--143},
  publisher    = {{ACM}},
  year         = {2002},
  url          = {https://doi.org/10.1145/509907.509930},
  doi          = {10.1145/509907.509930},
  timestamp    = {Tue, 06 Nov 2018 11:07:04 +0100},
  biburl       = {https://dblp.org/rec/conf/stoc/Albers02.bib},
  bibsource    = {dblp computer science bibliography, https://dblp.org}
}

@inproceedings{DBLP:conf/stoc/AspnesAFPW93,
  author       = {James Aspnes and
                  Yossi Azar and
                  Amos Fiat and
                  Serge A. Plotkin and
                  Orli Waarts},
  editor       = {S. Rao Kosaraju and
                  David S. Johnson and
                  Alok Aggarwal},
  title        = {On-line load balancing with applications to machine scheduling and
                  virtual circuit routing},
  booktitle    = {Proceedings of the Twenty-Fifth Annual {ACM} Symposium on Theory of
                  Computing, May 16-18, 1993, San Diego, CA, {USA}},
  pages        = {623--631},
  publisher    = {{ACM}},
  year         = {1993},
  url          = {https://doi.org/10.1145/167088.167248},
  doi          = {10.1145/167088.167248},
  timestamp    = {Tue, 06 Nov 2018 11:07:04 +0100},
  biburl       = {https://dblp.org/rec/conf/stoc/AspnesAFPW93.bib},
  bibsource    = {dblp computer science bibliography, https://dblp.org}
}

@article{DBLP:journals/mst/EbenlendrS15,
  author       = {Tom{\'{a}}s Ebenlendr and
                  Jir{\'{\i}} Sgall},
  title        = {A Lower Bound on Deterministic Online Algorithms for Scheduling on
                  Related Machines Without Preemption},
  journal      = {Theory Comput. Syst.},
  volume       = {56},
  number       = {1},
  pages        = {73--81},
  year         = {2015},
  url          = {https://doi.org/10.1007/s00224-013-9451-6},
  doi          = {10.1007/S00224-013-9451-6},
  timestamp    = {Wed, 14 Jun 2017 20:30:34 +0200},
  biburl       = {https://dblp.org/rec/journals/mst/EbenlendrS15.bib},
  bibsource    = {dblp computer science bibliography, https://dblp.org}
}

@article{DBLP:journals/orl/EpsteinS00,
  author       = {Leah Epstein and
                  Jir{\'{\i}} Sgall},
  title        = {A lower bound for on-line scheduling on uniformly related machines},
  journal      = {Oper. Res. Lett.},
  volume       = {26},
  number       = {1},
  pages        = {17--22},
  year         = {2000},
  url          = {https://doi.org/10.1016/S0167-6377(99)00062-0},
  doi          = {10.1016/S0167-6377(99)00062-0},
  timestamp    = {Wed, 14 Jun 2017 20:34:17 +0200},
  biburl       = {https://dblp.org/rec/journals/orl/EpsteinS00.bib},
  bibsource    = {dblp computer science bibliography, https://dblp.org}
}

\newpage
\appendix
\section{Online Selfish Scheduling on Unrelated Machines}\label{sec:unrelated}

In this section, we discuss online selfish scheduling on unrelated machines, where each selfish machine~$i$ has private processing times~$p_{ij}$ for each arriving job~$j$ and reports its processing time upon the arrival of each job.  
The goal of each machine is to maximize its utility, defined as the total payment minus the total completion time.  
We show that the mechanism proposed by~\citet{DBLP:conf/stoc/NisanR99} remains truthful even in the online setting.  
Moreover, when the objective is the makespan, its competitive ratio remains~$m$, which is tight as shown by~\citet{DBLP:conf/stoc/0001KK23}; when the objective is the~$\ell_q$ norm, the competitive ratio is~$m^{1-1/q}$.

\begin{algorithm}[tb]
\caption{Truthful Mechanism on Unrelated Machines~\cite{DBLP:conf/stoc/NisanR99}}
\label{alg:unrelated}
\KwIn{The online reported processing time $\vp= \{p_{ij}\}_{i\in [m],j\in [n]}$.}
\KwOut{ An online allocation $\vX = \{x_{ij}\}_{i\in [m],j\in [n]}$ and payment $\vP =  \{P_{ij}\}_{i\in [m],j\in [n]}$. }

\For{each arriving job $j$ }
{
Let \(i\) and \(i'\) denote the machines with the shortest and second shortest processing times, respectively\;

Set \(x_{ij} \gets 1\) and \(P_{ij} \gets p_{i'j}\); set \(x_{kj}\) and \(P_{kj}\) to~0 for all \(k \neq i\).

}
\Return{$\vX = \{x_{ij}\}_{i\in [m],j\in [n]}$ and $\vP = \{P_{ij}\}_{i\in [m],j\in [n]}$.}

\end{algorithm}

\begin{theorem}
  \Cref{alg:unrelated} is truthful for selfish machines and achieves a competitive ratio of~$m$ for the makespan objective and~$m^{1-1/q}$ for the~$\ell_q$ norm objective.
\end{theorem}

\begin{proof}
The truthfulness and the makespan competitive ratio analysis are almost identical to the analysis given in~\cite{DBLP:conf/stoc/NisanR99}, because the entire mechanism considers each job independently.  
Whether or not the machines know future jobs, their behavior remains the same as in the offline setting.  

For the $\ell_q$ norm objective, we observe that assigning each job to the machine with the shortest processing time always yields the optimal $\ell_1$ norm objective.  
By H\"older's inequality, for any \(q > 1\), the ratio of a vector's $\ell_1$ norm to its $\ell_q$ norm lies between~1 and~\(m^{1-1/q}\).  
Therefore, the solution produced by our mechanism achieves a competitive ratio of at most~\(m^{1-1/q}\) for the $\ell_q$ norm objective.
\end{proof}


\section{Analysis of The $\ell_q$ Norm Objective}
\label{sec:app-norm}
\label{sec:lq-analysis}
\begin{algorithm}[htbp]
\caption{Level-Based Proportional Allocation for $\ell_q$ Norm}
\label{alg:lq}
\KwIn{The reported speeds $\{s_i\}_{i\in [m]}$ and the reported job sizes $\{p_j\}_{j\in [n]}$ which shows up online.}
\KwOut{ An online fractional allocation $\vX = \{x_{ij}\}_{i\in [m],j\in [n]}$ }
Sort $\{s_i\}_{i\in [m]}$ in descending order \;
$\forall i\in [m]$, set $\bs_i \gets \max \{ 2^z \mid 2^z\leq s_i,~z\in \Z\}$ \tcp*{round down $s_i$ to the nearest power of $2$}

Ignore the machines with $\bs_i < \bs_1/m$ and partition the remaining machines into $K=\lfloor \log m \rfloor+1$ groups: $\forall 1\leq k \leq K$, let $\cM_k \gets \{ i\in [m] \mid \bs_i = r_k \}$ and $\cpM{k} \gets \bigcup_{k'\leq k} \cM_k$, where $r_k:= \frac{\bs_1}{2^{k-1}}$ \;
\For{each arriving job $j$ }
{
\eIf{$j=1$}
{
$\forall i\in \cM_1$, set $x_{ij}\gets \frac{1}{|\cM_1|}$ \;

Set $\Lambda \gets p_1/r_1$  \tcp*{initialize a guessed optimal objective}

Set $C_{i,k} \gets 0$ for all $k \in [K]$ and $i \in \cpM{k}$ \tcp*{$K=\lfloor \log m \rfloor+1$}

}
{Define $\kj \gets \max \bigg(\{k \in [K]  \mid p_j \leq r_k \cdot \Lambda \} \cup \{1\}\bigg) $ \tcp*{decide job $j$'s level}

\For{each $i\in \cpM{\kj}$}
{
    $x_{ij} \leftarrow \frac{\bs_i^{\gamma}}{\sum_{i'\in \cpM{\kj}} \bs_{i'}^{\gamma}}$ \tcp*{$\gamma = \frac{q}{q-1}$} 
    
    $C_{i,\kj} \leftarrow C_{i,\kj} + x_{ij}\cdot \frac{p_j}{\bs_i}$ \; 
}
\If(\tcp*[f]{double-without-the-last}){ $k(j) \leq K-1$ } 
{
\If(\tcp*[f]{a super large job}){$p_j / r_1 > \Lambda $ }{Keep doubling $\Lambda$ until it is at least $p_j/r_1$ \; Reset $C_{i,k} \gets 0$ for all $k \in [K]$ and $i \in \cpM{k}$ \tcp*{start a new phase}}
\ElseIf(\tcp*[f]{a saturated level}){ $ || \vC_{\kj} ||_q > \Lambda$ }{ Double $\Lambda$ \; Reset $C_{i,k} \gets 0$ for all $k \in [K]$ and $i \in \cpM{k}$ \tcp*{start a new phase}}
}
}
}

\Return{$\vX = \{x_{ij}\}_{i\in [m],j\in [n]}$.}
\end{algorithm}

\subsection{Two-sided Montonicity for \Cref{alg:lq}}
\label{sec:norm-monotone}

This subsection proves the two-sided monotonicity of~\cref{alg:lq}. 

\begin{lemma}[Two-Sided Monotonicity, Corresponding to~\cref{lem:makespan:monontone}] \label{lem:lq:monontone}
The allocation returned by~\cref{alg:lq} is both machine-side and job-side monotone.
\end{lemma}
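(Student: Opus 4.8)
The plan is to replay the monotonicity analysis of \Cref{sec:makespan} almost line by line, isolating the single place where the exponent $q$ (equivalently $\gamma=\frac{q}{q-1}$) genuinely matters. Recall that \cref{alg:lq} differs from \cref{alg:makespan} only in (i) the proportional weights, $\bs_i$ being replaced by $\bs_i^{\gamma}$, and (ii) the doubling test, $C_{1,\kj}>\Lambda$ being replaced by $\|\vC_{\kj}\|_q>\Lambda$. Both modifications are still scale-invariant, and (ii) still depends only on the level-$k$ sub-vector. Hence every \emph{structural} ingredient of the makespan argument survives with purely cosmetic changes: \cref{lem:lq:lambda-form}; the reduction of machine-side monotonicity to the two feasibility-set facts (I)~$i^*\in\cpM{\kj}\Rightarrow i^*\in\cpM{k'(j)}'$ and (II)~$\cpM{k'(j)}'\subseteq\cpM{\kj}$; the reduction of (I),(II) to $\Lambda$-Stability $\Lambda_j\ge\Lambda_j'\ge\frac{1}{2}\Lambda_j$; the ``good job'' machinery (the $\ell_q$-analogue of \cref{lem:equaltime}); Job-Level Consistency (the analogue of \cref{pro:makespan:job-consist}, which only talks about levels, $\Lambda$-values, and the power-of-two rounding); and the scaling/dominance corollary used to derive $\Lambda_j'\ge\frac{1}{2}\Lambda_j$. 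None of these uses the weights beyond ``positive and scale-invariant'' or ``all jobs of one level share one split''.

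For job-side monotonicity, allocate-before-doubling again makes $\kj=\max\big(\{k:p_j\le r_k\Lambda\}\cup\{1\}\big)$ depend on a $\Lambda$ that is independent of $p_j$, so $\kj$ is non-increasing in $p_j$, and the nested chain $\cpM{1}\subseteq\cpM{2}\subseteq\cdots$ shrinks by discarding the slowest group. It then suffices to show that $\sum_{i\in\cpM{k}}\frac{x_{ij}}{s_i}=\big(\sum_{i\in\cpM{k}}\bs_i^{\gamma}/s_i\big)\big/\big(\sum_{i\in\cpM{k}}\bs_i^{\gamma}\big)$ is non-decreasing in $k$. Going from $\cpM{k-1}$ to $\cpM{k}$ adds exactly the machines with $\bs_i=r_k$, and because all speeds are rounded to powers of two, such a machine has $s_i<2r_k$ while every kept machine $i'$ satisfies $s_{i'}\ge\bs_{i'}\ge 2r_k$; thus every \emph{added} value $1/s_i$ strictly exceeds every \emph{kept} value $1/s_{i'}$, so the $\bs_i^{\gamma}$-weighted average of $1/s_i$ only increases. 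This works for every $\gamma>1$; the degenerate case $q=1$ ($\gamma=\infty$, ``all on the fastest machine'') is monotone by inspection, and $q=\infty$ recovers the makespan proof verbatim.

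For machine-side monotonicity, the reduction to (I),(II) and then to $\Lambda$-Stability is identical: once $\cpM{k'(j)}'\subseteq\cpM{\kj}$ and $\bs_{i^*}'\ge\bs_{i^*}$ (all other $\bs_i$ unchanged), writing $x_{i^*j}=\frac{\bs_{i^*}^{\gamma}}{\bs_{i^*}^{\gamma}+B}$ and $x_{i^*j}'=\frac{(\bs_{i^*}')^{\gamma}}{(\bs_{i^*}')^{\gamma}+A}$ with $A\le B$ gives $x_{i^*j}'\ge x_{i^*j}$, since $y\mapsto\frac{y}{y+A}$ increases in $y$ and decreases in $A$. What remains is $\Lambda$-Stability, whose proof feeds the ``good job'' lemma with one non-structural input: the $\ell_q$-analogue of Job-Makespan Monotonicity (\cref{pro:makespan:job-mono}). \textbf{This is the only place $q$ really enters, and the literal per-machine statement ``$x_{1h}/\bs_1\ge x_{1h}'/\bs_1'$'' can fail for $1<q<\infty$} (for instance when $i^*=1$ speeds up but many machines sit at half its speed). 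The fix is to argue at the level of the whole level-$k$ completion-time vector. Because every level-$k$ job gets the same split, $\vC_k=S_k\cdot u_k$ with $S_k$ the total level-$k$ size in the current phase and $u_k:=\big(\bs_i^{\gamma-1}\big/\sum_{t\in\cpM{k}}\bs_t^{\gamma}\big)_{i\in\cpM{k}}$; using the exponent identity $(\gamma-1)q=\gamma$ (precisely the reason $\gamma=\frac{q}{q-1}$ was chosen) one gets $\|u_k\|_q=\big(\sum_{i\in\cpM{k}}\bs_i^{\gamma}\big)^{-1/\gamma}$, hence $\|\vC_k\|_q=S_k\big(\sum_{i\in\cpM{k}}\bs_i^{\gamma}\big)^{-1/\gamma}$. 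This closed form is monotone in the right directions: enlarging the feasible machine set, or raising some $\bs_i$ to $\bs_i'$, only decreases it. Combining with Job-Level Consistency (which yields $J'\subseteq J$, so the level-$(j{-}1)$ total size is at least as large in the original world) and $\cpM{\kj}\subseteq\cpM{k'(j)}'$ (as in the makespan case, valid because $\Lambda_{j-1}=\Lambda_{j-1}'$ and job $j-1$ is not at the last level, thanks to double-without-the-last), one obtains $\|\vC_{k(j-1)}\|_q\ge\|\vC'_{k'(j-1)}\|_q$; plugging this into the $\ell_q$-analogue of \cref{lem:equaltime} forces $\Lambda$ to double in the original world whenever it doubles in the speed-up world, which is the contradiction establishing $\Lambda_j\ge\Lambda_j'$. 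The bound $\Lambda_j'\ge\frac{1}{2}\Lambda_j$ then follows exactly as in \cref{lem:makespan:lambda_stable} from scale-invariance of \cref{alg:lq} and dominance monotonicity of $\Lambda$ in the speed profile.

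The main obstacle — essentially the only genuinely new step — is pinning down and proving this $\ell_q$-analogue of \cref{pro:makespan:job-mono}: one must notice that the coordinatewise comparison of $C_{1,h}$ breaks, rewrite the level-$k$ load vector in the separated form $S_k u_k$, and exploit $(\gamma-1)q=\gamma$ to collapse $\|u_k\|_q$ into a function of $\sum_{i\in\cpM{k}}\bs_i^{\gamma}$ alone, so that the needed monotonicity becomes transparent. Everything else is a direct transcription of \Cref{sec:makespan} with $\bs_i\mapsto\bs_i^{\gamma}$ and $C_{1,\cdot}\mapsto\|\vC_{\cdot}\|_q$, and the $q=\infty$ specialization recovers \cref{lem:makespan:lambda_stable} and the remainder of \Cref{sec:makespan}.
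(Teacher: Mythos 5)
Your proposal is correct and follows essentially the same route as the paper's proof in \Cref{sec:norm-monotone}: the same reduction to properties (\rom{1}), (\rom{2}) and $\Lambda$-Stability via the good-job lemma, with Job-Level Consistency carried over verbatim. The one genuinely new ingredient you isolate --- replacing the per-machine comparison by the collapsed form $\|\vC_k\|_q = S_k\bigl(\sum_{i\in\cpM{k}}\bs_i^{\gamma}\bigr)^{-1/\gamma}$ using $\gamma=\tfrac{q}{q-1}$ --- is exactly the paper's Job-$\ell_q$-Norm Monotonicity (\cref{pro:lq:job-mono}) and its proof.
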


\begin{proof}
    We still consider the scenario that machine $i^*$ (without loss of generality, we assume $i^*$ is the smallest index in its group) increases its speed from $\bs_{i^*}$ to $\bs_{i^*}'=2\bs_{i^*}$.
    The goal is to show that 
    \begin{equation}
        \label{eqn:lq:each-job-increase}
        \forall j\in [n], \quad x_{i^*j} \leq x_{i^*j}'~.
    \end{equation}
     The inequality trivially holds for the first arriving job or the jobs with $x_{i^*j}=0$. For each remaining job $j$ with $x_{i^*j}>0$, we know that $i^* \in \cpM{\kj}$. Therefore, to prove that $x_{i^*j} \leq x_{i^*j}'$, we must first establish that after the speed increases, machine $i^*$ still remains part of the feasible machine set $\cpM{k'(j)}'$ of job $j$ (i.e., $i^*\in \cpM{k'(j)}'$); otherwise, $x_{i^*j}'$ would be $0$. Furthermore, according to the proportional allocation rule of the algorithm,  
     \[x_{i^*j} \leftarrow \frac{\bs_{i^*}^{\gamma}}{\sum_{t\in \cpM{\kj}} \bs_{t}^{\gamma}}\]
     and 
     \[x_{i^*j}' \leftarrow \frac{\bs_{i^*}'^{\gamma}}{\sum_{t\in \cpM{k'(j)}'} \bs_{t}'^{\gamma}}~.\]
     Since $\bs_{i^*} < \bs_{i^*}'$, if we can demonstrate that $\cpM{k'(j)}' \subseteq \cpM{\kj}$, which means the feasible machine set for job $j$ does not gain any new machine due to the speed increase of machine $i^*$, then $x_{i^*j} \leq x_{i^*j}'$ can be proved. In summary, we need the following two properties for each job $j$ to prove~\cref{eqn:lq:each-job-increase}:
    \begin{enumerate}[left=2em]
        \item[(\rom{1})]  If $i^*\in \cpM{\kj}$, then $i^*\in \cpM{k'(j)}'$~.
        \item[(\rom{2})] $\cpM{k'(j)}' \subseteq \cpM{\kj}$~.
    \end{enumerate}

    Similarly, these two properties can be proved straightforward by the $\Lambda$ Stability (\cref{lem:lq:lambda_stable}). 
\end{proof}

We employ the same definition of a good job $t$: 
\begin{itemize}[left=2em]
    \item The guessed $\opt$ is the same in both worlds, i.e., $\Lambda_t=\Lambda_t'$.
    \item The current completion time (every level) of the speed-up world  $C_{i,k}'=0$ for all $k \in [K]$ and $i \in \cpM{k}$ at the arrival time of job $t$ (before allocating $t$).
\end{itemize}

\begin{lemma}[Corresponding to \cref{lem:equaltime}]
\label{lem:lq:equaltime}
    Let $t$ be a good job (time), and $j$ be the first time $\Lambda_j' > \Lambda_t'$. We have $\Lambda_j > \Lambda_t$. 
\end{lemma}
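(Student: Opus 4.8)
The plan is to run the proof of~\cref{lem:equaltime} essentially unchanged, replacing every comparison of the scalars $C_{1,\cdot}$ by an $\ell_q$-norm comparison of the full vectors $\vC_{\cdot}$. Assume for contradiction that $\Lambda_j=\Lambda_t$. Combined with the definition of a good job and the minimality of $j$, this forces the equality chain $\Lambda_j=\Lambda_{j-1}=\dots=\Lambda_t=\Lambda'_{j-1}=\dots=\Lambda'_t$, and $\Lambda'$ must be doubled right after allocating job $j-1$. If this were caused by a super-large job, i.e.\ $p_{j-1}/r_1'>\Lambda'_{j-1}$, then $p_{j-1}/r_1\ge p_{j-1}/r_1'>\Lambda'_{j-1}=\Lambda_{j-1}$ because $r_1\le r_1'$, so $\Lambda$ would also be doubled after job $j-1$, contradicting $\Lambda_j=\Lambda_t$. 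Hence the increase is caused by a saturated level, $\|\vC'_{k'(j-1)}\|_q>\Lambda'_{j-1}$, and it suffices to show that in the original world, right after allocating job $j-1$, $\|\vC_{k(j-1)}\|_q\ge\|\vC'_{k'(j-1)}\|_q$: since $k(j-1)\le k'(j-1)\le K-1$ the last-level exception does not apply, so this forces $\Lambda$ to double after job $j-1$ as well, again contradicting the equality chain.

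For the norm comparison I would reuse the two structural properties, now in $\ell_q$ form. \emph{Job-Level Consistency} transfers verbatim: its statement and proof only concern which half-open interval $(\Lambda\cdot r_{k+1},\Lambda\cdot r_k]$ the job size falls into, relying on~\cref{lem:lq:lambda-form} and the integrality of the exponents defining those intervals, none of which involves $\gamma$. The new ingredient is the $\ell_q$ counterpart of~\cref{pro:makespan:job-mono}: for any job $h$ with $\Lambda_h=\Lambda_h'$ and $k'(h)\le K-1$, the $\ell_q$ norm of $h$'s contribution vector $\langle x_{ih}p_h/\bs_i\rangle_{i\in\cpM{k(h)}}$ is at least the $\ell_q$ norm of $\langle x_{ih}'p_h/\bs_i'\rangle_{i\in\cpM{k'(h)}'}$. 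The key calculation --- the single place where the choice $\gamma=\frac{q}{q-1}$ is used --- is that $x_{ih}p_h/\bs_i=\frac{\bs_i^{\gamma-1}p_h}{\sum_{t\in\cpM{k(h)}}\bs_t^{\gamma}}$, so the contribution vector is a scalar multiple of $\langle\bs_i^{\gamma-1}\rangle_{i\in\cpM{k(h)}}$, whose $q$-th power $\ell_q$ norm collapses through the identity $(\gamma-1)q=\gamma$ to $\sum_{i\in\cpM{k(h)}}\bs_i^{\gamma}$; hence the contribution norm equals $\frac{p_h}{\left(\sum_{i\in\cpM{k(h)}}\bs_i^{\gamma}\right)^{1-1/q}}$. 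Since $\cpM{k(h)}\subseteq\cpM{k'(h)}'$ (exactly as argued in the makespan case) and $\bs_i\le\bs_i'$ for every $i$, the denominator only grows in the speed-up world, which gives the claimed per-job domination.

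The endgame then mirrors the makespan argument, and is in fact cleaner. Put $J=\{h\in[t,j-1]:k(h)=k(j-1)\}$ and $J'=\{h\in[t,j-1]:k'(h)=k'(j-1)\}$. Double-without-the-last gives $k'(j-1)\le K-1$, hence $k'(h)\le K-1$ for all $h\in J'$; the equality chain together with $r_i\le r_i'$ then gives $k(h)\le k'(h)\le K-1$, and since neither world ever doubles on a super-large job, every $h\in J'$ is normal, so Job-Level Consistency yields $J'\subseteq J$. Because all level-$k(j-1)$ contribution vectors are parallel to $\langle\bs_i^{\gamma-1}\rangle_{i\in\cpM{k(j-1)}}$ (likewise in the speed-up world), the accumulated vectors are again parallel, so using the good-job condition (every $C_{i,k}'$ reset to $0$ at $t$, every $C_{i,k}\ge 0$), right after allocating job $j-1$ we get $\|\vC_{k(j-1)}\|_q\ge\frac{\sum_{h\in J}p_h}{\left(\sum_{i\in\cpM{k(j-1)}}\bs_i^{\gamma}\right)^{1-1/q}}$ and $\|\vC'_{k'(j-1)}\|_q=\frac{\sum_{h\in J'}p_h}{\left(\sum_{i\in\cpM{k'(j-1)}'}\bs_i'^{\gamma}\right)^{1-1/q}}$; combining $J'\subseteq J$ with the monotone denominator from the previous paragraph gives $\|\vC_{k(j-1)}\|_q\ge\|\vC'_{k'(j-1)}\|_q>\Lambda_{j-1}$, the desired contradiction. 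The step I expect to be the main obstacle is pinning down this closed form for $\|\vC_k\|_q$ (the total size of level-$k$ jobs in the current phase, divided by $\left(\sum_{i\in\cpM{k}}\bs_i^{\gamma}\right)^{1-1/q}$) and deducing the $\ell_q$-monotonicity property from it; once that is available, the combinatorial scaffolding --- normalcy, $J'\subseteq J$, and the indispensability of double-without-the-last for applying Job-Level Consistency to job $j-1$ --- carries over from~\cref{lem:equaltime} with no real change.
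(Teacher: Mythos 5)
Your core machinery coincides with the paper's proof: the closed form $p_h\cdot\bigl(\sum_{i\in\cpM{k(h)}}\bs_i^{\gamma}\bigr)^{-1/\gamma}$ for a single job's contribution norm (using $(\gamma-1)q=\gamma$) is exactly the paper's Job-$\ell_q$-Norm Monotonicity (\cref{pro:lq:job-mono}), Job-Level Consistency is reused verbatim as in the paper, the exclusion of the super-large case, the normalcy of the jobs in $J'$, $J'\subseteq J$ via double-without-the-last, and the final contradiction $\|\vC_{k(j-1)}\|_q\geq\|\vC'_{k'(j-1)}\|_q>\Lambda'_{j-1}=\Lambda_{j-1}$ all match.

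There is, however, a gap in the endgame. You replace the second good-job condition, which is only the componentwise dominance $C_{i,k}\geq C'_{i,k}$ for all $k\in[K]$ and $i\in\cpM{k}$ at $t$'s arrival, by the stronger assumption that every $C'_{i,k}$ is $0$ at $t$. Under the lemma's actual hypothesis the speed-up world's current phase may have started before $t$ (the last doubling of $\Lambda'$ need not coincide with $t$), so jobs arriving before $t$ can already have contributed to $\vC'_{k'(j-1)}$; then your claimed equality $\|\vC'_{k'(j-1)}\|_q=\sum_{h\in J'}p_h\cdot\bigl(\sum_{i\in\cpM{k'(j-1)}'}\bs_i'^{\gamma}\bigr)^{-1/\gamma}$ fails — the true value can only be larger — and since this is precisely the quantity you must bound from above, the chain breaks. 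The paper handles this by working with the full contributing sets $U=\{h\leq j-1 \mid k(h)=k(j-1),\ \Lambda_h=\Lambda_{j-1}\}$ and the analogous $U'$, and invoking the dominance condition to compare the pre-$t$ parts $U\setminus J$ and $U'\setminus J'$ before applying $J'\subseteq J$ and \cref{pro:lq:job-mono}. Your argument does establish the special case in which all $C'_{i,k}$ are reset at $t$, which happens to be the only case invoked in the proof of \cref{lem:lq:lambda_stable}, so the downstream development would survive; but as a proof of the lemma as stated it needs the $U,U'$ bookkeeping (or a restatement of the lemma with your stronger hypothesis).
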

\begin{proof}

The proof is similar to the proof in makespan minimization. We also consider the contradiction that $\Lambda_j = \Lambda_t$ and can prove that the increase of $\Lambda_j'$ right after allocating job $j-1$ is not due to the arrival of a super large job. Then
$\Lambda'$ should increase due to a saturated level: $||\vC_{k'(j-1)}'||_q > \Lambda_{j-1}'$. Again, it seems impossible that after allocating job $j-1$, $\Lambda'$ is doubled due to $||\vC_{k'(j-1)}' ||_q> \Lambda_{j-1}'$ while $\Lambda$ has not been doubled, as machines in the speed-up world are more powerful. By carefully analyzing the allocation of each job $h\in [t,j-1]$, we can finally draw a contradiction by showing that $\Lambda$ must be doubled right after allocating job $j-1$ due to \[||\vC_{k(j-1)}||_q\geq ||\vC_{k'(j-1)}'||_q > \Lambda_{j-1}'=\Lambda_{j-1}~,\]
    which completes the proof.
    
We also want to construct job-level consistency and job-$\ell_q$-norm monotonicity properties to prove that the machine power in the speed-up world is stronger. The proof of job-level consistency property is the same as that in the makespan setting. The difference happens in the job-$\ell_q$-norm monotonicity. 

    \begin{property}[Job-Level Consistency, Corresponding to~\cref{pro:makespan:job-consist}]\label{pro:lq:job-consist}
         We call a job $h$ \emph{normal} if \\ $\max\{k(h), k'(h)\} \leq K-1$ and $p_h \leq \min \{ r_1 \Lambda_h,~ r_1' \Lambda_h'\}$. For any two \emph{normal} jobs $h_1$ and $h_2$ with $ \Lambda_{h_1}=\Lambda_{h_2} $ and $\Lambda_{h_1}' = \Lambda_{h_2}'$, we have $k(h_1)=k(h_2)$ if and only if $k'(h_1)=k'(h_2)$. 
    \end{property}

    \begin{property}[Job-$\ell_q$-Norm Monotonicity, Corresponding to~\cref{pro:makespan:job-mono}]\label{pro:lq:job-mono}
        For each job $h$ with $\Lambda_h=\Lambda_h'$, when $k'(h)\leq K-1$ (not the last level),  $\displaystyle \left({\sum_{i\in \cpM{k(h)}}\left(\frac{x_{ik(h)}}{\bs_i}\right)^q}\right)^{\frac{1}{q}}\geq \left({\sum_{i\in \cpM{k'(h)}'}\left(\frac{x_{ik'(h)}}{\bs_i'}\right)^q}\right)^{\frac{1}{q}}$. 
    \end{property}

The proof of~\cref{pro:lq:job-consist} is verbatim as that of~\cref{pro:makespan:job-consist} as it is unrelated to the objective. For the proof of~\cref{pro:lq:job-mono}, we defer it to later and assume it is correct to complete the overall argument.
Let us define $J=\{h\in [t,j-1]\mid k(h)=k(j-1)\}$ and $J'=\{h \in [t,j-1] \mid k'(h)=k'(j-1)\}$. 
In addition, in the $\ell_q$ norm environment, we define $U=\{h\leq j-1\mid k(h)=k(j-1) \text{ and }  \Lambda_h = \Lambda_{j-1}\}$ and $U'=\{h \leq j-1 \mid k'(h)=k'(j-1)  \text{ and } \Lambda_h' = \Lambda_{j-1}' \}$.
As job $j-1$ triggers doubling and we use the double-without-the-last trick\footnote{Note that if the algorithm does not use the double-without-the-last trick, we cannot prove that job $j-1$ is normal, which results in the failure to apply~\cref{pro:makespan:job-consist}. }, we have $k'(j-1)\leq K-1$ and therefore, $k'(h)\leq K-1$ for any $ h\in J'$. Additionally, due to~\cref{eqn:contradict} and $r_i\leq r_i'$ for each $i$, we have $k(h) \leq k'(h)\leq K-1$ for any $h\in J'$.
Then, since the algorithm never performs doubling due to a super large job in both worlds, we can conclude that all jobs in $J'$ are normal. By~\cref{pro:lq:job-consist}, for each job $h\in J'$, we have $k(h) = k(j-1)$ since $k'(h) = k'(j-1)$, and therefore, $J'\subseteq J$.
Finally, combining with~\Cref{pro:lq:job-mono}, right after job ${j-1}$ is allocated, we have 

\begin{align*}
        &||\vC_{k(j-1)}||_q - ||\vC_{k'(j-1)}'||_q \\
        &=\left({\sum_{i\in \cpM{k(j-1)}}C_{i,k(j-1)}^q}\right)^{\frac{1}{q}} - \left({\sum_{i\in \cpM{k'(j-1)}}C_{i,k'(j-1)}^q}\right)^{\frac{1}{q}} \\
        &=\left({\sum_{i\in \cpM{k(j-1)}}\left(\sum_{h\in U}x_{ih}\cdot \frac{p_{h}}{\bs_i}\right)^q}\right)^{\frac{1}{q}} - \left({\sum_{i\in \cpM{k'(j-1)}'}\left(\sum_{h\in U'}x_{ih}'\cdot \frac{p_{h}}{\bs_i'}\right)^q}\right)^{\frac{1}{q}}\\
        &=\sum_{h\in U}p_{h}\cdot\left({\sum_{i\in \cpM{k(j-1)}}\left(\frac{x_{ik(j-1)}}{\bs_i}\right)^q}\right)^{\frac{1}{q}}-\sum_{h\in U'}p_{h}\cdot\left({\sum_{i\in \cpM{k'(j-1)}'}\left(\frac{x'_{ik'(j-1)}}{\bs_i'}\right)^q}\right)^{\frac{1}{q}}\\
        & \geq \sum_{h\in J}p_{h}\cdot\left({\sum_{i\in \cpM{k(j-1)}}\left(\frac{x_{ik(j-1)}}{\bs_i}\right)^q}\right)^{\frac{1}{q}}-\sum_{h\in J'}p_{h}\cdot\left({\sum_{i\in \cpM{k'(j-1)}'}\left(\frac{x'_{ik'(j-1)}}{\bs_i'}\right)^q}\right)^{\frac{1}{q}} \tag{the second condition of good jobs}\\
        & \geq \sum_{h\in J'}p_{h}\cdot\left({\sum_{i\in \cpM{k(j-1)}}\left(\frac{x_{ik(j-1)}}{\bs_i}\right)^q}\right)^{\frac{1}{q}}-\sum_{h\in J'}p_{h}\cdot\left({\sum_{i\in \cpM{k'(j-1)}'}\left(\frac{x'_{ik'(j-1)}}{\bs_i'}\right)^q}\right)^{\frac{1}{q}} \tag{$J'\subseteq J$} \\
        & \geq 0~. \tag{\cref{pro:lq:job-mono}}
    \end{align*}
 Notice that the first inequality can be proved by the second condition of good jobs because at the arrival time of job $t$, the corresponding $||\vC_{k(j-1)}||_q $ and $||\vC_{k(j-1)}'||_q $ are equal to $\displaystyle \sum_{h\in U\setminus J}p_{h}\cdot\left({\sum_{i\in \cpM{k(j-1)}}\left(\frac{x_{ik(j-1)}}{\bs_i}\right)^q}\right)^{\frac{1}{q}}$ and $\displaystyle \sum_{h\in U'\setminus J'}p_{h}\cdot\left({\sum_{i\in \cpM{k'(j-1)}'}\left(\frac{x'_{ik'(j-1)}}{\bs_i'}\right)^q}\right)^{\frac{1}{q}}$, respectively.
\end{proof}

\begin{proof}[Proof of~\cref{pro:lq:job-mono}]
    
    According to our proportional allocation rule, we have
    \begin{align*}
         \left({\sum_{i\in \cpM{k(h)}}\left(\frac{x_{ik(h)}}{\bs_i}\right)^q}\right)^{\frac{1}{q}}&=\left(\sum_{i\in \cpM{k(h)}}\left(\frac{\bs_i^{\gamma}}{\sum_{i'\in \cpM{k(h)}} \bs_{i'}^{\gamma}}\cdot \frac{1}{\bs_i}\right)^q\right)^{\frac{1}{q}}\\
         &=\left(\sum_{i\in \cpM{k(h)}}\bs_{i}^\gamma\right)^{-\frac{1}{\gamma}}\tag{$\gamma=\frac{q}{q-1}$ }~.
    \end{align*}
    Similarly, we have $\left({\sum_{i\in \cpM{k'(h)}'}\left(\frac{x_{ik'(h)}}{\bs_i'}\right)^q}\right)^{\frac{1}{q}}=\left(\sum_{i\in \cpM{k'(h)}'}\bs_{i}'^{\gamma}\right)^{-\frac{1}{\gamma}}$.
    
    For each job $h$ with $\Lambda_h=\Lambda_h'$ and $k'(h) \leq K-1$, the set of feasible machines will only increase with machine acceleration, i.e., $\cpM{k(h)} \subseteq \cpM{k'(h)}'$.
    We have 
    \begin{align*}
        \left(\sum_{i\in \cpM{k(h)}}\bs_{i}^\gamma\right)^{\frac{1}{\gamma}}&\leq \left(\sum_{i\in \cpM{k'(h)}'}\bs_{i}^\gamma\right)^{\frac{1}{\gamma}}\tag{$\cpM{k(h)}\subseteq \cpM{k'(h)}'$ }\\
        &\leq \left(\sum_{i\in \cpM{k'(h)}'}\bs_{i}'^{\gamma}\right)^{\frac{1}{\gamma}}~.\\
    \end{align*}
    Therefore, $\left({\sum_{i\in \cpM{k(h)}}\left(\frac{x_{ik(h)}}{\bs_i}\right)^q}\right)^{\frac{1}{q}}\geq \left({\sum_{i\in \cpM{k'(h)}'}\left(\frac{x_{ik'(h)}}{\bs_i'}\right)^q}\right)^{\frac{1}{q}}$.
\end{proof}

\begin{lemma}[$\Lambda$-Stability, Corresponding to~\cref{lem:makespan:lambda_stable}]\label{lem:lq:lambda_stable}
    For any job $j\geq 2$, we have 
    \[\Lambda_j \geq \Lambda_j' \geq \frac{1}{2}\Lambda_j~.\]
\end{lemma}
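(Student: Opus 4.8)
The plan is to run the proof of \cref{lem:makespan:lambda_stable} essentially verbatim, since the only structural change in \cref{alg:lq} is that the saturated-level test reads $\|\vC_{\kj}\|_q > \Lambda$ instead of $C_{1,\kj} > \Lambda$, and \cref{lem:lq:equaltime} already packages the $\ell_q$ analogue of the key lemma \cref{lem:equaltime}. As before I would first prove $\Lambda_j \geq \Lambda_j'$ and then use it as a black box for $\Lambda_j' \geq \tfrac12\Lambda_j$.

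For $\Lambda_j \geq \Lambda_j'$, take $j$ to be the first job violating it (the base case $j=2$ is immediate since $\Lambda_2 = p_1/r_1 \geq p_1/r_1' = \Lambda_2'$). By \cref{lem:lq:lambda-form} both quantities lie in $\{p_1 2^z : z\in\Z\}$, so $\Lambda_j < \Lambda_j'$ forces $\Lambda_j \leq \tfrac12\Lambda_j'$, and $\Lambda'$ must have been doubled right after allocating job $j-1$. The super-large-job case is ruled out exactly as in the makespan proof: if $p_{j-1}/r_1' > \Lambda_{j-1}'$, then $\Lambda_j \geq p_{j-1}/r_1 \geq p_{j-1}/r_1'$, and since $\Lambda_j'$ is by construction the least power of two exceeding $p_{j-1}/r_1'$, we get $\Lambda_j \geq \Lambda_j'$, a contradiction. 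Hence the doubling came from a saturated level, $\|\vC_{k'(j-1)}'\|_q > \Lambda_{j-1}'$, and as this branch doubles only once, $\Lambda_{j-1}' = \tfrac12\Lambda_j'$. Now pick the first job $t$ with $\Lambda_t' = \Lambda_{j-1}'$; at $t$'s arrival every $C_{i,k}'$ has just been reset to $0$, so the pressure condition $C_{i,k} \geq C_{i,k}' = 0$ is vacuous, and the squeeze $\Lambda_j \geq \Lambda_t \geq \Lambda_t' = \tfrac12\Lambda_j' \geq \Lambda_j$ (first from $t<j$, second from minimality of $j$ together with $t<j$, last from $\Lambda_j \le \tfrac12\Lambda_j'$) collapses all inequalities to equalities, so $t$ is a good job and $\Lambda_t = \Lambda_t' = \Lambda_{j-1} = \Lambda_{j-1}' = \Lambda_j$. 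Since $j$ is then precisely the first time $\Lambda'$ exceeds $\Lambda_t'$, \cref{lem:lq:equaltime} gives $\Lambda_j > \Lambda_t$, contradicting $\Lambda_j = \Lambda_t$. This settles $\Lambda_j \geq \Lambda_j'$.

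For the reverse inequality I would argue exactly as in the makespan case. First, $\Lambda_j \geq \Lambda_j'$ (applied along a chain of single-machine doublings, noting that a speed increase not crossing a power-of-two boundary changes nothing) yields the corollary that if one speed profile dominates another coordinatewise, the dominated one has pointwise-larger $\Lambda_j$. Then compare the speed-up world (only $i^*$ doubled) with the world in which \emph{all} speeds are doubled: since \cref{alg:lq}'s proportional rule $x_{ij}\propto \bs_i^{\gamma}$, its grouping $\cM_k$, and both doubling tests ($p_j/r_1 > \Lambda$ and $\|\vC_{\kj}\|_q > \Lambda$) are scale-invariant, the all-doubled world runs identically to the original except that every $\Lambda_j$ is halved to $\Lambda_j/2$; as it dominates the speed-up world, $\Lambda_j' \geq \Lambda_j/2$.

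The only place where genuine $\ell_q$-specific work is needed is inside \cref{lem:lq:equaltime}, and the crux there is \cref{pro:lq:job-mono}: the per-unit $\ell_q$ ``footprint'' of a normal level-$k$ job, $\big(\sum_{i\in\cpM{k}}(x_{ik}/\bs_i)^q\big)^{1/q} = \big(\sum_{i\in\cpM{k}}\bs_i^{\gamma}\big)^{-1/\gamma}$, only decreases when the feasible set grows and the speeds increase — a one-line monotonicity once the exponent identity $q(\gamma-1)=\gamma$, $1-1/q = 1/\gamma$ is used. With that and the (objective-agnostic, hence verbatim) \cref{pro:lq:job-consist} to get $J'\subseteq J$, the termwise comparison $\|\vC_{k(j-1)}\|_q \geq \|\vC_{k'(j-1)}'\|_q$ goes through via Minkowski, and the argument is structurally identical to makespan. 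I expect the main obstacle to be purely bookkeeping: making sure the double-without-the-last trick is still invoked to certify that job $j-1$ is \emph{normal}, which is exactly what licenses \cref{pro:lq:job-consist}; the analytic content is light.
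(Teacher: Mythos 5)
Your proposal is correct and takes essentially the same approach as the paper: the paper's own proof of this lemma simply reruns the argument of \cref{lem:makespan:lambda_stable} — ruling out the super-large-job case, locating a good job $t$ via the reset of the $C'_{i,k}$'s and the squeeze of equalities, invoking \cref{lem:lq:equaltime}, and obtaining the reverse inequality from the domination corollary plus scale-invariance of \cref{alg:lq}. You also correctly observe that the only genuinely $\ell_q$-specific work (\cref{pro:lq:job-mono}) is already encapsulated in \cref{lem:lq:equaltime}, which is exactly how the paper organizes it.
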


\begin{proof}
  Building on~\cref{lem:lq:equaltime}, the $\Lambda$-Stability can be proved similarly by finding a good job $t$. The reason why we can find a good job is verbatim as that in~\cref{lem:makespan:lambda_stable}.  
\end{proof}

\subsection{Competitive Analysis}
\label{sec:norm-ratio}



Similar to the mechanism for makespan, our final mechanism executes~\cref{alg:lq} and performs an online rounding to obtain an integral solution: for each job $j$, we independently assign it to machine $i$ with probability $x_{ij}$. Again, this independent rounding does not impact the truthfulness, as the expected utility for each machine or job remains unchanged.

Now, we analyze the competitive ratio. Use $\vY=\{y_{ij}\in \{0,1\}\}$ to denote the random allocation by our final mechanism. Let $\obj(\vX)$, $\obj(\vY)$ and $\opt$ be the objectives corresponding to $\vX$, $\vY$, and the optimal solution, respectively.
We show the following:

\begin{lemma} [Competitive Ratio] \label{thm:lq_ratio}
    With probability at least $1-1/m^2$, $\obj(\vY) \leq O(m^{\frac{1}{q}(1-\frac{1}{q})} \cdot (\log m)^{1+\frac{1}{q^2}} ) \cdot \opt$. Further, the expected competitive ratio of $\vY$ is also guaranteed to $ O(m^{\frac{1}{q}(1-\frac{1}{q})} \cdot (\log m)^{1+\frac{1}{q^2}} )$.
\end{lemma}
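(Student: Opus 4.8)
The plan is to follow the same three–step skeleton used for \Cref{thm:makespan_ratio}: (a) show that the fractional allocation $\vX$ output by \Cref{alg:lq} is speed-size-feasible, (b) bound its fractional objective by $O(\log m)\cdot\opt$, and (c) feed these two facts into an $\ell_q$ analogue of \Cref{lem:rounding} with a larger rounding loss. With such an analogue in hand, \Cref{thm:lq_ratio} itself is a one-liner, exactly as \Cref{thm:makespan_ratio} followed from \Cref{lem:rounding}, \Cref{lem:makespan:roundable} and \Cref{lem:makespan:fractionalcompetitive}: the high-probability bound comes straight from the rounding lemma applied with $\alpha=O(\log m)$, and the worst-case $O(m^2)$-competitiveness (again a consequence of speed-size-feasibility) upgrades it to the expectation bound.

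For (a) I would mirror \Cref{lem:makespan:roundable}. The slow-machine condition is untouched, as it only uses the speed rounding and the discarding of machines with $\bs_i<\bs_1/m$; the job-size condition again reduces to showing $\Lambda_n=O(\bopt)=O(\opt)$. The only genuinely new point is that the $\ell_q$ doubling trigger $\lVert\vC_{\kj}\rVert_q>\Lambda$ still certifies $\opt>\Lambda$. Here I use that the level-$\kj$ jobs sharing a common $\Lambda$ are distributed over $\cpM{\kj}$ proportionally to $\bs_i^{\gamma}$ with $\gamma=\tfrac{q}{q-1}$, so by Hölder's inequality $\lVert\vC_{\kj}\rVert_q=\bigl(\sum_h p_h\bigr)\big/\bigl(\sum_{i\in\cpM{\kj}}\bs_i^{\gamma}\bigr)^{1/\gamma}$, and this is exactly the \emph{minimum} $\ell_q$ norm of machine completion times attainable by \emph{any} placement of a total load $\sum_h p_h$ onto the machines of $\cpM{\kj}$. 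Since each such job is too large for any machine outside $\cpM{\kj}$ (its size exceeds $r_{\kj+1}\Lambda\geq r_{\kj+1}\bopt$), the optimum must place all of them inside $\cpM{\kj}$, whence $\opt\geq\lVert\vC_{\kj}\rVert_q>\Lambda$; the remaining argument (take the last job with $\Lambda_j\leq 2\bopt$, rule out a super-large trigger using $p_j/r_1\leq\bopt$, derive a contradiction) is verbatim that of \Cref{lem:makespan:roundable}. Step (b) is a direct generalization of \Cref{lem:makespan:fractionalcompetitive}: by the triangle inequality over phases and over the $K=\lfloor\log m\rfloor+1$ levels, $\obj(\vX)=\lVert\langle C_i\rangle_i\rVert_q\leq\sum_t\sum_k\lVert\vC_k^{(t)}\rVert_q$, and in phase $t$ every level contributes at most $\Lambda^{(t)}$ before its last job while that last job contributes at most $\Lambda^{(t+1)}$; geometric growth of $\Lambda^{(t)}$ with $\Lambda_{\max}=O(\opt)$ then gives $\obj(\vX)\leq O(\log m)\cdot\opt$.

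The main obstacle is step (c), the $\ell_q$ rounding lemma. The structural facts I rely on are: all machines in a speed group $\cM_g$ receive identical fractional load, and every job that can land on group $g$ contributes at most $c\cdot\opt$ to a machine's completion time there (by speed-size-feasibility). A naive per-machine Chernoff bound only shows each $\hat C_i=O(\log m)\cdot\opt$ and hence the trivial $m^{1/q}$ loss; to reach $m^{\frac1q(1-\frac1q)}$ one must use that, within a group, only a \emph{small and concentrated} number of machines actually receive a large job. Concretely I would: (i) observe that the integral loads $(\hat L_i)_{i\in\cM_g}$ are negatively associated — the one-hot job-assignment vectors are NA and the loads are increasing functions of disjoint coordinate blocks — so Chernoff bounds apply to counts; (ii) for each group $g$ and each dyadic level $l$, bound the number $N_{g,l}$ of group-$g$ machines with $\hat C_i\in(\opt\,2^{l},\opt\,2^{l+1}]$ by $\E[N_{g,l}]\leq m_g\bigl(e\,C^{\mathrm{frac}}_g/(2^{l}\opt)\bigr)^{2^{l}/c}$ and hence $N_{g,l}=O\!\left(\E[N_{g,l}]+\log m\right)$ with high probability, with $N_{g,l}=0$ once $2^{l}$ exceeds $O(c\log m)$; (iii) split $\obj(\vY)^q=\sum_i\hat C_i^q$ into a ``well-rounded'' part (each $\hat C_i=O(C^{\mathrm{frac}}_i)$, which sums to $O(\obj(\vX)^q)$) and an ``excess'' part $\sum_{g,l}N_{g,l}\cdot(\opt\,2^{l+1})^q$ over the $O(\log\log m)$ relevant levels; (iv) plug in the concentration bounds, sum the geometric series, and apply the power-mean inequality $\lVert\langle C^{\mathrm{frac}}_i\rangle\rVert_1\leq m^{1-1/q}\lVert\langle C^{\mathrm{frac}}_i\rangle\rVert_q=m^{1-1/q}\obj(\vX)$ to turn $\sum_g m_g C^{\mathrm{frac}}_g$ into the decisive factor $m^{1-1/q}$.

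Putting these together with $\obj(\vX)=O(\log m)\cdot\opt$ should give $\obj(\vY)^q=O\!\left(m^{1-1/q}(\log m)^{q+1/q}\right)\opt^q$ with probability $1-1/m^2$, i.e. $\obj(\vY)=O\!\left(m^{\frac1q(1-\frac1q)}(\log m)^{1+1/q^2}\right)\opt$, and the worst-case $O(m^2)$-competitiveness promotes this to the expectation bound exactly as in the proof of \Cref{thm:makespan_ratio}. I expect the delicate part to be the bookkeeping in (iii)–(iv): one has to make sure the ``excess'' sum telescopes to something dominated by $m^{1-1/q}(\log m)^{q+1/q}\opt^q$ rather than the trivial $m\cdot\opt^q$, which is precisely where the grouping of machines by speed, the negative-association concentration of the counts $N_{g,l}$, and the $\ell_1$-versus-$\ell_q$ power-mean inequality all must be combined carefully; the makespan case ($q=\infty$) is the degenerate limit where this sum has a single relevant level and collapses back to the $O(\log m)$ loss of \Cref{lem:rounding}.
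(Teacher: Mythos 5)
Your overall skeleton is the paper's: speed-size-feasibility of \Cref{alg:lq} (the paper's \cref{lem:lq:roundable}), an $O(\log m)\cdot\opt$ bound on the fractional objective via Minkowski over phases and levels (\cref{lem:lq:fractionalcompetitive}), an $\ell_q$ rounding lemma (\cref{lem:lq:rounding}), and the $O(m^2)$ worst-case bound from speed-size-feasibility to convert the high-probability statement into an expectation bound. Your steps (a) and (b) essentially reproduce the paper's arguments (your H\"older computation even proves directly the optimality of the $\bs_i^{\gamma}$-proportional split that the paper imports from Im et al.). The divergence, and the gap, is in step (c), which is the heart of this lemma.

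In (iii) you split each speed group into an ``excess'' part counted on dyadic levels starting at $\opt$ and a ``well-rounded'' part for which you assert $\hat{C}_i=O(C^{\mathrm{frac}}_i)$. That assertion is false under independent rounding: speed-size-feasibility only caps a single job's contribution to a machine by $c\cdot\opt$, so a machine whose fractional completion time is, say, $\opt/m^{10}$ can still receive one whole job and end up with completion time $\Theta(\opt)$; such machines are invisible to your counts $N_{g,l}$ (which begin at $\opt$) and are not $O(C^{\mathrm{frac}}_i)$. Bounding them trivially by $\opt$ each gives a group contribution of up to $m_g\cdot\opt^q$, i.e.\ the trivial $m^{1/q}$ loss rather than $m^{\frac{1}{q}(1-\frac{1}{q})}$. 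What is missing is precisely a high-probability control of each group's \emph{total} ($\ell_1$) completion time, which is the paper's central step: by the power-mean inequality the fractional group sum is at most $2m_h^{1-\frac{1}{q}}\alpha\cdot\opt$, a Chernoff bound transfers this to the rounded loads, and then the elementary estimate $\sum_{i\in\cG_h}(R_i/s_i)^q\leq\bigl(\max_i R_i/s_i\bigr)^{q-1}\cdot\sum_{i\in\cG_h}R_i/s_i\leq m_h^{1-\frac{1}{q}}\bigl(O(\max\{\log m,\alpha\})\opt\bigr)^q$ finishes a group, with Jensen over the $O(\log m)$ groups supplying the extra $(\log m)^{1/q^2}$. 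Your negative-association/dyadic machinery could be repaired (extend the levels below $\opt$ down to the group's fractional average, or simply add the group $\ell_1$ Chernoff bound), but once the group $\ell_1$ bound is available the level-set counting is unnecessary; moreover the additive $\log m$ slack per (group, level) and the telescoping of $\sum_{g,l}N_{g,l}(2^{l}\opt)^q$ are exactly the unverified bookkeeping you flag, so as written the claimed bound $\obj(\vY)^q=O\bigl(m^{1-1/q}(\log m)^{q+1/q}\bigr)\opt^q$ is not established.
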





Similar to the proof structure of~\cref{thm:makespan_ratio}, we first give a rounding lemma, and then analyze the properties of the fractional solution returned by~\cref{alg:lq}.  

\begin{lemma}[$\ell_q$-Norm Approximately Roundability]\label{lem:lq:rounding}
   For any (online) fractional allocation $\vX$ that is speed-size-feasible, if $\obj(\vX) \leq \alpha \cdot \opt$, then natural independent rounding yields a competitive ratio of $O(m^{\frac{1}{q}(1-\frac{1}{q})} \cdot (\log m)^{\frac{1}{q^2}} \cdot \max\{\log m, \alpha\})$ with probability at least $1-O(1/m^2)$. Furthermore, the rounding is guaranteed to be $O(m^2)$-competitive in the worst case.  
\end{lemma}
\begin{proof}

    The proof also relies on concentration inequalities and the union bound, but compared with the makespan, the $\ell_q$-norm objective requires some additional tricks. 
    Use $\vY=\{y_{ij}\in \{0,1\}\}$ to denote the random allocation obtained through independent rounding. Similar to the proof of~\cref{thm:makespan_ratio}, we have for each machine $i$ and any $\delta \geq 2$,
    \begin{equation}\label{eq:lq:singlemachine}
        \Pr\left[ \sum_{j\in \cJ} y_{ij}\cdot \frac{p_j}{s_i} \geq (1+\delta) \cdot \alpha \cdot \opt \right] \leq \frac{1}{m^3} \cdot \exp{\left(-\frac{\delta\alpha}{2} + 3 \log m\right)}
    \end{equation}
    where we assume w.l.o.g. that the constant in the speed-size feasibility is $1$.

    If we then directly use the union bound and compute the $\ell_q$ norm, we can only derive a competitive ratio of $O(m^{\frac{1}{q}} \cdot \max\{\log m, \alpha \})$. To achieve a better ratio, we employ a more careful analysis.   
    By the mild assumption stated in the lemma, after ignoring the empty machines, the remaining machine can be partitioned into $O(\log m)$ groups $\cG_1,\cG_2,...,\cG_H$ such that any two machines~$a,b$ in one group $\cG_h$ satisfies $\frac{s_a}{s_b} \in [\frac{1}{2},2]$. In the following, we show that with high probability, the contribution of each group is efficiently bounded. 

    Consider an arbitrary group $\cG_h$ with $m_h$ machines. 
    Use $s_{min}$ and $s_{max}$ to denote the minimum and maximum speeds in this group, respectively.
    For each machine $i\in \cG_h$, let $L_i:=\sum_{j\in \cJ} x_{ij}\cdot p_j $ and $R_i := \sum_{j\in \cJ} y_{ij}\cdot p_j  $ be the fractional load and randomized load on it, respectively. Using the condition that $\vX$ is $\alpha$-competitive and the Jensen inequality, we can bound the $\ell_1$-norm of a group's completion times:
    \begin{align*}
        \sum_{i\in \cG_h} \frac{L_i}{s_i} & \leq \frac{1}{s_{min}} \cdot \sum_{i\in \cG_h} L_i \\ 
        & \leq \frac{1}{s_{min}} \cdot m_h^{1-\frac{1}{q}} \cdot  \left( \sum_{i\in \cG_h} L_i ^ q  \right)^{\frac{1}{q}} \tag{Jensen inequality} \\
        & \leq \frac{s_{max}}{s_{min}} \cdot m_h^{1-\frac{1}{q}}\cdot \left( \sum_{i\in \cG_h} \left(\frac{L_i}{s_i}\right)^{q}  \right)^{\frac{1}{q}} \\
        & \leq 2 \cdot m_h^{1-\frac{1}{q}}\cdot \alpha \cdot \opt .
    \end{align*}
    Then, applying Chernoff bound to the $\ell_1$-norm of group $\cG_h$, we have for any $\epsilon \geq 2$,

    \begin{equation}\label{eq:lq:group}
        \Pr \left[ \sum_{i\in \cG_h} \frac{R_i}{s_i} \geq 2 (1+\epsilon)\cdot m_h^{1-\frac{1}{q}}\cdot \alpha \cdot \opt \right] \leq \frac{1}{m^3}\exp{\left(-\epsilon \cdot  m_h^{1-\frac{1}{q}}\cdot \alpha + 3\log m\right)}. 
    \end{equation}
    
    Combing~\cref{eq:lq:singlemachine} and~\cref{eq:lq:group}, we have with probability at least $1-O(1/m^2)$, a good event happens, where any machine $i$ satisfies 
    \[ \frac{R_i}{s_i} \leq O(\max\{\log m, \alpha\}) \cdot \opt, \]
    and any group $\cG_h$ satisfies
    \[ \sum_{i\in \cG_h} \frac{R_i}{s_i} \leq O\left(m_h^{1-\frac{1}{q}} \cdot \max\{\log m, \alpha\}\right) \cdot \opt. \]

    Now, we prove an upper bound of a single group's $\ell_q$-norm objective when a good event occurs. By analyzing the partial derivative of the $\ell_q$ norm function, we have
    \[ \sum_{i\in \cG_h} \left(\frac{R_i}{s_i}\right)^q \leq  m_h^{1-\frac{1}{q}} \cdot \left(O(\max\{\log m, \alpha\}) \cdot \opt\right)^q, \]
    where $c$ is a constant. Summing over all groups, we have:
    \begin{align*}
        \obj(\vY) &= \left(\sum_{h\in [H]} \sum_{i\in \cG_h} \left(\frac{R_i}{s_i}\right)^q \right)^{\frac{1}{q}}\\
        & \leq O(\max\{\log m, \alpha\}) \cdot \opt \cdot  \left(\sum_{h\in [H]} m_h^{1-\frac{1}{q}}\right)^{\frac{1}{q}} \\
        & \leq O(\max\{\log m, \alpha\}) \cdot \opt \cdot H^{\frac{1}{q^2}} \cdot m^{\frac{1}{q}(1-\frac{1}{q})} \tag{Jensen inequality} \\
        & = O(m^{\frac{1}{q}(1-\frac{1}{q})} \cdot (\log m)^{\frac{1}{q^2}} \cdot \max\{\log m, \alpha\}) \cdot \opt \tag{$H=O(\log m)$}
    \end{align*}

    The final piece is similar to the proof in~\cref{lem:rounding}. As $\vX$ is speed-size-feasible, the objective is at most $O(m^2)\cdot \opt$ in the worst case.  
\end{proof}

Now we show that the solution returned by~\cref{alg:lq} satisfies the conditions in the lemma above by~\cref{lem:lq:roundable} and~\cref{lem:lq:fractionalcompetitive}.

\begin{lemma}\label{lem:lq:roundable}
    The fractional solution $\vX$ returned by~\cref{alg:lq} is speed-size-feasible.
\end{lemma}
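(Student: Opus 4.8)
The plan is to mirror the proof of~\cref{lem:makespan:roundable} almost verbatim, since the two pieces of the speed-size-feasible definition interact only mildly with the proportional rule or the doubling criterion. First I would dispatch the machine-speed constraint (part (2) of~\cref{def:speed_constraints}): the algorithm rounds each $s_i$ down to $\bs_i$ and explicitly discards every machine with $\bs_i < \bs_1/m$; since $\bs_1 \le s_1$ and $s_i \le 2\bs_i$ for every surviving machine, we get $s_i \ge \bs_i \ge \bs_1/m \ge s_1/(2m)$, which is the required bound with constant $c=2$. This argument is syntactically identical to the makespan case and uses nothing about $\gamma$ or the norm.

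For the job-size constraint (part (1)), I would again pass to the scaled machine set $\bcM$ and observe $\bopt \le 4\cdot\opt$ — the rounding loses at most a factor $2$ in each speed and ignoring the slow machines loses at most another factor $2$ in the optimal $\ell_q$ objective, by the same reasoning as before. Machines in $\cM_1$ are immediate: any job has $p_j/\bs_1 \le \bopt$. For a machine $i \notin \cM_1$, the level rule guarantees any job assigned to it has size at most $\bs_i \cdot \Lambda$, so it suffices to show the final guessed value satisfies $\Lambda_n \le 2\bopt$. I would argue by contradiction: take the last job $j$ with $\Lambda_j \le 2\bopt$; the doubling right after it cannot be triggered by a super-large job (else $\Lambda_{j+1}$ would still be at most a power of two times $p_j/r_1 \le \bopt$), so it is triggered by the saturated-level condition $\|\vC_{k(j)}\|_q > \Lambda_j$, and $\Lambda_j \ge \bopt$. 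The key combinatorial step is unchanged: letting $\cJ(k(j))$ be the jobs sharing level $k(j)$ and guess $\Lambda_j$, each such job must in the optimum be placed on some machine of $\cpM{k(j)}$ (its size exceeds $\bs_i\Lambda_j \ge \bs_i\bopt$ for every $i \notin \cpM{k(j)}$).

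The only genuinely new point — and the step I expect to be the mild obstacle — is bounding $\|\vC_{k(j)}\|_q$ rather than the scalar $C_{1,k(j)}$ by $\bopt$. Here the proportional exponent $\gamma = q/(q-1)$ is exactly what makes it work: with $x_{ij} = \bs_i^\gamma / \sum_{t\in\cpM{k(j)}}\bs_t^\gamma$, the processing time contributed to machine $i$ by the jobs of $\cJ(k(j))$ is $C_{i,k(j)} = \bs_i^{\gamma-1}\cdot \big(\sum_{h\in\cJ(k(j))} p_h\big)/\sum_t \bs_t^\gamma$, so $\|\vC_{k(j)}\|_q = \big(\sum_i \bs_i^{(\gamma-1)q}\big)^{1/q}\cdot\big(\sum_h p_h\big)/\sum_t\bs_t^\gamma$. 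Since $(\gamma-1)q = \gamma$ (because $\gamma-1 = 1/(q-1)$ and $\gamma = q/(q-1)$), this is $\big(\sum_t\bs_t^\gamma\big)^{1/q - 1}\cdot\sum_h p_h = \big(\sum_h p_h\big)/\big(\sum_t\bs_t^\gamma\big)^{1/\gamma}$, i.e. the weighted $\ell_\gamma$-average of the loads, which is a lower bound on $\bopt$ by Hölder's inequality applied to the optimum's assignment $\pi$: $\sum_h p_h = \sum_i \sum_{h\in\pi(i)} p_h \le \sum_i \bs_i \cdot (\text{opt completion time of }i) \le \big(\sum_i \bs_i^\gamma\big)^{1/\gamma}\cdot\big(\sum_i (\text{opt time})^q\big)^{1/q} \le \big(\sum_i\bs_i^\gamma\big)^{1/\gamma}\cdot\bopt$. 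Rearranging gives $\|\vC_{k(j)}\|_q \le \bopt$, contradicting $\|\vC_{k(j)}\|_q > \Lambda_j \ge \bopt$ and closing the proof. The $q=1$ (all-jobs-on-fastest) degenerate case is handled separately by the remark already in the text.
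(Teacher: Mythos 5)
Your proposal is correct and follows the same route as the paper's proof: the same reduction of the job-size constraint to showing $\Lambda_n \le 2\bopt$, the same contradiction via the last job $j$ with $\Lambda_j\le 2\bopt$ (ruling out the super-large-job trigger), and the same observation that the optimum must place every job of $\cJ(\kj)$ on a machine in $\cpM{\kj}$. The only difference is the final step: where the paper invokes the result of \citet{DBLP:conf/stoc/ImKPS18} that the $\gamma$-proportional rule is a fractionally optimal single-level assignment, you verify the bound $||\vC_{\kj}||_q \le \bopt$ directly by computing $||\vC_{\kj}||_q = \bigl(\sum_h p_h\bigr)/\bigl(\sum_t \bs_t^{\gamma}\bigr)^{1/\gamma}$ (using $(\gamma-1)q=\gamma$) and applying H\"older's inequality with conjugate exponents $\gamma$ and $q$ to the optimal assignment, which is a correct, self-contained substitute for that citation.
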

\begin{proof}
    This proof is similar to the proof of~\cref{lem:makespan:roundable}. The initial scaling and ignoring operations of the algorithm guarantee the machine-speed constraints and can be considered as creating a new machine set $\bcM$ with $\bopt \leq 4\cdot \opt$. Then, each machine in $\cM_1$ must satisfy the job-size constraint as any job $j\in \cJ$ has $p_j / \bs_1 \leq \bopt$. For each other machine $i\notin \cM_1$, according to our job-level criteria, the assigned job's size is bounded by $\bs_i \cdot \Lambda$. Thus, the remaining piece is to demonstrate that the final guess $\Lambda_n$ is at most $2\cdot \bopt$.
    
    

    We assume for contradiction that $\Lambda_n > 2\cdot \bopt$. Find the last iteration $j$ with $\Lambda_j \leq 2\cdot \bopt$. Clearly, the $\Lambda$ increase at the end of iteration $j$ must be due to $||\vC_{\kj}||_q > \Lambda_j$; otherwise, $\Lambda_{j+1}$ cannot be greater than $2\cdot \bopt$.
    Thus, we have $\Lambda_j = \Lambda_{j+1}/2 \geq \bopt$. 
    
    Use $\cJ (\kj) := \{h \in [n] \mid k(h)=\kj \text{ and } \Lambda_h = \Lambda_j \}$ to denote the set of jobs that have the same level and corresponding $\Lambda$ as job $j$.
    Our algorithm distributes these jobs proportionally across $\cpM{\kj}$. According to the definition of a job's level, each job $j'\in \cJ (\kj)$ satisfies $p_{j'} / \bs_i > \Lambda_j \geq\bopt$ for any machine $i\notin \cpM{\kj}$, implying that the optimal solution must allocate job $j'$ to a machine in $\cpM{\kj}$. Let $\pi(i)$ be the set of jobs in $\cJ (\kj)$ that are assigned to machine $i$ in the optimal solution. According to the conclusion in~\cite{DBLP:conf/stoc/ImKPS18} that our proportional allocation rule yields an optimal fractional assignment for each specific level, we have
    \[ ||\vC_{\kj}||_q \leq \left(\sum_{i\in \cpM{\kj}} \left(\sum_{j \in \pi(i)} p_j \right)^{q} \right)^{\frac{1}{q}} \leq \bopt \leq \Lambda_j,  \]
    which contradicts the fact that $\Lambda$ increases due to $||\vC_{\kj}||_q > \Lambda_j$ and completes the proof.
\end{proof}

\begin{lemma}\label{lem:lq:fractionalcompetitive}
    The fractional solution $\vX$ returned by~\cref{alg:lq} has $\obj(\vX) \leq O(\log m) \cdot \opt$.
    
\end{lemma}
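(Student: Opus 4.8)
The plan is to mirror the proof of \cref{lem:makespan:fractionalcompetitive}. As there, I first pass to the scaled machine set $\bcM$, for which $\bopt\le 4\opt$ and $\obj(\vX)\le\obj_{\bcM}(\vX)$ (the load of each machine is unchanged and $\bs_i\le s_i$, while ignored machines carry no load), so it suffices to bound $\obj_{\bcM}(\vX)=\|\vC\|_q$ by $O(\log m)\cdot\bopt$. The one structural change forced by the $\ell_q$ norm is that the fastest machine no longer carries the whole objective, so instead of reading $\obj$ off machine $1$ I decompose the completion-time vector as $\vC=\sum_{t}\sum_{k=1}^{K}\vC^{(t)}_k$, where $\vC^{(t)}_k$ collects the completion times contributed by the level-$k$ jobs arriving in phase $t$ (a vector supported on $\cpM{k}$), and I separate out the last level as $\vC^{\mathrm{all}}_K:=\sum_t\vC^{(t)}_K$, so that $\|\vC\|_q\le\|\vC^{\mathrm{all}}_K\|_q+\sum_t\sum_{k=1}^{K-1}\|\vC^{(t)}_k\|_q$ by the triangle inequality for $\ell_q$.

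For the first $K-1$ levels of a fixed phase $t$ I argue exactly as in \cref{lem:makespan:fractionalcompetitive}. The doubling test ``$\|\vC_{k(j)}\|_q>\Lambda$'' guarantees that within phase $t$ one has $\|\vC^{(t)}_k\|_q\le\Lambda^{(t)}$, the only possible exception being the level $k^*:=k(j^*)$ of the phase's last job $j^*$ (and $k^*\le K-1$ since doubling fires only for levels $\le K-1$; for the last phase, if it ends with no doubling, even this exception is absent). The lone contribution of $j^*$ to $\vC_{k^*}$ has $\ell_q$-norm $p_{j^*}/\|\bs_{\cpM{k^*}}\|_\gamma\le p_{j^*}/r_1\le\Lambda^{(t+1)}$ --- using that machine $1\in\cpM{k^*}$ so $\|\bs_{\cpM{k^*}}\|_\gamma\ge\bs_1=r_1\ge r_{k^*}$, together with the doubling rule ($p_{j^*}/r_1\le\Lambda^{(t)}\le\Lambda^{(t+1)}$ when $j^*$ is a saturating job, and $p_{j^*}/r_1\le\Lambda^{(t+1)}$ by construction when $j^*$ is super large). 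Hence $\sum_{k=1}^{K-1}\|\vC^{(t)}_k\|_q\le(K-1)\Lambda^{(t)}+\Lambda^{(t+1)}$; since consecutive $\Lambda^{(t)}$ at least double and $\Lambda_n\le2\bopt$ by (the proof of) \cref{lem:lq:roundable}, a geometric-series bookkeeping over phases gives $\sum_t\big((K-1)\Lambda^{(t)}+\Lambda^{(t+1)}\big)=O(K)\cdot\bopt=O(\log m)\cdot\bopt$.

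The genuinely new piece, and the step I expect to be the main obstacle, is the last level $k=K$: it was harmlessly absorbed in the makespan proof, but here there can be arbitrarily many level-$K$ jobs, so it needs a standalone bound. The key point is that $\cpM{K}$ is the entire scaled machine set and every level-$K$ job is split proportionally to $\bs_i^{\gamma}$, so $\vC^{\mathrm{all}}_K$ is a non-negative scalar multiple of the fixed direction $\langle\bs_i^{\gamma-1}\rangle_i$; consequently the norms add up, $\sum_t\|\vC^{(t)}_K\|_q=\|\vC^{\mathrm{all}}_K\|_q$, and --- by the optimality of the proportional rule (the fact of \cite{DBLP:conf/stoc/ImKPS18} already used in \cref{lem:lq:roundable}) --- $\|\vC^{\mathrm{all}}_K\|_q$ equals the \emph{fractional} optimum of the sub-instance consisting of exactly the level-$K$ jobs on $\bcM$. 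Deleting all other jobs from an optimal schedule only decreases every load, so this fractional sub-instance optimum is at most $\bopt$; hence $\|\vC^{\mathrm{all}}_K\|_q\le\bopt$.

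Combining the three estimates through the triangle inequality yields $\obj_{\bcM}(\vX)\le\bopt+O(\log m)\cdot\bopt=O(\log m)\cdot\bopt$, and therefore $\obj(\vX)\le\obj_{\bcM}(\vX)\le O(\log m)\cdot\opt$. The only routine verifications are the exponent identities $(\gamma-1)q=\gamma$ and $1-\tfrac1q=\tfrac1\gamma$ (with $\gamma=\tfrac{q}{q-1}$), which are what make ``proportional to $\bs_i^{\gamma}$'' the $\ell_q$-optimal split and give the clean closed form $\|\vC^{\mathrm{all}}_K\|_q=\big(\sum_{j:\,k(j)=K}p_j\big)/\|\bs_{\cpM{K}}\|_\gamma$, together with the geometric summation of the $\Lambda^{(t)}$.
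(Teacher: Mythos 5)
Your proof is correct, and it follows the paper's route in its skeleton --- pass to the scaled machine set $\bcM$ with $\bopt \le 4\opt$, decompose the completion-time vector over levels and phases, apply Minkowski, bound each per-phase, per-level piece for levels $\le K-1$ by $\Lambda^{(t)}$ plus a $\Lambda^{(t+1)}$ term for the phase's last job, and finish with the geometric growth of $\Lambda^{(t)}$ and $\Lambda_n \le 2\bopt$ from \cref{lem:lq:roundable}. The genuine difference is your treatment of the last level: the paper applies the per-phase bound $\bigl(\sum_i (C^{(t)}_{i,k})^q\bigr)^{1/q} \le \Lambda^{(t)}+\Lambda^{(t+1)}$ uniformly to all $k \in [K]$, but because of double-without-the-last the doubling test never constrains level $K$, so that per-phase inequality is not justified for $k=K$ (a long stream of level-$K$ jobs in an early phase with small $\Lambda^{(t)}$ can exceed it). You instead aggregate all level-$K$ contributions across phases, use that they all point in the fixed direction $\langle \bs_i^{\gamma-1}\rangle_{i\in\cpM{K}}$ to get the closed form $\bigl(\sum_{j:k(j)=K}p_j\bigr)/\|\bs_{\cpM{K}}\|_\gamma$, and bound this by $\bopt$ via the optimality of the proportional split and monotonicity under deleting jobs from the optimal schedule. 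This standalone level-$K$ bound is exactly what is needed to make the argument airtight, and it buys you a cleaner justification than the paper's own phrasing at no cost in the final $O(\log m)$ bound; the rest of your exponent bookkeeping ($(\gamma-1)q=\gamma$, $1-\tfrac1q=\tfrac1\gamma$) and the last-job bound $p_{j^*}/r_1 \le \Lambda^{(t+1)}$ check out.
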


\begin{proof}
    In this proof, we also assume that the algorithm assigns jobs to the scaled machine set $\bcM$ with $\bopt \leq 4\cdot \opt$.
    The basic proof idea is to first split $\obj(\vX)$ into $O(\log m)$ parts and then show each part is at most a constant factor of $\bopt$.
    Let $\Lambda^{(t)}$ and $C_{i,k}^{(t)}$ be the guessed $\Lambda$ value and the final cumulated completion time of machine $i$ by jobs from level $k$ in phase $t$. Use $T_i$ to represent the final completion time of machine $i$. We have 
    \begin{align*}
        \obj(\vX) &= \left(\sum_{i\in \cM} T_i^q \right)^{\frac{1}{q}} \\
        & = \left(\sum_{i\in \cM} \left(\sum_{k,t} C_{i,k}^{(t)} \right)^q \right)^{\frac{1}{q}} \\
        & \leq \sum_{k\in [K]}\sum_t \left(\sum_{i\in \cM} \left( C_{i,k}^{(t)} \right)^q \right)^{\frac{1}{q}}\tag{Minkowski inequality}~ \\
        & = \sum_t \left(\sum_{i\in \cM} \left( C_{i,K}^{(t)} \right)^q \right)^{\frac{1}{q}} + \sum_{k\in [K-1]}\sum_t \left(\sum_{i\in \cM} \left( C_{i,k}^{(t)} \right)^q \right)^{\frac{1}{q}} \\
        & \leq \bopt + \sum_{k\in [K-1]}\sum_t \left(\sum_{i\in \cM} \left( C_{i,k}^{(t)} \right)^q \right)^{\frac{1}{q}}~,
    \end{align*}
    where the last inequality follows the fact that we assign each job in level $K$ to all machines proportionally to the $\gamma$-th power of their speeds, achieving a lower bound on the optimal solution.
    
    Proving the theorem is sufficient to show that for each $k\leq K-1$, $\sum_t \left(\sum_{i\in \cM} \left( C_{i,k}^{(t)} \right)^q \right)^{\frac{1}{q}}$ is $O(1)\cdot \opt$. Similar to the proof of~\cref{lem:makespan:fractionalcompetitive}, we can discuss separately the last job in a phase and the other jobs. Applying the Minkowski inequality again, we have for each phase $t$, 
        \[\left(\sum_{i\in \cM} \left( C_{i,k}^{(t)} \right)^q \right)^{\frac{1}{q}} \leq \Lambda^{(t)} + \Lambda^{(t+1)} .\]

    As $\Lambda^{(t)}$ increases exponentially and the final $\Lambda$ is at most $2\cdot \bopt$, we have $\obj(\vX) \leq  O(\log m) \cdot \opt$.
\end{proof}

\begin{proof}[Proof of~\cref{thm:lq_ratio}]
    The theorem can be proved directly by the above lemmas.
    In \cref{lem:lq:rounding} and \cref{lem:lq:roundable}, we prove that the returned fractional allocation is $O(\log m)$-competitive and speed-size-feasible.  
    Thus, independent rounding returns a $O(m^{\frac{1}{q}(1-\frac{1}{q})} \cdot (\log m)^{1+\frac{1}{q^2}} )$-competitive solution with probability at least $1-1/m^2$. As the rounding is always $O(m^2)$-competitive, the claimed competitive expectation can be obtained.
\end{proof}
\subsection{Proof of \Cref{thm:lq}}
Finally, we conclude the proof of \Cref{thm:lq}. The same as the makespan version, we use independent rounding to round the fractional algorithm in \Cref{alg:lq}. We preserve the two-sided monotonicity in expectation, so the randomized algorithm is two-sided implementable in expectation. The competitive ratio has already been proved in \Cref{thm:lq_ratio}. For the running time, the randomized algorithm runs in polynomial time for the same reason as the makespan version. 



\section{Algorithms That Seem Truthful but Aren't}\label{sec:ce}

As we mentioned before, being well-behaved, or almost well-behaved, is an important intuition for achieving machine-side monotonicity. However, it is worth pointing out that maintaining only a well-behavior property is not sufficient. 

In this section, we present two natural online algorithms that seem to be machine-side truthful intuitively but are found to be not truthful.

\subsection{~\citet{DBLP:journals/scheduling/LiLW23}'s Well-Behaved Mechanism}
The first mechanism we discuss is given by~\citet{DBLP:journals/scheduling/LiLW23}, which is job-side truthful and machine-side almost well-behaved. The well-behavior property is originally used to describe the property of a machine with a higher speed that always has more load than a slower machine. In~\citet{DBLP:journals/scheduling/LiLW23}'s work, they focus on a stronger well-behavior property with makespan, where the faster machine always has no smaller makespan. They prove their mechanism achieves an makespan almost well-behavior property. This property only requires a machine $i$ to have no smaller makespan than the machine $i'$ only if $s_i\geq 2s_{i'}$, which is a relaxation of the makespan well-behavior property. They prove the algorithm achieves machine-side truthfulness under the case $p_j=1,\forall j\geq 1$, and case $m=2$. Therefore, it is a good candidate to achieve machine-side truthfulness. We present their algorithm briefly as follows. 

\begin{enumerate}[left=2em]
    \item Round the speed $s_i$ submitted by machine down to the largest power of $2$. That is, if $s_i\in [2^k,2^{k+1})$, then announce $s_i=2^k$.
    \item Rank the machine as $s_1\geq s_2\geq \ldots\geq s_m$. For the machines with the same speed, break tie according to the makespan, which means the machine with a larger makespan has a smaller rank.
    \item Set price $\rho_i=\sum_{i'\leq i}\pi_{i'}$ for machine $i$ before job $j$ comes, where $\pi_i=\frac{s_i}{s_{i-1}}(C_{i-1}-C_i)$ and $C_i$ is the makespan of machine $i$ before job $j$ comes. 
\end{enumerate}


Their algorithm satisfies the following two basic properties. 
\begin{enumerate}[left=2em]
    \item If scheduling job $j$ on machine $i$ breaks the well-behavior property (w.r.t the makespan), then for job $j$, the cost of machine $i-1$ is cheaper than that of machine $i$.
    \item If scheduling job $j$ on machine $i$
results in a makespan for machine $i$ that is strictly less than the makespan of machine $i-1$, then the cost for job $j$ to be scheduled on machine $i$ must be strictly lower than that on machine $i-1$. Consequently, job $j$ will not choose machine $i-1$.
    
\end{enumerate}

These properties hold since $$C_{i-1}+\frac{p_j}{s_{i-1}}+\rho_{i-1}\geq C_{i}+\frac{p_j}{s_{i}}+\rho_{i}\Leftrightarrow C_{i-1}\geq C_{i}+\frac{p_j}{s_{i}}$$.



Their algorithm is shown to be truthful in special case $p_j=1$ and case $m=2$. It is worth noting that to prove the truthfulness, they need to round the speed to the power of $4$ other than $2$ for the case $m=2$. 
It might seem that we can use larger round factor $a'$ to maintain machine-side truthfulness when $m$ becomes large. However, we construct a hard instance to show that their approach cannot achieve machine-side truthfulness in the case $m=3$ no matter how large $a'$ is. 




Assuming that the algorithm rounds down the machine speed to an integer power of $a'\geq 1$, we fix a parameter $a \geq 2$ that $a$ is an integer power of $a'$. 
We have three machines. 
$$
s_1=a^{x+1},~s_2=a^{x},~s_3=a^0=1.
$$
Here, $x$ is an arbitrary integer such that $x \geq 2$. 
The instance has six jobs arriving in the following order.
$$
p_1=a^{x+1},
~p_2=a^{x}-\epsilon,
~p_3=1,
~p_4=a^{x+1}\cdot k,
~p_5=a^{x} \cdot k + \frac{1}{a} + \frac{\epsilon}{2}
,~p_6=\frac{k}{2} 
~.
$$
Here, $k$ is super large and $\epsilon$ is super small, we may have some different constraints on $k$ and $\epsilon$, which will be all satisfied when $k$ is big enough 
 and $\epsilon$ is small enough. 

Their mechanism works as follows. (Refer to the allocation in \cref{fig:wb-ce1}.) 
For the first job $p_1$, the mechanism will always allocate it on $s_1$ since all the machines have makespan $0$. For the second job $p_2$, the makespan of machine $2$ after scheduling $p_2$ is strictly smaller than the makespan of machine $1$ since the small $\epsilon$. If $p_2$ schedules on machine $3$, it breaks the well-behavior property w.r.t. makespan. Hence $p_2$ must schedule on machine $2$. Then, for job $p_3$, when $\epsilon$ is small enough, scheduling $p_3$ on either machine $2$ or machine $3$ will break the well-behavior property. We will schedule it on machine $1$. When $p_4$ comes, by setting a large enough $k$, machine $1$ is still the only option to keep the well-behavior property, so we schedule $p_4$ on machine $1$. When $p_5$ arrives, again because $k$ is large enough, scheduling on machine $3$ is impossible. On the other hand, if we schedule on machine $2$, the makespan will be $k + \frac{1}{a^{x+1}} - \frac{\epsilon}{2a^{x}}$ still slightly smaller than the {makespan} of machine $1$, which is $k + \frac{1}{a^{x+1}}$. By the second property, we cannot schedule it on machine $1$, so machine $2$ is the only choice. For the final job $p_6$, if we schedule it on machine $3$, the makespan will be $k/2$, smaller than the makespan of machine $2$, so again according to the second property, we may schedule it on machine $1$ or machine $3$. Let us calculate their costs.
For machine $1$, the cost is 
$$
C_{1} + \frac{p_6}{s_1} + \rho_1=C_{1} + \frac{p_6}{s_1}=\frac{1}{a^{x+1}}\cdot (p_1+p_3+p_4 + p_6) = k + \frac{k}{2a^{x+1}} + \frac{1}{a^{x+1}} + 1 = k + \frac{k}{2a^{x+1}} + O(1)~.
$$
For machine $3$, the cost is
$$
C_{3} + \frac{p_6}{s_3} + \rho_3=C_{3} + \frac{p_6}{s_3}+\frac{s_2}{s_1}(C_1-C_2)+\frac{s_3}{s_2}(C_2-C_3)=k/2 + \frac{1}{a^x} \cdot \frac{p_2 + p_5}{a^x} = k/2 + \frac{k}{a^x} + O(1)~.
$$
When $a \geq 2$ and $x \geq 2$, it is straightforward to see the cost to schedule on machine $1$ is more expensive, so $p_6$ will choose machine $3$. 


Now, considering that machine $3$ speeds up to $a$, their mechanism works as follows. (Refer to the allocation in \cref{fig:wb-ce2}.) 
$p_1$ and $p_2$ will choose machine $1$ and machine $2$, respectively. However, the difference happens on $p_3$. We only know that we will not schedule $p_3$ on machine $2$ when $\epsilon$ is small enough according to property 1. 
Then we argue that $p_3$ will be scheduled on machine $3$ due to lower cost. The cost of $p_3$ on machine $1$ is 
$$
C_{1} + \frac{p_3}{s_1} + \rho_1=C_{1} + \frac{p_3}{s_1}=1+\frac{1}{a^{x+1}}
$$ 
and that on machine $3$ is
$$
C_{3} + \frac{p_3}{s_3} + \rho_3=\frac{p_3}{s_3}+\frac{s_2}{s_1}(C_1-C_2)+\frac{s_3}{s_2}(C_2-C_3)=\frac{1}{a}+\frac{1}{a}\cdot\frac{\epsilon}{a^x}+\frac{1}{a^{x-1}}\cdot(1-\frac{\epsilon}{a^x})~.
$$
It is straightforward to verify that when $\epsilon$ is sufficiently small, machine~$3$ is the cheaper option. Hence, job $p_3$ is assigned to machine~$3$. Subsequently, job $p_4$ is still scheduled on machine~$1$, since the parameter $k$ is large enough that assigning $p_4$ to either $s_2$ or $s_3$ would violate the well-behavior property.

However, the situation changes upon the arrival of job $p_5$. Since machine~$1$ no longer processes job $p_3$, assigning $p_5$ to either $s_2$ or $s_3$ would violate the well-behavior property with respect to the makespan. Consequently, $p_5$ must be scheduled on machine~$1$. Moreover, this change alters the subsequent assignment: the final job cannot be placed on machine~$3$, as doing so would again violate the well-behavior property.

Finally, machine $3$ gets $p_6 = k/2$ before speeding up, and gets $p_3=1\ll k/2$ after speeding up. Therefore, the load of machine $3$ is not monotone when it speeds up. By \Cref{lem:machine-side-implementable}, their mechanism cannot be implemented by a machine-side truthful algorithm with payments.

\begin{figure}[htbp]
    \centering
    \begin{subfigure}
        {0.49\linewidth}
		\includegraphics[width=\textwidth]{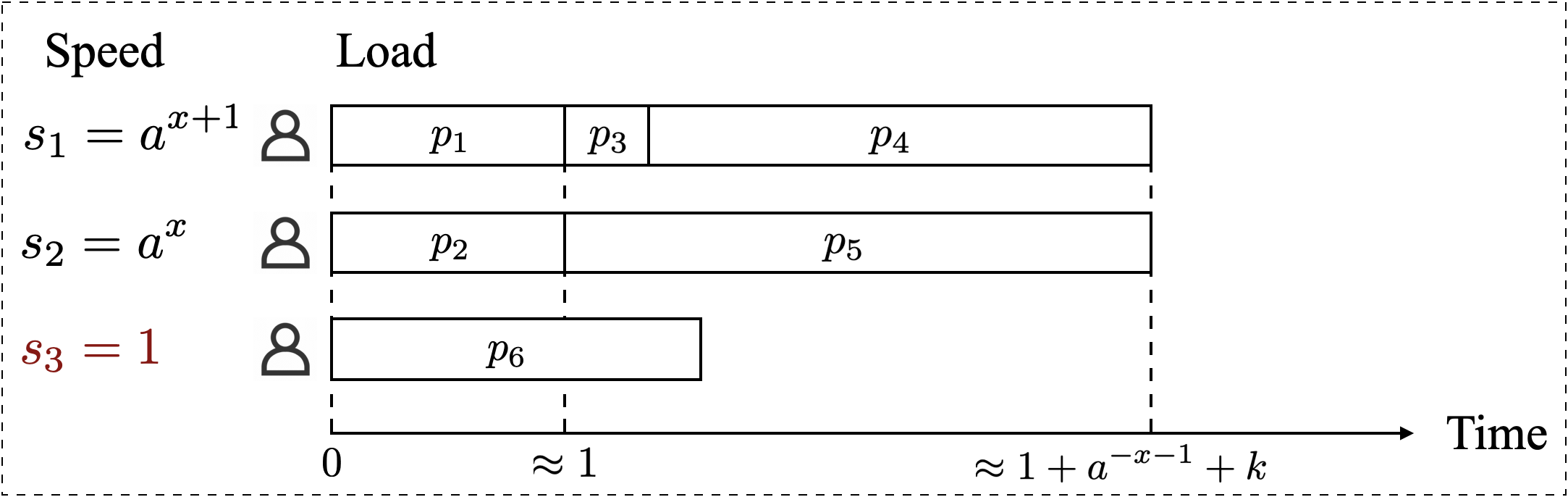}
		\caption{Before Speeding Up}
\label{fig:wb-ce1}
	\end{subfigure}
 \begin{subfigure}{0.49\linewidth}
		\includegraphics[width=\textwidth]{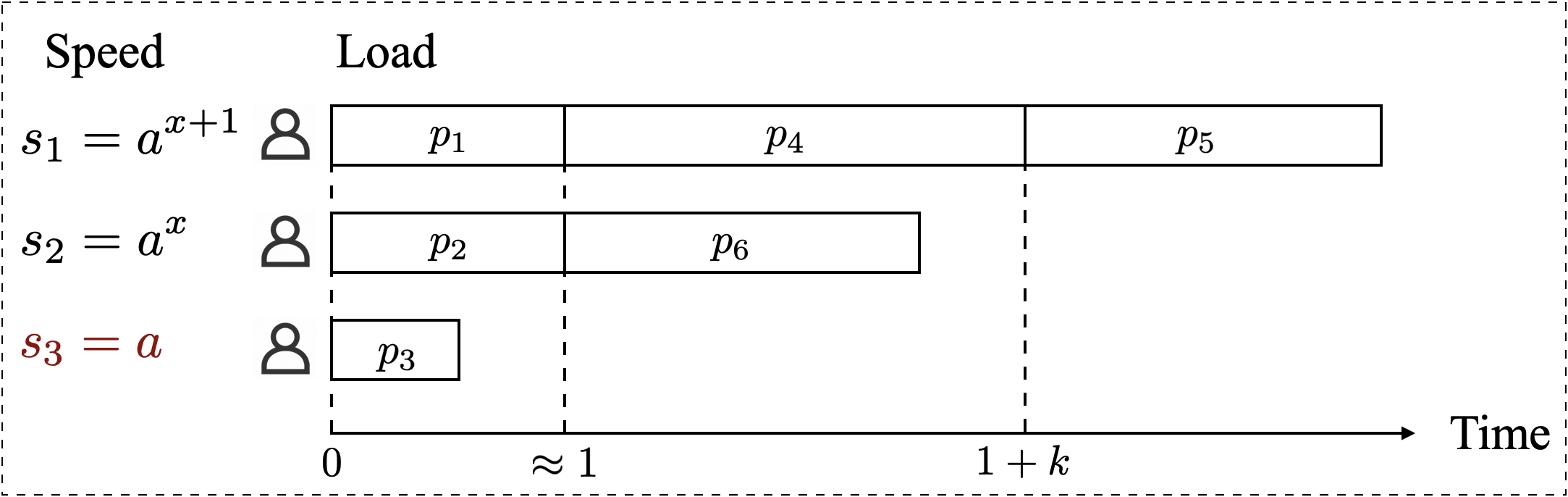}
		\caption{After Speeding Up}
\label{fig:wb-ce2}
	\end{subfigure}
 \caption{A counter-example shows that ~\citet{DBLP:journals/scheduling/LiLW23}'s mechanism is not truthful.}
\end{figure}

\subsection{Water-Filling Algorithm}
We observe that the key reason that makes \citet{DBLP:journals/scheduling/LiLW23}'s algorithm not truthful in the hard instance is that some jobs only break the well-behavior property a little but make the schedule totally different. (e.g., $p_5$ in the example, it is very large but totally moved from machine $2$ to machine $1$.) It will be circumvented when we consider a fractional algorithm, and only maintaining the fractionally well-behavior property. (e.g., we still schedule some part of $p_5$ on machine $2$.) After that, we can avoid such an intense difference between the two schedules because of the schedule of only one job. This idea is a key component of our two-sided truthful mechanism in \Cref{alg:makespan}. 

The question is, after using a fractional algorithm, is the fractionally well-behavior property sufficient for machine-side truthfulness? Unfortunately, the answer is "no".
We found that many natural fractionally well-behaved algorithms we have tried do not have machine-side monotonicity.

Consider the following \Cref{alg:wtf}. The algorithm also has a doubling procedure with a guessed \opt ($\Lambda$), and also a speed-size-feasible\footnote{We can remove all machines with speed at most $s_1/m$ first to promise the second property of speed-size-feasible, for convenience of presenting the counter-example, we do not include this operation here.} algorithm as \Cref{alg:makespan}. Thus, it also promises a nice competitive ratio after independent rounding. 
When each online job arrives, the allocation follows the idea of water-filling, where the water level of each machine is its makespan. We continuously match some portion of $p_j$ to the feasible machine with the lowest makespan, where feasible means the machine's speed satisfies the speed constraint w.r.t. $p_j$ and $\Lambda$, i.e., $s_i\Lambda\geq p_j$. The algorithm is fractionally well-behaved and seems to satisfy machine-side monotonicity because when we speed up a machine, it can join more job's water-filling procedures. However, we will give a hard instance to show it does not have monotonicity. The main reason is that we have a doubling procedure; when $\Lambda$ becomes different for the same job, the monotonicity may not hold. Compared to the proportional algorithm in \Cref{alg:makespan}, the schedule of jobs in \Cref{alg:wtf} may not be as stable as that in \Cref{alg:makespan}, which makes us lose the machine-side monotonicity.

\begin{algorithm}[htbp]
    \caption{Water-Filling}
    \label{alg:wtf}
    \KwIn{The reported machine speeds $\{s_i\}_{i\in [m]}$, the reported job sizes $\{p_j\}_{j\in [n]}$ which shows up online, and the parameter $q \geq 1$. }
    \KwOut{ A fractional online allocation $\{x_{ij}\}_{i\in [m],j\in [n]}$ }
    sort $s_i$ in descending order \;
    Assign $p_1$ equally on machines whose speeds are equal to $s_1$\;
    $\Lambda \gets \frac{p_1}{s_1}$  \tcp*{initialize a guessed optimal objective}
    \For{each arriving job $p_j$ }
    {
    $x_j \gets 0$\;
    Initialize $x_{ij} \gets 0$ for each $i\in [m]$\;
    \While{$x_j< 1$}{
    \If{$s_1 \cdot \Lambda< p_j$}{Keep doubling $\Lambda$ until $s_1 \cdot \Lambda\geq p_j$\;}
    $v \gets \arg \min_{i}\{\sum_{j'<j}\frac{x_{ij'}\cdot p_{j'}}{s_i}\mid s_i\Lambda\geq p_j,~ i\in [m]\}$\;
    $x_{vj} \gets x_{vj} + \mathrm{d}x$ \tcp*{$\mathrm{d}x$ is a small enough portion}
    $x_j \gets x_j + \mathrm{d}x$\;
    \If{$\sum_{j'<j}\frac{x_{vj'}\cdot p_{j'}}{s_v}=\Lambda$}{
    $\Lambda=2\cdot \Lambda$\;
    }
    }
    }
    \Return{$\{x_i\}_{i\in [m]}$.}
\end{algorithm}

We have five machines in the hard instance.\footnote{If we have the operation of removing machines, we need to create some dummy machines to be removed.}
$$
s_1=64,~s_2=32,~s_3=16,~s_4=4,~s_5=2.
$$ 
We have six jobs arriving in the following order.
$$
p_1=64,~p_2=\epsilon,~p_3=32-\frac{32}{54}\epsilon+\delta,~p_4=p_5=8-\frac{9}{56}\epsilon,~p_6=\frac{8}{5}-\frac{9}{5\cdot 56}\epsilon.
$$ 

Here, $\delta$ and $\epsilon$ are both small positives such that $0<\delta\ll\epsilon$. 

The mechanism works as follows. (Refer to the allocation in \cref{fig:wtf-ce1}.) When the first job $p_1$ comes, the mechanism allocates it on machine $1$ and initializes $\Lambda=1$. For the second job $p_2$, all the machines are feasible. As a result, it's allocated on machines $2$ to $5$ proportionally and raises the water level to $\epsilon\cdot 1/54$. For job $p_3$, only machine $1$ and machine $2$ are feasible, so it will first raise machine $2$'s water level to $1$, then find that both machine $1$ and machine $2$ are full but still $\delta$ fraction unassigned. Therefore, the mechanism will double $\Lambda$ to $2$, and assign $\delta$ to feasible machines $1$, $2$, and $3$. However, it's so small that we ignored it when we analyzed the water level below. The jobs $p_4$ and $p_5$ come, they are feasible for machines $1$ to $4$. Therefore, they will be assigned to machines $3$ and $4$ proportionally, raising their water level to about $4/5$. Lastly, the job $p_6$ comes. It is feasible on all the machines, so it will be allocated on machine 5 and raise the water level to about $4/5$, which is the same as that of machines $3$ and $4$.

Now, considering that machine $5$ speeds up to $4$, the mechanism works as follows. (Refer to the allocation in \cref{fig:wtf-ce2}.) The jobs $p_1$ and $p_2$ are assigned following the same strategy, but it should be noted that the water levels of machines $2$ to $4$ are only raised to $\epsilon\cdot 1/56$. Therefore, when job $p_3$ comes, it will raise machine $2$'s water level to almost $1$ but doesn't reach $1$, as a result of which, it does not cause $\Lambda$ double. Therefore, for the jobs $p_4$ and $p_5$, only machines $1$ to $3$ are feasible. They are allocated on machine $3$ to raise the water level to about $1$, but also, do not reach $1$ and do not cause $\Lambda$ double. Then, when $p_6$ comes, it will be allocated on machines $4$ and $5$ proportionally, raising their water level to $1/5$.

Finally, we observed that the load of machine $5$ is not monotone when it speeds up. The load before is about $\frac{8}{5}$, and the load after is about $\frac{4}{5}$. It contradicts the machine-side monotonicity, thus not machine-side implementable by \Cref{lem:machine-side-implementable}.
The main difference during the process between the before speeding up and after speeding up setting is that when $p_3$ comes, it will cause $\Lambda$ double before $s_5$ speeding up while it won't after $s_5$ speeding up. As a result, after $s_5$ speeds up, machine $4$ can not get jobs $p_4$ and $p_5$, due to which it takes away parts of $p_6$ from machine $5$. Therefore, fewer loads are allocated to machine $5$ after it speeds up. Notice that all the machines' speed is a power of $2$, so there is no difference whether we introduce a round-down procedure as \Cref{alg:makespan}.

\begin{figure}[htbp]
    \centering
    \begin{subfigure}{0.49\linewidth}
		\includegraphics[width=\textwidth]{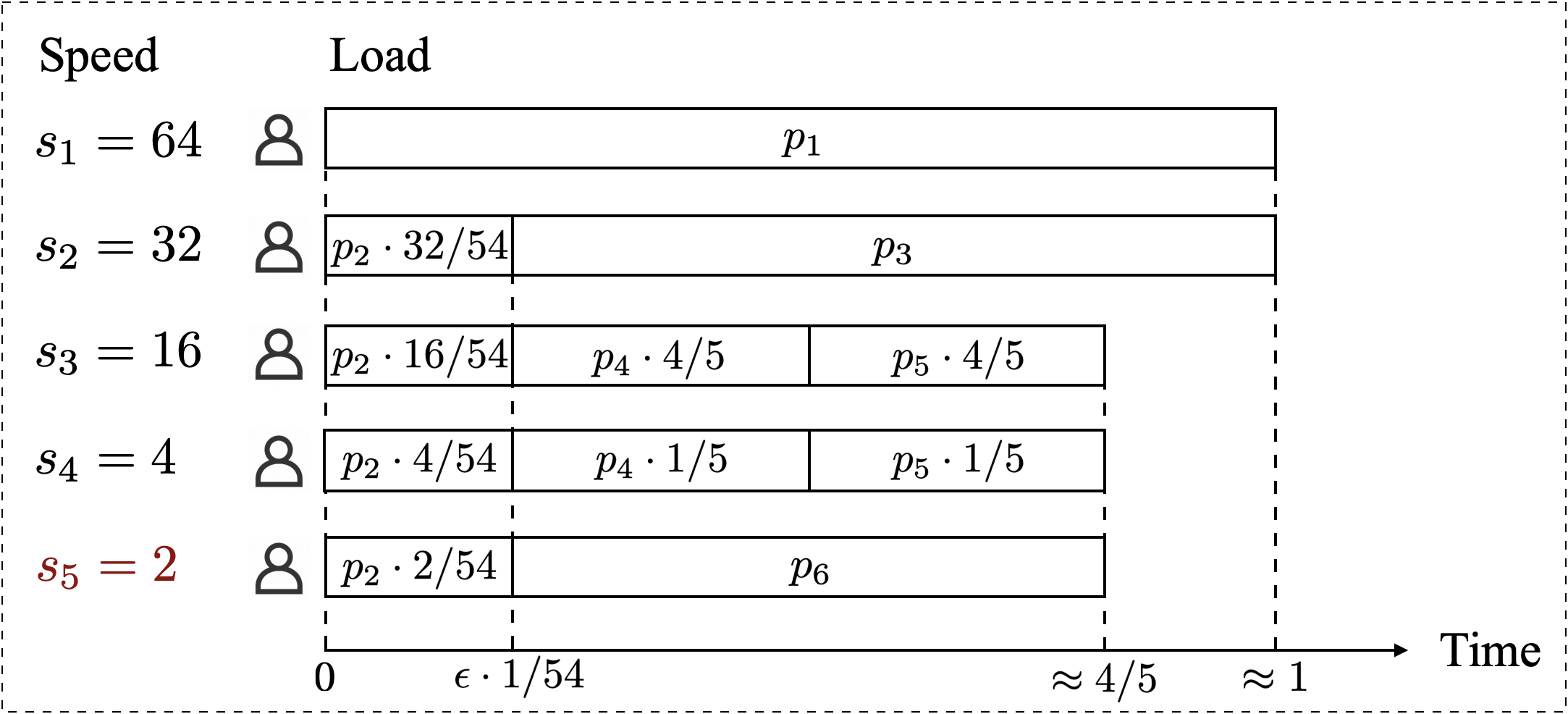}
		\caption{Before Speeding Up}
  \label{fig:wtf-ce1}
	\end{subfigure}
 \begin{subfigure}{0.49\linewidth}
		\includegraphics[width=\textwidth]{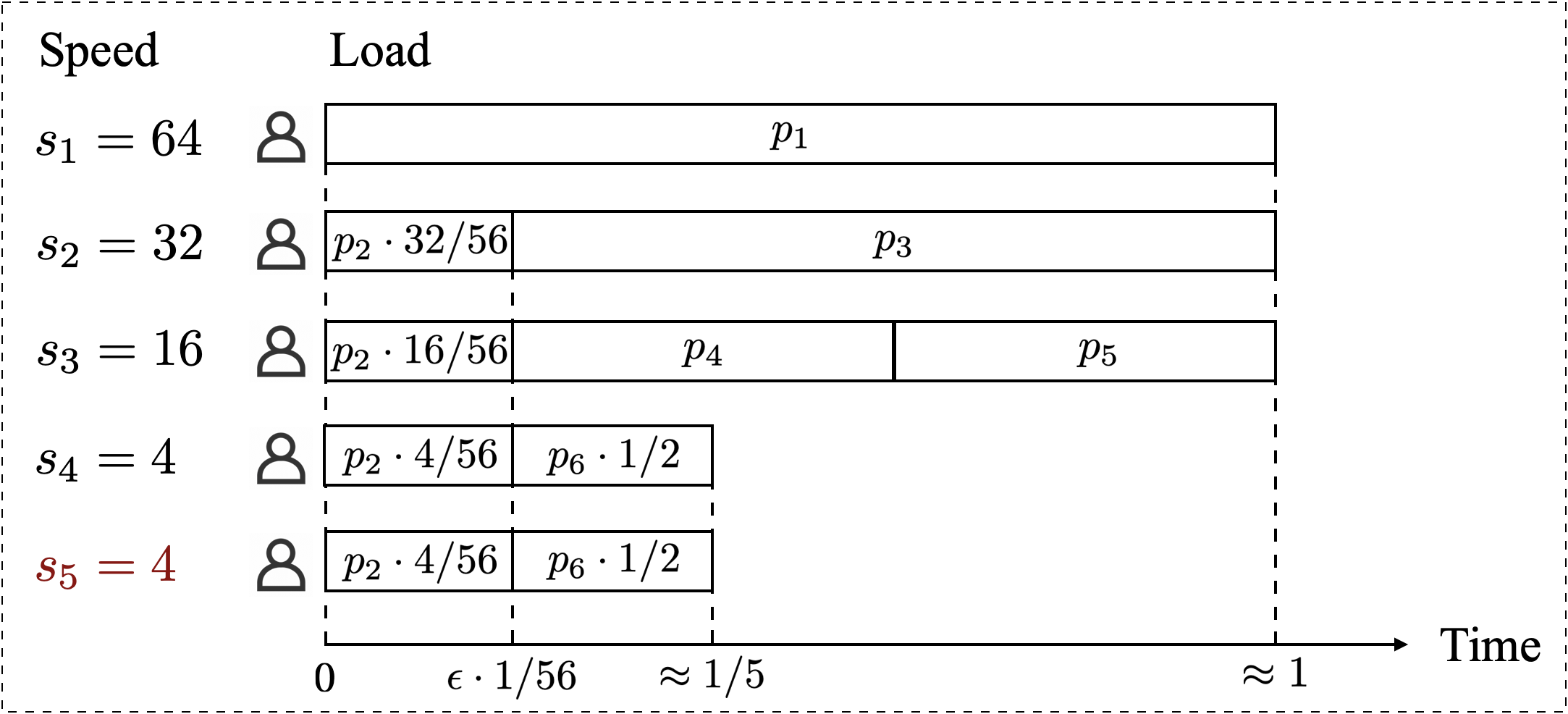}
		\caption{After Speeding Up}
  \label{fig:wtf-ce2}
	\end{subfigure}
 \caption{A counter-example shows that ~\cref{alg:wtf} is not truthful.}

\end{figure}

\section{Proof of \Cref{lem:job-side-implementable}}\label{sec:job-truth}


Consider a randomized algorithm; for each arriving job $j$, let $x_{ij}(p)$ be the probability that mechanism $\cA$ assigns it to machine $i$ when the job reports size $p$. We use a randomized mechanism $\cA$, which charges a fee to a scheduled job by a specific payment function $Q_j(p)$, to implement the randomized algorithm. We have the following lemma, which implies \Cref{lem:job-side-implementable}.


\begin{lemma}\label{lem:job_truth}
The mechanism $\cA$ is truthful on job side if and only if the following two properties are satisfied:
\begin{itemize}
    \item Job-Side-Monotone: For each job $j$, the expected unit processing time $ \sum_{i\in [m]}x_{ij}(p)\cdot \frac{1}{s_i}$ is non-increasing as $p$ increases.
    \item For each job $j$, given any current makespan $C_i^{(j)}$ on machine $i$, the expected fee is given by a payment function $Q_j(p) = Q_j(0) - \sum_{i\in [m]} \left(C_i^{(j)}\cdot \left(x_{ij}(p)-x_{ij} (0)\right) + \frac{1}{s_i} \cdot \left(p \cdot x_{ij}(p) - \int_{0}^{p}x_{ij}(t) \mathrm{d}t \right)\right) $, where $p$ is its submitted size. \footnote{The reason we need $Q_j(0)$ is to guarantee every $Q_j(p)\geq 0$ if it is required.}
\end{itemize}
\end{lemma}

\begin{proof}
First, we prove that the two properties above are necessary conditions of job-side truthfulness. Consider a job $j$ whose true job size is $t$. The expected cost of job $j$ submitting its size $p$ is 
$$
    C(p)=\sum_{i\in [m]} x_{ij}(p)[C_i^{(j)}+t\cdot\frac{1}{s_i}]+Q_j(p)~.
$$
$C(p)$ is minimum at $p=t$ for all $t$ only if the first derivative of $C(p)$ at $p=t$ is zero and the second derivative of $C(p)$ at $p=t$ is non-negative. According to the first derivative property, 
\begin{align*}
    \left.\left[\sum_{i\in [m]} \frac{\mathrm{d}x_{ij}(p)}{\mathrm{d}p}[C_i^{(j)}+t\cdot\frac{1}{s_i}]+\frac{\mathrm{d}Q_j(p)}{\mathrm{d}p}\right]\right|_{p=t}=0,\quad \forall t~.
\end{align*}
Integrating each part, we have 
\begin{align}\label{eq:job-mon2}
    Q_j(p) = Q_j(0) - \sum_{i\in [m]} \left(C_i^{(j)}\cdot \left(x_{ij}(p)-x_{ij} (0)\right) + \frac{1}{s_i} \cdot \left(p \cdot x_{ij}(p) - \int_{0}^{p}x_{ij}(t) \mathrm{d}t \right)\right)
\end{align}
According to the second derivative property, 
\begin{align*}
    \left.\left[\sum_{i\in [m]} \frac{\mathrm{d}^2x_{ij}(p)}{\mathrm{d}p^2}[C_i^{(j)}+t\cdot\frac{1}{s_i}]+\frac{\mathrm{d}^2Q_j(p)}{\mathrm{d}p^2}\right]\right|_{p=t}\geq 0,\quad \forall t~.
\end{align*}
Substituting \cref{eq:job-mon2} into the above equation, we obtain
\begin{align}\label{eq:job-mon3}
    \sum_{i\in [m]}x'_{ij}(p)\cdot \frac{1}{s_i}\leq 0,\quad \forall p
\end{align}
which means $\sum_{i\in [m]}x_{ij}(p)\cdot \frac{1}{s_i}$ is non-increasing as $p$ increases. It should be noted that $\sum_{i\in [m]}x_{ij}(p)\cdot \frac{1}{s_i}$ means the average processing time a job submitted size $p$ has.
This equation shows the average processing time should not increase if a job submits a larger size.

Therefore, the mechanism is job-side truthful only if the pricing function follows \cref{eq:job-mon2} and the allocation probability function $x_{ij}$ satisfies \cref{eq:job-mon3}.

Next, we prove that if \cref{eq:job-mon3} is satisfied, the mechanism is job-side truthful with the pricing strategy as shown in \cref{eq:job-mon2}.
Consider any $a,b>0$, and without loss of generality, assume that $a<b$. If the job's true size is $a$ and submit the size $b$, it will pay 
$$\int_{a}^{b}\sum_{i\in [m]}x_{ij}(u)\cdot \frac{1}{s_i}\mathrm{d}u-\sum_{i\in [m]}x_{ij}(b)\cdot \frac{1}{s_i}\cdot[b-a]$$ more than submitting its true size $t$ honestly, which is non-negative when $\sum_{i}x_{ij}(p)\cdot \frac{1}{s_i}$ is non-increasing as $p$ increases.
Symmetrically, if the job's true size is $b$ and submit the size $a$, it will pay 
$$\sum_{i\in [m]}x_{ij}(a)\cdot \frac{1}{s_i}\cdot[b-a]-\int_{a}^{b}\sum_{i\in [m]}x_{ij}(u)\cdot \frac{1}{s_i}\mathrm{d}u$$ more than submitting its true size $t$ honestly, which is also non-negative when $\sum_{i\in [m]}x_{ij}(p)\cdot \frac{1}{s_i}$ is non-increasing as $p$ increases.

In conclusion, the two properties in \cref{lem:job_truth} are both sufficient and necessary conditions for job-side truthful mechanism design.
Additionally, if $Q_j(p) \geq 0$ is required, we need to set a large enough $Q_j(0)$ unrelated to $p$. 
Since $\sum_{i\in [m]}x_{ij}(p)\cdot \frac{1}{s_i}$ is non-increasing, $$p\cdot\sum_{i\in [m]}  \frac{1}{s_i}\cdot x_{ij}(p) \leq \int_{0}^{p}\sum_{i\in [m]}  \frac{1}{s_i}\cdot x_{ij}(t) \mathrm{d}t~.$$ By setting $Q_j(0)=\sum_{i\in [m]} C_i^{(j)}$, we have $$Q_j(p)\geq  Q_j(0) - \sum_{i\in [m]} C_i^{(j)}\cdot \left(x_{ij}(p)-x_{ij} (0)\right) \geq  Q_j(0) - \sum_{i\in [m]} C_i^{(j)}= 0~.$$
\end{proof}


\section{Discussion of Our Doubling Operations}
\subsection{Counterexample for Allocate-Before-Doubling}\label{sec:abd}
In this section, we show that if we allocate jobs after doubling the $\Lambda$ value, there exists a counterexample breaking job-side monotonicity.
Assume that there are $m=8$ machines submitting their speeds as  $s_1=8$, $s_2=4$, and $s_3=s_4=\ldots=s_8=2$. We have $K=\lfloor \log m\rfloor +1=4$ groups of machines, where $\cM_1=\{1\},\cM_2=\{2\},\cM_3=\{3,4,\ldots,8\},\cM_4=\emptyset$.

When the first job $p_1=8$ arrives,  we place it on machine $1$ and set $\Lambda=1$. Then three jobs with size $3$ arrive. Since $3\in ( r_3\cdot \Lambda, r_2\cdot \Lambda]=(2,4]$, these jobs' levels are $2$. Then, the mechanism will place these jobs on machine $1$ and $2$ proportionally, resulting in $C_{1,2}=C_{2,2}=3/4$. 

Now we discuss the fifth job whose real size is $3$. If it reports its size truthfully, it will be a level $2$ job, and the unit processing time is $1/6$. However, if it misreports a larger size of $3+\epsilon$, the mechanism will double $\Lambda$ to $2$ first and allocate it then. Notice that after $\Lambda$'s doubling, $3+\epsilon \in ( r_4\cdot \Lambda, r_3\cdot \Lambda]=(2,4]$. Therefore, this job's level becomes 3.
This breaks the job-side monotonicity, as its unit processing time increases to $1/3$.

\subsection{Counterexample for Double-Without-The-Last}\label{sec:dwtl}
In this section, we show that if we double $\Lambda$ with the last level, the $\Lambda$-stability (refer to \cref{lem:makespan:lambda_stable}) breaks.
Consider $m=18$ machines where $\bs_1=16$, $\bs_2=4$, $\bs_3=\ldots=\bs_{18}=2$ ($16$ machines with speed $2$). We will have $K=\lfloor\log m\rfloor+1=5$ groups of machines, where $\cM_1=\{1\},\cM_2=\emptyset,\cM_3=\{2\},\cM_4=\{3,\ldots,18\}, \cM_5=\emptyset$ with group speed $r_1=16,r_2=8,r_3=4,r_4=2,r_5=1$.


In this example, we show that when $\bs_2$ speeds up to $8$, $\Lambda$ can be four times larger than $\Lambda'$ at some time point, which breaks the $\Lambda$-stability. 
For analysis convenience, we show the doubling process in a \emph{phase-budget} form. That is, the algorithm is viewed as several phases with different values of $\Lambda$. In each phase, there are $K=\lfloor \log m \rfloor+1$ empty bins, each corresponding to a job level. The jobs in level $k$ budget are allocated proportionally to the machines in $\cpM{k}$. The budget of level $k$ bin is calculated as $B_k = \Lambda\cdot\sum_{i\in \cpM{k}}\bar{s}_i$. For instance, the budget of level $4$ bin when $\Lambda=1$ is $B_4=\Lambda\cdot(s_1+s_2+s_3+\ldots+s_{18})=52$. The $\Lambda$ doubles right after the bin of some level is overfilled, including level $K$.
\begin{figure}[htbp]
    \centering
    \begin{subfigure}{0.49\textwidth}
        \centering
        \includegraphics[width=\linewidth]{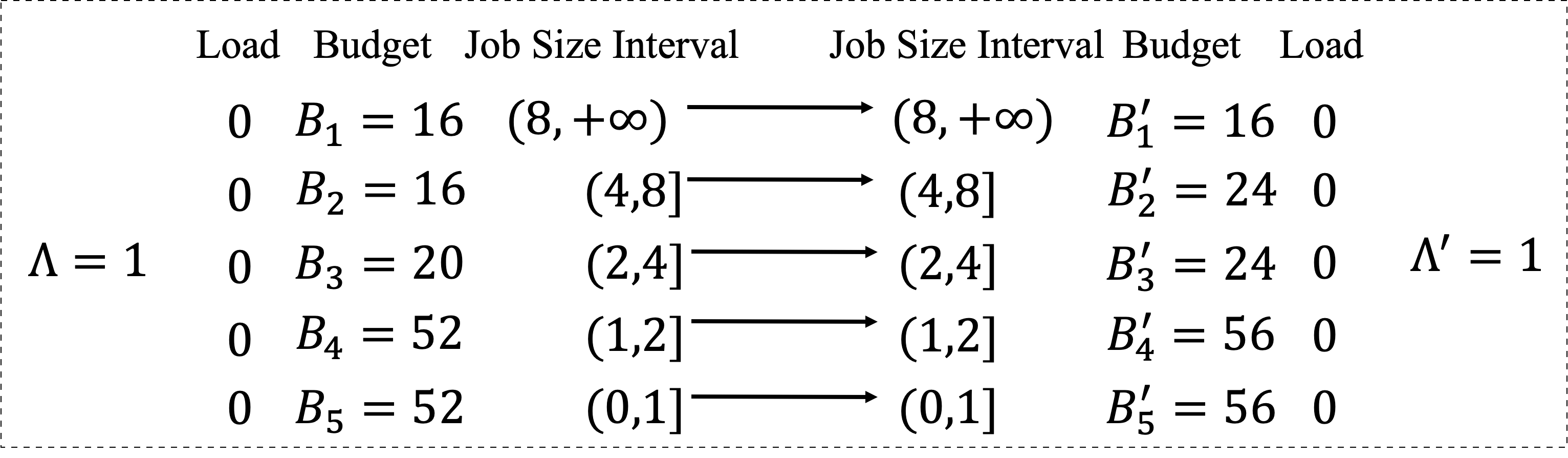}
        \caption{Initial configuration.}
        \label{fig:LS1}
    \end{subfigure}
    \begin{subfigure}{0.49\textwidth}
        \centering
        \includegraphics[width=\linewidth]{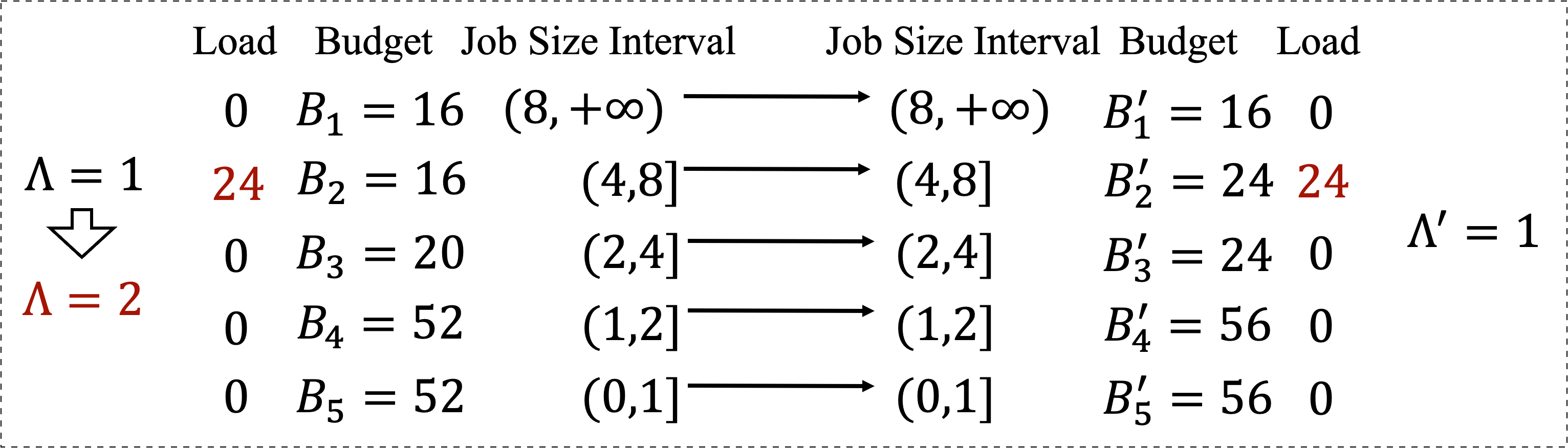}
        \caption{Allocating three job with size $8$. }
        \label{fig:LS2}
    \end{subfigure}
    \begin{subfigure}{0.49\textwidth}
        \centering
        \includegraphics[width=\linewidth]{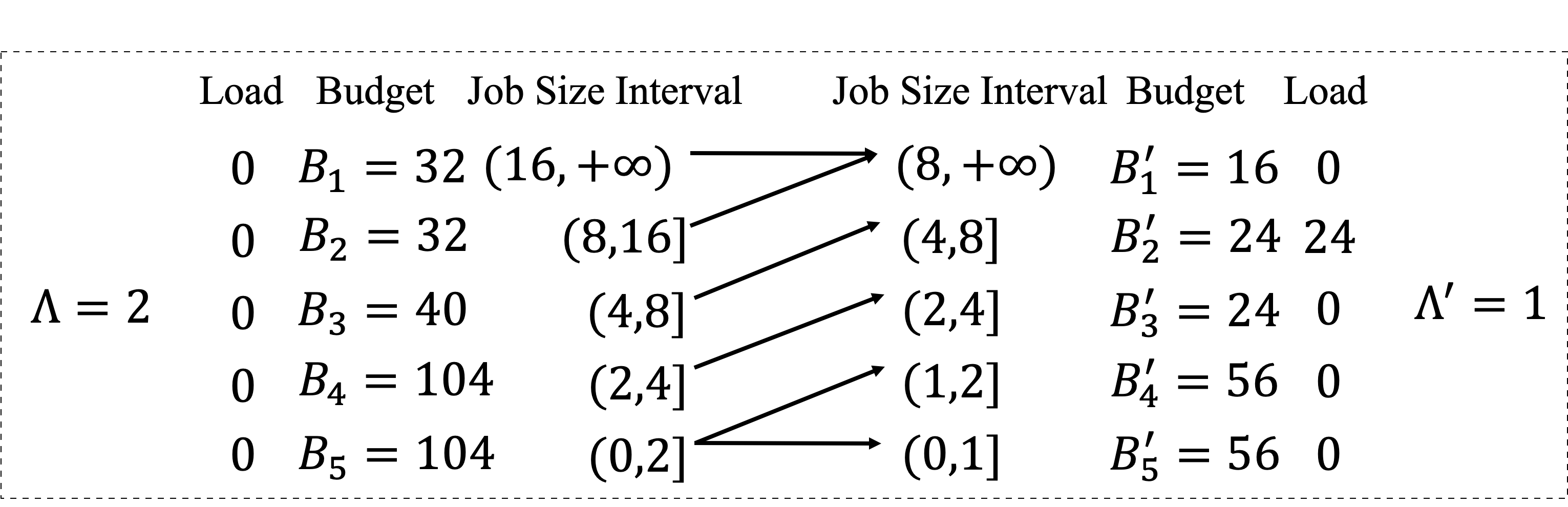}
        \caption{New configuration after \ref{fig:LS2}.}
        \label{fig:LS3}
    \end{subfigure}
    \begin{subfigure}{0.49\textwidth}
        \centering
        \includegraphics[width=\linewidth]{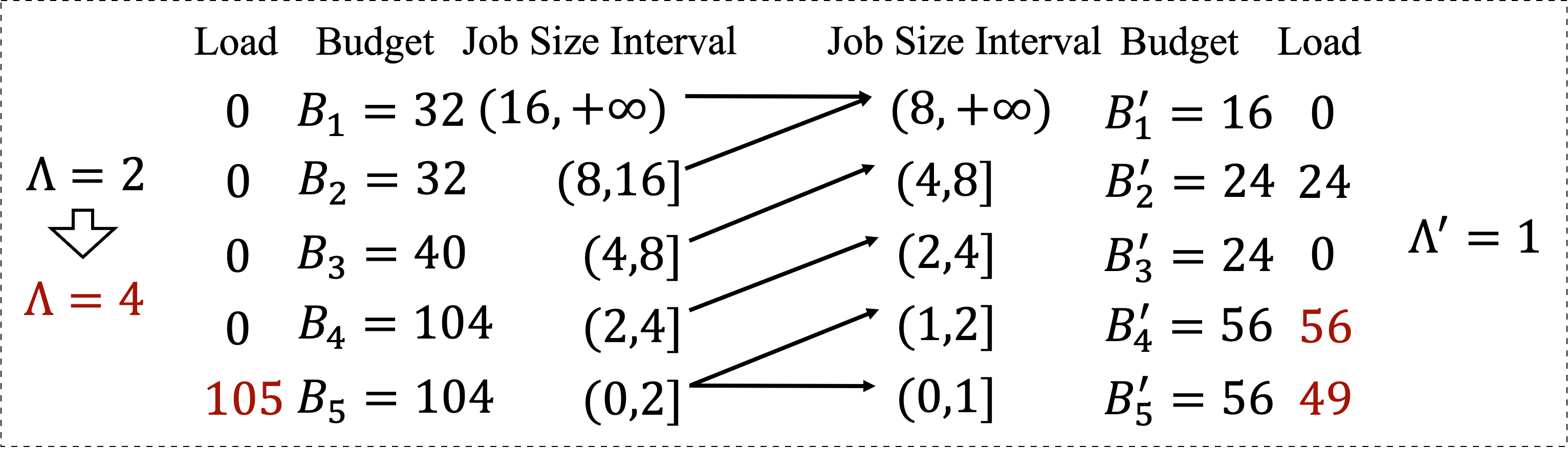}
        \caption{Allocating $28+49$ jobs with size $1$ or $2$.}
        \label{fig:LS4}
    \end{subfigure}
    \caption{After $3$ jobs with size $8$ arrive, the change of loads is shown in \cref{fig:LS2}. The new configuration is shown in \Cref{fig:LS3}. Then $28$ jobs with size $2$ and $49$ jobs with size $1$ arrive, resulting in the change shown in \cref{fig:LS4}.}
\end{figure}

The process initializes with the arrival of the first job with size $16$, where both $\Lambda$ and $\Lambda'$ are set to $1$. (Refer to the initial configuration in \Cref{fig:LS1}). Then, three jobs with size $8$ arrive. The change is shown in \cref{fig:LS2}. All of them are level $2$ jobs in both worlds. Observe that their total load is $3\times 8=24$. In the original world, they overfill $B_2=16$, while in the speed-up world, they do not overfill $B_2'=24$. Therefore, the algorithm will double $\Lambda$ but not $\Lambda'$.

After $\Lambda$ is doubled to $2$, the budget of each level's bin and the job size interval will change correspondingly in the original world; additionally, the load (or makespan) of each level will be reset to zero. The configuration after doubling is shown in \cref{fig:LS3}.


Then $28$ jobs with size $2$ and $49$ jobs with size $1$ arrive. The change is shown in \cref{fig:LS4}. In the speed-up world, since the level of jobs with size $2$ is $4$ ($2\in (r_5'\cdot\Lambda',r_4'\cdot\Lambda']=(1,2]$), the load in level $4$ becomes $28\cdot 2=56$, which doesn't overfill $B_4'=56$. Similarly, we have the level of job with size $1$ is $5$, and the load in level $5$ is $49\cdot 1=49$, which also does not overfill $B_5'=56$. Therefore, $\Lambda'$ remains to be $1$. 

However, in the original world, since all of the $28+49$ jobs (with size $2$ or $1$) are level $5$ jobs ($1,2\in (0,r_5\cdot\Lambda]=(0,2]$), the load in level $5$ becomes $28\cdot 2+49\cdot 1=105$ after all these jobs are allocated, which overfill $B_5=104$. Therefore, $\Lambda=2$ will double to $4$. We have $\Lambda'=\frac{1}{4}\Lambda$, which breaks the $\Lambda$-stability.

\end{document}